\def\marginpar#1{\ignorespaces}
\newtheorem{theorem}{Theorem}[section]
\newtheorem{lemma}[theorem]{Lemma}
\newtheorem{proposition}[theorem]{Proposition}
\newtheorem{remark}[theorem]{Remark}
\newtheorem{assumption}[theorem]{Assumption}
\numberwithin{equation}{section}
\begin{document}
\title[Optimal Decisions for Liquid Staking: Allocation and Exit Timing]{Optimal Decisions for Liquid Staking: \\ Allocation and Exit Timing}

\author[Ruofei Ma, Zhebiao Cai, Wenpin Tang, David Yao]{{Ruofei} Ma, {Zhebiao} Cai, {Wenpin} Tang, {David} Yao}
\address{Department of Industrial Engineering and Operations Research, Columbia University. 
} 
\email{rm3881@columbia.edu, zc2801@columbia.edu, wt2319@columbia.edu, ddy1@columbia.edu}

\date{} 

\begin{abstract}
In this paper, we study an investor's optimal entry and exit decisions in a liquid staking protocol (LSP) and an automated market maker (AMM), primarily from the standpoint of the investor. Our analysis focuses on two key investor actions: the initial allocation decision at time $t=0$, and the optimal timing of exit. First, we derive an optimal allocation strategy that enables the investor to distribute risk across the LSP, AMM, and direct holding. Our results also offer insights for LSP and AMM designers, identifying the necessary and sufficient conditions under which the investor is incentivized to stake through an LSP, and further, to provide liquidity in addition to staking. These conditions include a lower bound on the transaction fee, for which we propose a fee mechanism that attains the bound. Second, given a fixed protocol design, we model the optimal exit timing of an individual investor using Laplace transforms and free-boundary techniques. We analyze scenarios with and without transaction fees. In the absence of fees, we decompose the investor's payoff into impermanent loss and opportunity cost, and provide theoretical results characterizing the investor's payoff and the optimal exit threshold. With transaction fees, we conduct numerical analyses to examine how fee accumulation influences exit strategies. Our results reveal that in both settings, a stop-loss strategy often maximizes the investor's expected payoff, driven by opportunity gains and the accumulation of fees where fees are present. Our analyses rely on various tools from stochastic processes and control theory, as well as convex optimization and analysis. We further support our theoretical insights with numerical experiments and explore additional properties of the investor's value function and optimal behavior.
\end{abstract}

\maketitle

\section{Introduction}
\quad   The rapid growth of decentralized finance (DeFi) and the transition of Ethereum to Proof-of-Stake (PoS) have spurred new financial innovations. A prominent example is the rise of liquid staking protocols (LSPs) – protocols that allow users to stake their ETH and receive derivative tokens, which are usually referred to as liquid staking tokens (LSTs), that remain liquid and usable in other applications \citep{xiong2025}. These LSTs represent claims on staked ETH (including accumulated rewards) and can be freely traded or deployed in DeFi \citep{scharnowski2025, xiong2025}. Notably, liquid staking builds upon Ethereum’s Proof-of-Stake (PoS) design, where validators are selected to propose and validate blocks based on the amount of ETH they stake. This contrasts with Bitcoin’s Proof-of-Work (PoW) system, in which block validation rights are determined by computational power. By enabling staking and generating yield-bearing tokens, liquid staking protocols offer a mechanism for users to participate in network security while maintaining liquidity. Common use cases of LSTs include lending on protocols like Aave \citep{heimbach2023defilendingmerge}, restaking in actively validated services such as EigenLayer \citep{team2024eigenlayer}, or providing liquidity in automated market makers (AMMs) \citep{gogol2024, xiong2024}. This paper focuses on the third use case of providing liquidity in AMMs using LSTs. Figure \ref{fig: 1} illustrates the various ways in which investors can utilize Liquid Staking Tokens (LSTs) across decentralized finance (DeFi) applications.



\begin{figure}[h] 
\centering
\begin{tikzpicture}[
  node distance=2cm and 1.6cm,
  box/.style={minimum width=1.5cm, minimum height=1cm, align=center},
  >=Stealth
]

\node[inner sep=5pt, xshift=-0.5em] (lsp)  {\includegraphics[width=1.2cm]{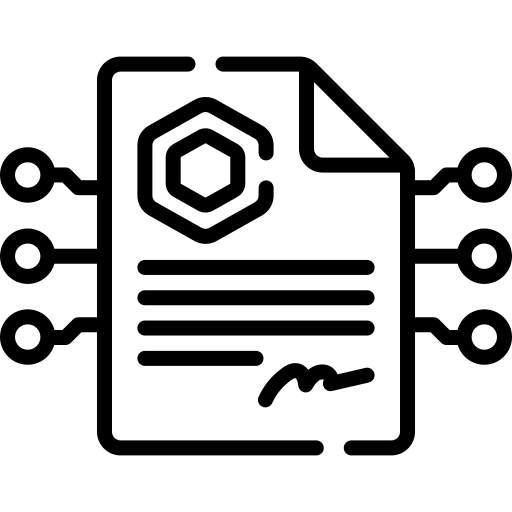}};

\node[align=center] at (lsp.south) [yshift=-1.5em] {\shortstack{Liquid Staking\\Protocol}};

\node[inner sep=5pt,xshift = 0.5em] (investor) [right=of lsp] {\includegraphics[width=1.2cm]{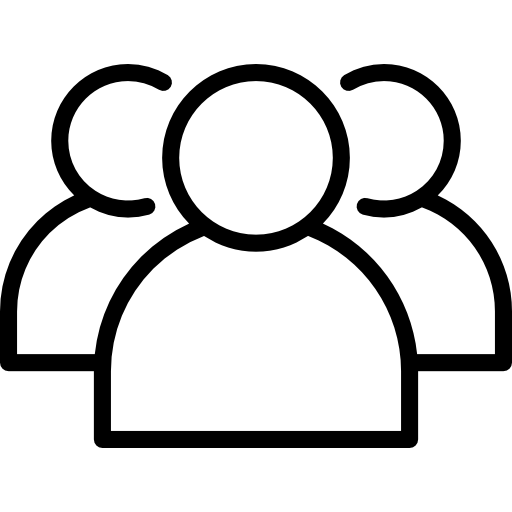}};

\node[align=center] at (investor.south) [yshift=-0.9em] {Investors};

\node[box, right=of investor, yshift=1.8cm, xshift = 1em] (lending) {
  \begin{tikzpicture}[baseline=(text.base)]
    \node (icon) at (0,0) {\includegraphics[height=1cm]{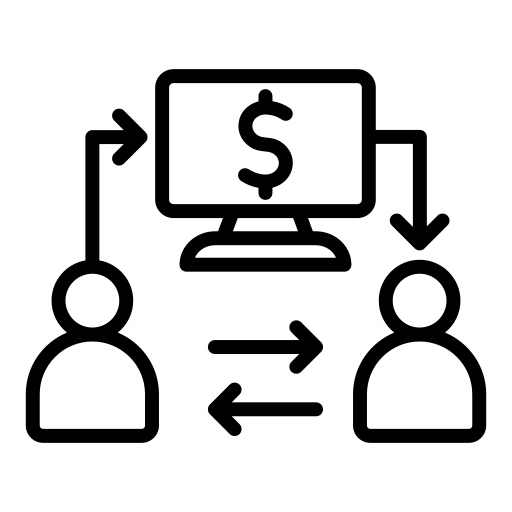}};
    \node[right=0.5em of icon, align=left] (text) {{\footnotesize Lending and Borrowing}};
  \end{tikzpicture}
};

\node[box, right=of investor, yshift=0.6cm, xshift = 1em] (liquidity) {
  \begin{tikzpicture}[baseline=(text.base)]
    \node (icon) at (0,0) {\includegraphics[height=1cm]{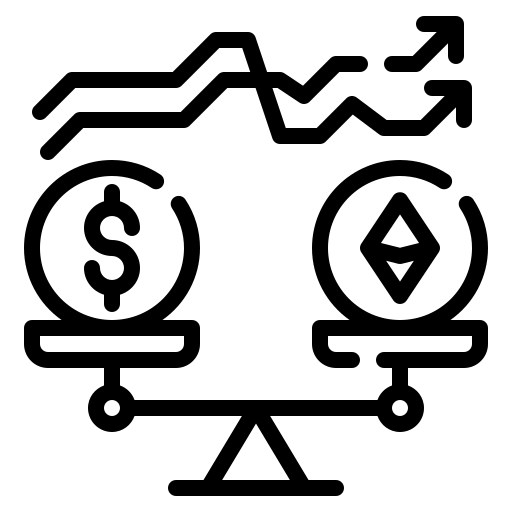}};
    \node[right=0.5em of icon, align=left] (text) {\textbf{Liquidity Provision}};
  \end{tikzpicture}
};

\node[box, right=of investor, yshift=-0.6cm, xshift = 1em] (trading) {
  \begin{tikzpicture}[baseline=(text.base)]
    \node (icon) at (0,0) {\includegraphics[height=1cm]{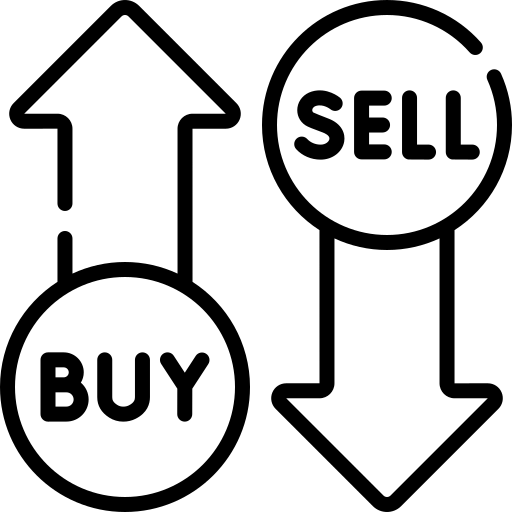}};
    \node[right=0.5em of icon, align=left] (text) {{\footnotesize Trading}};
  \end{tikzpicture}
};

\node[box, right=of investor, yshift=-1.8cm, xshift = 1em] (restaking) {
  \begin{tikzpicture}[baseline=(text.base)]
    \node (icon) at (0,0) {\includegraphics[height=1cm]{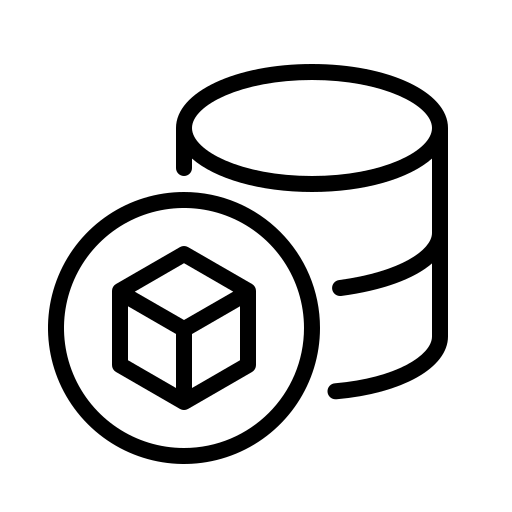}};
    \node[right=0.5em of icon, align=left] (text) {{\footnotesize Restaking}};
  \end{tikzpicture}
};

\begin{scope}[on background layer]
  \node[
    draw,
    dashed,
    rounded corners,
    inner sep=-3.5pt,
    fit=(lending)(liquidity)(trading)(restaking),
    label=above:{\textbf{Examples of LST Use Cases}}
  ] (servicebox) {};
\end{scope}

\draw[->] ([yshift=0.5em]lsp.east) -- ([yshift=0.5em]investor.west)
  node[midway, yshift=1em, xshift=-1em] {\includegraphics[width=0.5cm]{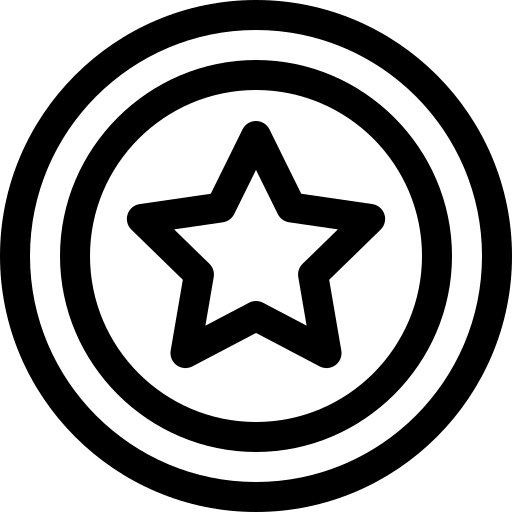}} 
  node[midway, yshift=1em,xshift = 1em] {LST};

\draw[->] ([yshift=-0.5em]investor.west) -- ([yshift=-0.5em]lsp.east)
  node[midway, yshift=-1em, xshift=-1em] {\includegraphics[width=0.5cm]{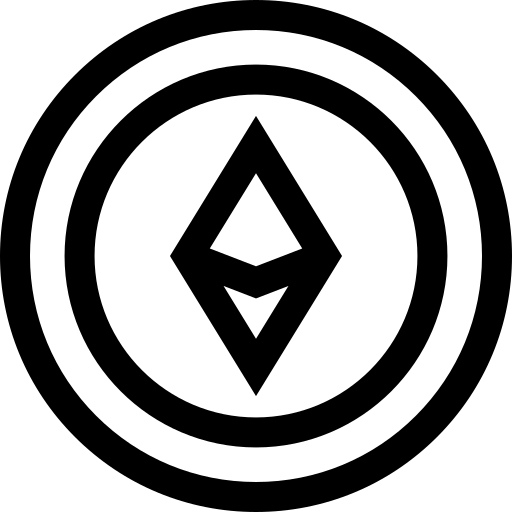}} 
  node[midway, yshift=-1em,xshift = 1em] {ETH};

\draw[->] ([yshift=0em]investor.east) -- ([yshift=0em, xshift=-0.5em]servicebox.west)
  node[midway, yshift=1em, xshift=-1em] {\includegraphics[width=0.5cm]{coin.png}} 
  node[midway, yshift=1em,xshift = 1em] {LST};

\end{tikzpicture}
\caption{Examples of use cases for Liquid Staking Tokens (LSTs). This paper focuses specifically on the application of LSTs in liquidity provision in automated market makers (AMMs).}
\label{fig: 1}
\end{figure}

\quad Liquid staking effectively addresses the illiquidity of traditional staking, and its adoption has been extraordinary. For example, Lido Staked Ethereum (stETH), the largest LST, grew from about \$20 million in market value in January 2021 to over \$15 billion by mid-2023 \citep{scharnowski2025}, representing more than 30\% of all ETH staked on the network \citep{gogol2024}. In April 2023, the total value locked in liquid staking protocols collectively surpassed that of decentralized exchanges, making liquid staking the single largest category in DeFi by TVL \citep{jha2023}. This explosive growth underscores the significant economic role of LSP in the emerging Proof-of-Stake ecosystem.

\quad Despite their success, LSPs introduce complex decision problems for investors. By design, an LSP allows an investor to retain liquidity while staking: the investor can stake ETH via a liquid staking platform and receive an LST, which can then be used across DeFi protocols. A common strategy is to supply the LST and ETH as a pair to an automated market maker (AMM) pool (e.g. a decentralized exchange pool for ETH and its LST) to earn trading fees \citep{gogol2024, xiong2024}. 

\quad These opportunities come with important trade-offs. On one hand, liquid staking lowers barriers to participation and enables “yield stacking” -- simultaneously earning staking rewards and AMM fees. On the other hand, LSP yields are typically reduced by provider fees and may be more volatile \citep{scharnowski2025}. Moreover, LSTs often trade at a slight discount or premium relative to their underlying asset – the so-called liquid staking basis – which reflects factors such as the protocol’s reward rate, market volatility, and liquidity constraints. Providing liquidity in an AMM introduces further risks: the investor’s position is exposed to impermanent loss if the relative price of the two assets changes, and the earned fees might not fully compensate for this risk \citep{milionis2023, milionis2024, gogol2024liquid}. Empirical evidence indeed shows that a majority of LST liquidity providers have historically underperformed a simple buy-and-hold strategy – one study finds that about 66\% of LST liquidity positions yielded lower returns than just holding the LSTs \citep{xiong2024}. These observations highlight the need for a rigorous quantitative framework to determine optimal staking, liquidity provision, and exit policies. 
\smallskip
\setlength{\parindent}{0pt}\paragraph{\textbf{Main Contributions.}} 
\begin{enumerate}[itemsep = 3 pt]
    \item \textbf{Investor Decision Modeling.} In this paper, we develop a model to analyze the investor’s optimal entry and exit decisions in an Ethereum-based liquid staking and AMM setting. We consider a risk-sensitive investor who can hold ETH in three forms: (i) as unstaked ETH, (ii) as staked ETH via a liquid staking protocol (LSP), or (iii) as an liquidity provider (LP) share in an AMM pool composed of ETH and its corresponding liquid staking token (LST). The model captures the stochastic dynamics of token prices and the accumulation of staking rewards. This allows us to evaluate the investor's payoff and risks under different strategies.
    
    \item \textbf{Optimal Allocation for an Individual Investor and Insights for LSP and AMM Designers.}  We derive the investor’s optimal allocation among directly holding ETH, staking through an LSP, and providing liquidity to an AMM, which reflects the spirit of risk allocation. We establish the conditions under which an investor chooses to stake through a liquid staking protocol (LSP) and receive liquid staking tokens (LST), as well as the circumstances under which the investor enters an LST–ETH AMM pool. The results also provide insights for LSP and AMM designers. In particular, we identify a critical fee threshold for the AMM: only when this threshold is exceeded does the investor find it optimal to supply liquidity. The threshold result delivers a clear mechanism design insight for protocol designers, as it quantifies the minimum fee incentive required to attract a rational LST investor to provide liquidity.
        
    \item \textbf{Optimal Exit Timing for an Individual Investor}. We analyze the optimal timing to exit the AMM pool for an individual investor. After providing liquidity, the investor earns trading fees and staking rewards while exposed to stochastic price changes. We model the exit decision as an optimal stopping problem: at any moment, the investor can convert the LP share back into ETH and LST. We present the corresponding free-boundary problem formulation and restrict our attention to a class of stopping times defined by fixed price thresholds, namely $T = \inf\left\{t \geq 0: P_t \geq L\right\}$ or $T = \inf\left\{t \geq 0: P_t \leq L\right\}$, 
    where the investor exits once the LST price reaches a trigger threshold. We derive the expected objective value and apply Laplace transform techniques to characterize the optimal threshold $L$. Additionally, we decompose the investor's objective value and examine how each component evolves. Through both analytical derivations and numerical illustrations, we identify how these components jointly determine the investor's exit strategy.     
\end{enumerate} 

\smallskip
\setlength{\parindent}{0pt}
\paragraph{\textbf{Literature Review.}}

\quad Our research contributes to the growing literature on blockchain economics and staking. Prior studies have examined how wealth and stake distributions evolve under PoS consensus.  \citet{rocsu2021evolution} show that, in the absence of trading, stake shares follow a martingale process converging to stable long-run distributions. \citet{tang2022stabilitysharesproofstake} extends this line by analyzing different validator types, while \citet{tang2023trading} study optimal stake allocation using a continuous-time control framework.  \citet{tang2023tradingwealthevolutionproof} further surveys these dynamics and highlights the trade-offs faced by individual participants. Building on these foundations, our model connects PoS staking with DeFi activity by analyzing an investor who simultaneously holds LSTs and provides liquidity in an AMM. This integrated perspective bridges previously separate lines of research on staking and liquidity provision.

    \quad In addition, our research contributes to the literature on market microstructure, liquidity provision, and the role and optimal design of transaction fees in automated market makers (AMMs). \citet{capponi2021adoptionblockchainbaseddecentralizedexchanges} analyze the incentives for liquidity providers and show that they may incur losses when token exchange rates fluctuate. \citet{milionis2024} identify the main adverse selection cost incurred by liquidity providers (LPs) and propose measuring LP position losses relative to a rebalancing strategy, referred to as “loss-versus-rebalancing”. They later extend the model to incorporate the impact of trading fees on arbitrageurs' profits and demonstrate that the introduction of fees can be interpreted as a rescaling of time \citep{milionis2023}. We model AMMs following their framework and additionally account for the  opportunity costs and gains associated with liquid staking, while also examining the allocation and optimal stopping problem faced by an investor. Instead of the “loss-versus-rebalancing” approach, we adopt the conventional measure of impermanent loss, defined relative to a holding strategy. This choice is motivated by the fact that continuous and frictionless rebalancing, as required in their framework, may be impractical due to the unstaking process in liquid staking protocols.

    \quad 
    \cite{caodavid2025structuralmodel} conduct the first study to focus exclusively on optimal AMM fees and optimize fee structures by modeling a game among the platform, liquidity providers, and traders. Rather than assuming trading volume to be constant, they model it as a function of liquidity, fees, and price volatility, which influences the fees earned by liquidity providers. Therefore, the optimal fee they propose captures the interactions among these three parties. By contrast, our primary focus is on liquid staking, with the AMM component serving as one part of the overall framework. Our paper emphasizes a dual perspective: it examines both the dynamics of AMMs in the presence of liquid staking tokens (LSTs) or, equivalently, the application of LSTs within AMMs. This dual perspective is a direction not studied in previous literature. Within this perspective, we characterize transaction fees as incentives for liquidity providers and focus on investors’ decisions at time $t = 0$ (allocation problem) and at later times $t > 0$ (optimal stopping problem).


\quad     \citet{tang2023} address a mechanism design problem in the context of PoS consensus, proposing modifications to Ethereum’s transaction fee auction to ensure incentive compatibility and desirable equilibrium properties. \citet{bergault2024automatedmarketmakingcase} examine the exchange rate dynamics between two intrinsically linked cryptoassets and design an AMM model that exploits the the distinctive characteristics of this exchange rate, which may be applied to stablecoins and liquid staking tokens (LSTs). Our research contributes a complementary design insight in the DeFi context: beyond exchange rate dynamics, we incorporate additional features unique to LSTs, such as reward accumulation, the withdrawal process, and the risks associated with providing liquidity using LSTs. By identifying the fee threshold for LST liquidity provision, we inform how AMMs or protocol developers might adjust fee rates or introduce liquidity provision rewards to achieve desired participation outcomes. More broadly, our results demonstrate how tools from stochastic control and optimal stopping can be applied to DeFi mechanism design, yielding normative recommendations for platform policies.

\quad Our research is also closely related to emerging work on liquid staking tokens (LST) and DeFi markets. Recent studies have examined how LST prices, staking yields, and liquidity provision outcomes evolve under different conditions. \citet{scharnowski2025} analyze the sensitivity of LST discounts to market volatility and reward rates. \citet{xiong2024} highlight widespread underperformance among LST liquidity providers. While these works provide valuable empirical and descriptive insights, they do not offer a unified theoretical explanation of the observed behaviors. Our research builds a structural model that rationalizes these observed patterns and characterizes investor behavior through explicit trade-offs under uncertainty.

\medskip
{\bf Organization of the paper}: The remainder of the paper is organized as follows. Section \ref{sc: operational details} introduces the operational details of liquid staking and presents the model setup. In Section \ref{sc: optimal allocation}, we analyze the optimal allocation of ETH between direct holding, staking, and AMM liquidity provision, and propose minimal transaction fee functions that incentivize the investor to participate. Section \ref{sc: optimal time to exit} investigates the optimal time to exit the AMM pool. We present numerical findings in Section \ref{sc: numerical studies}. Finally, we conclude with Section \ref{sc: conclusion}.


\section{Operational Details and Model Formulation}
\label{sc: operational details}

\quad In this section, we present the operational details of liquid staking, and the formulation of our basic models.



\quad First, here is a list of some of the common notations used throughout the paper.
\begin{itemize}
    \item $\mathbb{N}_+$ denotes the set of positive integers, $\mathbb{R}$ denotes the set of real numbers, and $\mathbb{R}_+$ denotes the set of positive real numbers.
    \item $\left[ n\right]$ denotes the set $\{1,2,\ldots,n \}$.
    \item $\mathbb{P} \left( \cdot \right)$ denotes probability, $\mathbb{E} \left( \cdot \right)$ denotes expectation, and $\mathbb{E}^{s,x} \left( \cdot, \cdot\right)$ denotes the expectation conditional on the process starting at time $s$ and state $x$.
    \item $C^2 \left(\mathbb{R}^2 \right)$ is the set of functions that are twice continuously differentiable on $\mathbb{R}^2$.
\end{itemize}

\quad {\bf Liquid Staking Protocols (LSPs)} are smart contracts that enable users to stake their ETH and earn staking rewards while retaining liquidity. Examples of LSPs include Lido and RocketPool. These protocols are typically made up of validators, node operators, and a staking pool \citep{xiong2024}. Unlike traditional staking, where assets are locked, users receive liquid staking tokens (LSTs) upon depositing ETH into the pool. These LSTs allow users to withdraw their staked ETH and are transferable and tradable across various DeFi protocols, thereby maintaining liquidity while the underlying ETH remains staked and accrues rewards. For example, LSTs can be used to provide liquidity to decentralized exchange (DEX) pools or to trade within these pools \citep{gogol2024, xiong2024, xiong2025}. When users \textbf{withdraw} their ETH from the staking pool, the redemption is fulfilled directly if there is sufficient ETH available. Otherwise, the LSP must request node operators to unstake ETH, which may involve queuing and delays, thereby postponing the redemption process for users and potentially result in losses \citep{neuder2024optimizingexitqueuesproofofstake, gogol2024}. In this paper, the process of withdrawing ETH from the pool is also referred to as exiting the liquid staking protocol (LSP).


\quad A {\bf liquid staking token (LST)} is a tokenized representation of staked assets \citep{bitcoinlst,gogol2024}. Each LST is associated with two distinct prices: a \textbf{primary market price} and a \textbf{secondary market price} \citep{gogol2024, xiong2024}. The primary market price is determined within liquid staking protocols (LSPs) and is often referred to as the protocol or reserve value. At this price, users stake ETH and receive LSTs directly from the LSP. In contrast, the secondary market price reflects the market valuation of LSTs on centralized or decentralized exchanges, where LSTs can be traded. Discrepancies between the two prices may give rise to arbitrage opportunities \citep{gogol2024, xiong2024, bitcoinlst}. 

\quad Liquid staking tokens (LSTs) can be broadly classified into two categories: rebasing tokens and reward-bearing tokens. \textbf{Rebasing} tokens utilize a rebasing mechanism that maintains a constant 1:1 peg to ETH, with staking rewards reflected by adjusting the number of tokens held in users' wallets \citep{xiong2024, gogol2024, lidosteth}. An example of a rebasing token is stETH, which is issued and managed by Lido. In contrast, \textbf{reward-bearing} LSTs accrue staking rewards through increase in their token values. Consequently, their price relative to ETH rises over time, while the number of tokens held remains unchanged \citep{taguslabs, gogol2024, xiong2024}. An example of a reward-bearing token is rETH, issued by RocketPool.

\smallskip
\setlength{\parindent}{0pt}\paragraph{\textbf{Price Dynamics of Liquid Staking Tokens (LST)}}
Let $P_t$ denote the price of the liquid staking token (LST) in terms of ETH. It follows a geometric Brownian motion given by the stochastic differential equation
\begin{equation*}
    \frac{dP_t}{P_t} = g \, dt+ \sigma \, dB_t,
\end{equation*}
where $g \geq 0$ is the drift term, $\sigma \in \mathbb{R}_+$ represents the volatility, and $B_t$ is a standard Brownian motion. Let $P_0$ denote the initial price. For rebasing tokens (e.g., stETH on the Lido platform), $P_0 = 1$ and $g = 0$; for reward-bearing tokens (e.g., rETH on Rocket Pool), $P_0 > 1$ and $g > 0$.

\quad Let $D_t$ denote the discount factor that applies when the investor exits the liquid staking protocol (LSP). (The terms ``exit the LSP'' and ``withdraw'' are used interchangeably.) For $t \geq 0$, $D_t = e^{-mt}$ where $m > 0$ is the associated discount parameter. At time $t$, the realized price upon withdrawing is thus $D_t P_t$. 

\begin{remark}
    The intuition behind $D_t = e^{-mt}$ is as follows: the longer the investor stakes in the liquid staking protocol (LSP), the greater the staking rewards she accrues. Consequently, when a larger quantity of tokens is withdrawn, the discounting factor becomes more significant.
\end{remark}

\quad Let $\rho$ denote the discount rate for the investor when we discount the investor's payoff at time $t$ to its present value. Therefore, the present value of the realized price upon unstaking at time $t$ is $e^{-\rho t} D_tP_t$.



\smallskip

\setlength{\parindent}{0pt}\paragraph{\textbf{The Investor's Decision-Making}}

Consider an investor who begins with 1 ETH and deposits it into a liquid staking protocol. If the investor exits the LSP at time $t$, the ETH will have grown to a value of $\frac{1}{P_0}e^{rt}D_t P_t$. The investor is incentivized to stake through the protocol only if the expected present value of the staked ETH and the accumulated rewards exceeds the initial value of 1 ETH, that is, when
\begin{equation*}
\frac{1}{P_0}\mathbb{E} \left[e^{\left(-\rho+r\right) t} D_t P_t\right] > 1,
\end{equation*}
where $\rho$ is the investor’s discount rate, and $P_0$ is the initial LST price. This inequality ensures that, in expectation, the present value of staking exceeds that of simply holding ETH. The condition above simplifies to the following requirement:
\begin{equation}
   r+g-\rho-m > \ln P_0 > 0. \label{assump: stake via LSP}
\end{equation}
\quad This condition essentially implies that the difference between the reward rate (with the price growth rate $g$ interpreted as an alternative form of staking reward) and the two discount rates, $m$ and $\rho$, must be sufficiently large to incentivize investors to stake via a liquid staking protocol (LSP). This can potentially be achieved by improving validator or node operator performance to enhance the reward rate, or by reducing the risks and potential delays users may encounter when withdrawing their ETH. Throughout the remainder of this paper, we assume that this inequality always holds.

\quad After depositing into the LSP, the investor decides whether to provide liquidity to an automated market maker (AMM) pool using the LST and the remaining ETH. To be more precise, the investor determines how to allocate the 1 unit of ETH between direct holding and staking. She stakes $\left(1-a\right)$ units of ETH via an LSP, receiving $\frac{1-a}{P_0}$ units of LST. The staked portion accumulates staking rewards equal to $\left(e^{rt}-1\right) \cdot \frac{1-a}{P_0}$ units of LST, where $r > 0$ for rebasing tokens and $r = 0$ for reward-bearing tokens. The remaining $a$ units of ETH is held. The investor then chooses to deposit $x$ units of LST out of the $\frac{1-a}{P_0}$ received and $y$ units of ETH out of the $a$ held to an automated market maker (AMM) pool to provide liquidity. At a later time $t$, the investor plans to exit the pool and realize returns. Figure \ref{fig: flowchart} illustrates the process of liquid staking and liquidity provision. We would like to explore the following two questions:
\begin{enumerate}[itemsep = 3 pt]
    \item \textbf{Optimal Allocation}: For \textbf{an individual investor}, what is the optimal allocation of the initial ETH between staking and direct holding, and subsequently between AMM liquidity provision and holding? Based on these results, what are the necessary and sufficient conditions for \textbf{protocol designers} to effectively incentivize investor participation?
    \item \textbf{Optimal Time to Exit}: For \textbf{an individual investor}, when is the optimal time to exit the AMM pool in order to maximize expected returns?
\end{enumerate}

We address these questions sequentially in Sections \ref{sc: optimal allocation} and \ref{sc: optimal time to exit}. 

\begin{figure}[h] 
\centering
\begin{tikzpicture}[
  node distance=2cm and 1.6cm,
  box/.style={draw, rectangle, minimum width=1.5cm, minimum height=1cm, align=center},
  >=Stealth
]

\node[box] (eth) {1 ETH};
\node[box, right=of eth, yshift=1cm] (1alpha) {{\color{red}$1 -$  $a$} ETH};
\node[box, below=of 1alpha, yshift = 1cm, xshift = -0.3cm] (alpha) {{\color{red} $a$} ETH};

\node[box, right=of 1alpha, xshift = - 0.3cm] (staking) {Liquid Staking \\ Protocol};
\node[box, right=of alpha] (amm) {LST-ETH AMM};

\node[box, right=of staking, yshift=1cm, xshift = 0.3cm] (reward) {Staking Rewards: \\ {\color{red} $\frac{\left(e^{rt}-1 \right) \left( 1-a\right)}{P_0}$} LST };
\node[box, below=of reward, yshift = 1cm, xshift = 0.05cm] (lst) { {\color{red} $\frac{1-a}{P_0}$} LST};

\draw[->] (eth) -- (1alpha);
\draw[->] (eth) -- (alpha);

\draw[->] (1alpha) -- (staking);
\draw[->] (staking) -- (reward);
\draw[->] (staking) -- (lst);

\draw[->] (alpha) -- (amm);
\draw[->] (alpha) -- node[midway, above] {{\color{red} $y$} ETH} (amm);

\draw[->] (lst) -- node[midway, above] {{\color{red} $x$} LST} (amm);

\end{tikzpicture}
\caption{Liquid Staking and Liquidity Provision. All quantities refer to token units (e.g., “1 ETH” denotes one unit of the ETH token). }
\label{fig: flowchart}
\end{figure}


\section{Optimal Allocation}
\label{sc: optimal allocation}

\quad In this section, we analyze the investor's optimal allocation strategy, given that she decides to stake through an LSP. Section \ref{sc: CPMM} analyzes how a constant product market maker (CPMM) rebalances an investor's portfolio. Section \ref{sc: Liquidity provision condition} presents the condition an investor must satisfy to provide liquidity to an automated market maker (AMM). Section \ref{sc: problem formulation} formulates the optimal allocation problem, and Section \ref{sc: optimal solutions} analyzes the optimal solutions and establishes a lower bound on AMM fees, which is a necessary and sufficient condition, offering insights to AMM designers.


\subsection{Constant Product Market Maker (CPMM)}
\label{sc: CPMM}

The constant product market maker (CPMM), such as Uniswap, is currently the most widely adopted type of AMM. As discussed in Remark \ref{rmk: CPMM}, CPMMs also ensure the fair treatment of investors. Accordingly, we focus on CPMMs in this paper. Specifically, we consider a CPMM with initial reserves of $X$ units of LST and $Y$ units of ETH. At time $t$, the pool is rebalanced to $U^{\star}$ units of LST and $V^{\star}$ units of ETH, where $U^{\star}$ and $V^{\star}$ is obtained by solving the following pool value problem \citep{milionis2024}:

\begin{equation}
\begin{aligned}
    V\left(P_t;L\right) =\min_{U,V \in \mathbb{R}^+} & D_tP_t U + V, \\
\text{subject to \,} & UV = L,
\end{aligned}
\label{prob: pool value problem}
\end{equation}
where $L = XY$. By substituting $u = \frac{L}{v}$ into the objective function, we obtain the optimal solution to Problem \ref{prob: pool value problem}:
\begin{equation}
    U^\star = \sqrt{\frac{L}{D_t P_t}}, \quad V^\star = \sqrt{D_t P_t L}, \label{eq: opt sol for pool}
\end{equation}
with the optimal objective value given by $V\left(P_t;L\right) = 2 \sqrt{L D_tP_t}$.


\begin{remark}
    This corresponds to identifying the point on the curve $uv = L$ where the marginal exchange rate, given by $\frac{v}{u}$, equals the external realized price $D_t P_t$. (A more detailed discussion of this exchange rate is provided below.) This indicates that the price within the AMM pool should align with the price outside the AMM pool, which is due to the existence of arbitrageurs \citep{milionis2024}.
\end{remark}

\quad Now, suppose an investor deposits $x$ units of LST and $y$ units of ETH into the pool. In return, she receives liquidity provider tokens representing her proportional share of the pool’s total value, which may later be redeemed for a corresponding share of the pool’s assets \citep{cryptopedia2025lptokens, coinmarketcap, xrpledger2025}. Here, the deposited amounts must satisfy $x = \lambda X$ and $y = \lambda Y$, where $0 < \lambda < 1$ represents the investor’s share of the pool. For a detailed justification, refer to Proposition \ref{prop:liquidity provision condition} and Eq.~\eqref{assump: liquidity provision condition}. The next lemma and proposition characterize the rebalancing of a single investor's deposits within the pool. 
\begin{lemma}
Let $x = \lambda X$ and $y = \lambda Y$, where $\lambda > 0$. Let $\left(U^\star, V^\star \right)$ denote the optimal solution to Problem \ref{prob: pool value problem all} (as given in Eq. \eqref{eq: opt sol for pool}), and $\left(u^\star, v^\star \right)$ denote the optimal solution to Problem \ref{prob: pool value problem single}:
    \begin{equation}
\begin{aligned}
    V\left(P_t; L = XY \right) = \min_{U,V \in \mathbb{R}^+} & D_tP_t U + V, \\
\text{subject to \,} & UV = L =XY,
\end{aligned}
\label{prob: pool value problem all}
\end{equation}

    \begin{equation}
\begin{aligned}
    V\left(P_t; L^\prime = xy \right) = \min_{u,v \in \mathbb{R}^+} & D_tP_t u + v, \\
\text{subject to \,} & uv = L^\prime =xy,
\end{aligned}
\label{prob: pool value problem single}
\end{equation}
Then, we have
\begin{equation}
    \begin{aligned}
        u^\star &= \lambda U^\star, \\
        v^\star &= \lambda V^\star, \\
        V\left(P_t;L^\prime \right) &= \lambda V\left(P_t;L \right).
    \end{aligned}
    \label{eq: opt sols and value}
\end{equation}
\label{lem:lmda pool}
\end{lemma}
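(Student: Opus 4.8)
The plan is to exploit the fact that Problems \ref{prob: pool value problem all} and \ref{prob: pool value problem single} are the \emph{same} optimization problem up to the value of the constraint constant, together with the observation that $x=\lambda X$ and $y=\lambda Y$ force a clean relation between those constants. Indeed, the first thing to record is
\[
L' \;=\; xy \;=\; (\lambda X)(\lambda Y) \;=\; \lambda^2 XY \;=\; \lambda^2 L,
\]
so the two problems share the identical objective $D_tP_t\,(\cdot)+(\cdot)$ and the identical external price $D_tP_t$, differing only through the level set $uv=\lambda^2 L$ versus $UV=L$.

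From here I would give the argument in the quickest way: substitute $L'=\lambda^2 L$ into the closed-form minimizer from Eq.~\ref{eq: opt sol for pool}. Using $\lambda>0$ so that $\sqrt{\lambda^2}=\lambda$, this yields
\[
u^\star=\sqrt{\tfrac{L'}{D_tP_t}}=\lambda\sqrt{\tfrac{L}{D_tP_t}}=\lambda U^\star,
\qquad
v^\star=\sqrt{D_tP_t\,L'}=\lambda\sqrt{D_tP_t\,L}=\lambda V^\star,
\]
and correspondingly $V(P_t;L')=2\sqrt{L'D_tP_t}=2\lambda\sqrt{LD_tP_t}=\lambda\,V(P_t;L)$, which is exactly Eq.~\ref{eq: opt sols and value}. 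If one prefers an argument that does not invoke the explicit formula, I would instead set up the bijection $(u,v)\mapsto(\tilde U,\tilde V):=(u/\lambda,v/\lambda)$ between the feasible set $\{uv=\lambda^2 L\}$ of Problem \ref{prob: pool value problem single} and the feasible set $\{\tilde U\tilde V=L\}$ of Problem \ref{prob: pool value problem all}; under this map the objective transforms homogeneously, $D_tP_t u+v=\lambda(D_tP_t\tilde U+\tilde V)$, so infima and their (unique, by strict convexity of $v\mapsto D_tP_t L'/v+v$ on $\mathbb{R}_+$) minimizers correspond, giving the same three identities.

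There is essentially no obstacle here; the lemma is a homogeneity/scaling statement. The only points that require a word of care are that the square-root branch is the positive one, which is guaranteed by the hypothesis $\lambda>0$, and that both pool-value problems are evaluated at the \emph{same} realized price $D_tP_t$, so the common scaling factor is exactly $\lambda$ and not some price-dependent quantity. Everything else is routine substitution.
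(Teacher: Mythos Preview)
Your proposal is correct. Both your routes differ from the paper's. The paper does not substitute into the closed form of Eq.~\ref{eq: opt sol for pool}; instead it records the homogeneities $f(\lambda U,\lambda V)=\lambda f(U,V)$ and $h(\lambda U,\lambda V)=\lambda^2 h(U,V)$ and then runs a two-sided sandwich: from the minimizer $(u^\star,v^\star)$ of Problem~\ref{prob: pool value problem single} it builds the feasible point $(u^\star/\lambda,v^\star/\lambda)$ for Problem~\ref{prob: pool value problem all} to get $V(P_t;L)\le \lambda^{-1}V(P_t;L')$, and symmetrically from $(U^\star,V^\star)$ it builds $(\lambda U^\star,\lambda V^\star)$ to get $V(P_t;L')\le \lambda V(P_t;L)$; equality of values then forces the minimizer identities. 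Your first argument is shorter because Eq.~\ref{eq: opt sol for pool} has already done the work, and your second (the bijection $(u,v)\mapsto(u/\lambda,v/\lambda)$) is essentially the paper's sandwich compressed into one step. The paper's version has the mild advantage of not invoking the explicit formula and hence generalizing to any objective and constraint with the stated homogeneities; your version is more economical given what has already been established in the text.
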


The proof of Lemma \ref{lem:lmda pool} is provided in Appendix \ref{appx: proof of results in sc 3}.

\begin{proposition}
Starting with $x$ units of LST and $y$ units of ETH, an investor's portfolio is rebalanced to $u^\star = \sqrt{\frac{xy}{D_tP_t}
}$ and $v^\star = \sqrt{D_t P_t xy}$ at time $t$, and the portfolio value is equal to $2\sqrt{xy D_tP_t}$.
\label{prop: single investor rebalancing}
\end{proposition}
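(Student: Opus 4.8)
The plan is to reduce Proposition~\ref{prop: single investor rebalancing} directly to Lemma~\ref{lem:lmda pool} together with the closed form~\eqref{eq: opt sol for pool} for the full-pool problem. First I would observe that the investor's deposit of $x$ units of LST and $y$ units of ETH can be written, following the discussion preceding the proposition, as $x = \lambda X$ and $y = \lambda Y$ for the pool share $0<\lambda<1$, so that $L' = xy = \lambda^2 XY = \lambda^2 L$. Consequently the rebalanced quantities of the investor's sub-pool are governed by Problem~\ref{prob: pool value problem single} with reserve constant $L' = xy$.

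Next I would apply Lemma~\ref{lem:lmda pool}, which gives $u^\star = \lambda U^\star$, $v^\star = \lambda V^\star$, and $V(P_t;L') = \lambda V(P_t;L)$. Substituting the explicit solution~\eqref{eq: opt sol for pool}, namely $U^\star = \sqrt{L/(D_t P_t)}$ and $V^\star = \sqrt{D_t P_t L}$, and using $\lambda \sqrt{L} = \sqrt{\lambda^2 L} = \sqrt{xy}$, I obtain
\begin{equation*}
u^\star = \lambda \sqrt{\frac{L}{D_t P_t}} = \sqrt{\frac{xy}{D_t P_t}}, \qquad v^\star = \lambda \sqrt{D_t P_t L} = \sqrt{D_t P_t xy},
\end{equation*}
and likewise the portfolio value is $V(P_t;L') = \lambda \cdot 2\sqrt{L D_t P_t} = 2\sqrt{xy D_t P_t}$, which is exactly the claimed expression. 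Alternatively, and perhaps more cleanly, one can note that Problem~\ref{prob: pool value problem single} has the identical structure to Problem~\ref{prob: pool value problem} with $L$ replaced by $L' = xy$, so substituting $xy$ for $L$ in~\eqref{eq: opt sol for pool} immediately yields the result; Lemma~\ref{lem:lmda pool} then serves to confirm that this sub-pool rebalancing is consistent with the proportional share $\lambda$ of the full pool's rebalancing, which is the economically meaningful point.

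I do not anticipate a serious obstacle here: the proposition is essentially a corollary of Lemma~\ref{lem:lmda pool} and the explicit formula~\eqref{eq: opt sol for pool}, and the only mild subtlety is the bookkeeping identity $xy = \lambda^2 XY$ and the fact that $\lambda\sqrt{L} = \sqrt{xy}$ rather than $\sqrt{\lambda L}$. The one point worth stating carefully is \emph{why} the investor's post-rebalancing holdings are $(u^\star, v^\star) = (\lambda U^\star, \lambda V^\star)$ in the first place: this is because the arbitrageurs rebalance the entire pool to $(U^\star, V^\star)$ so that the internal marginal price matches the external realized price $D_t P_t$, and the investor, holding a constant fraction $\lambda$ of the liquidity provider tokens, is entitled to redeem exactly fraction $\lambda$ of the rebalanced reserves. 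I would include this sentence of justification and then close with the two-line substitution above.
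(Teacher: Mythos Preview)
Your proposal is correct and follows exactly the paper's approach: the paper's proof is a single line stating that the result follows immediately from Eq.~\eqref{eq: opt sol for pool} and Lemma~\ref{lem:lmda pool}, and your write-up simply spells out the substitution $\lambda\sqrt{L}=\sqrt{xy}$ in detail.
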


\begin{proof}
    The results follow immediately from Eq. \eqref{eq: opt sol for pool} and Lemma \ref{lem:lmda pool}.
\end{proof}

\begin{remark}
\label{rmk: CPMM}
   Lemma \ref{lem:lmda pool} and Proposition \ref{prop: single investor rebalancing} yield several important insights. First, they show that for a single investor, providing liquidity to an AMM and receiving a proportional share of the pool’s value is equivalent to solving the pool value problem with the invariant function parameter $L$ set to be the product of the investor’s initial deposits. In other words, the value of a liquidity provider’s share is not affected by the entry or exit of the other liquidity providers, as noted in \citep{zhu2025optimalexiting}.
   
   \quad Second, they show that the constant product market maker (CPMM) treats all investors equivalently: each investor’s terminal portfolio $\left( u^\star, v^\star \right)$ depends solely on their own share of the pool, independent of other factors. This fairness property may help explain the widespread adoption of CPMMs, which is why we focus on CPMMs in this study.
\end{remark}

\subsection{Liquidity Provision Condition} 

\label{sc: Liquidity provision condition}

Under the CPMM invariant $XY = L$, the ETH reserve can be expressed as a function of the LST reserve: $Y = h(X) = \frac{L}{X}$, where $X$ and $Y$ denote the reserves of LST and ETH, respectively. The marginal exchange rate, defined as the rate at which a liquidity trader receives ETH when selling 1 unit of LST, is given by 
\begin{equation*}
    Z = - h^{\prime}(X) = \frac{Y}{X},
\end{equation*}
according to \citet{cartea2025}.
 Let $X_0$ and $Y_0$ denote the pool’s initial reserves of LST and ETH prior to the investor’s liquidity provision. 
 The investor's provision $\left(x,y \right)$ must satisfy the liquidity provision condition \citep{cartea2025}:
\begin{equation}
    \frac{y + Y_0}{x+X_0} = \frac{Y_0}{X_0}. \label{assump: liquidity provision condition}
\end{equation}
This ensures that liquidity provision does not change the marginal rate, which is a distinctive characteristic of AMMs \citep{angeris2020, cartea2025}.
We further assume that, at time zero (i.e., prior to the investor’s entry), the pool is already rebalanced with the external market:
\begin{equation}
    \frac{Y_0}{X_0} = P_0. \label{assump: pool already rebalanced}
\end{equation}
\begin{proposition}
    When an investor deposits $x$ LST and $y$ ETH into the pool, they must satisfy:
\begin{equation}
    \frac{y}{x} = P_0. \label{constr: y=P_0x}
\end{equation}
\label{prop:liquidity provision condition}
\end{proposition}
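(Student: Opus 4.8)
The plan is to combine the liquidity provision condition \eqref{assump: liquidity provision condition} with the initial-rebalancing assumption \eqref{assump: pool already rebalanced} by straightforward algebra. First I would write down \eqref{assump: liquidity provision condition}, namely $\frac{y+Y_0}{x+X_0} = \frac{Y_0}{X_0}$, and cross-multiply to obtain $X_0(y+Y_0) = Y_0(x+X_0)$, i.e. $X_0 y + X_0 Y_0 = Y_0 x + X_0 Y_0$. Cancelling the common term $X_0 Y_0$ on both sides yields $X_0 y = Y_0 x$, hence $\frac{y}{x} = \frac{Y_0}{X_0}$. Then I would invoke \eqref{assump: pool already rebalanced}, which states $\frac{Y_0}{X_0} = P_0$, to conclude $\frac{y}{x} = P_0$, which is \eqref{constr: y=P_0x}.

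A couple of routine bookkeeping points are worth noting in the write-up. One should observe that $x>0$ and $X_0>0$ (the investor makes a genuine deposit into a nonempty pool), so the divisions are legitimate and no degenerate case arises. It is also worth remarking that the derived relation $\frac{y}{x} = \frac{Y_0}{X_0}$ is exactly the statement that the investor's deposit is proportional to the existing reserves, i.e. $x = \lambda X_0$ and $y = \lambda Y_0$ for $\lambda = x/X_0 > 0$; this is the proportionality hypothesis used in Lemma \ref{lem:lmda pool} and Proposition \ref{prop: single investor rebalancing}, so the proof simultaneously justifies the $x=\lambda X$, $y=\lambda Y$ assumption invoked earlier in Section \ref{sc: CPMM}.

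Honestly, there is no real obstacle here: the statement is an immediate two-line consequence of the two displayed assumptions, and the only ``work'' is the cancellation of $X_0 Y_0$ and the substitution $Y_0/X_0 = P_0$. If anything, the subtlety is purely expository — making clear that \eqref{assump: pool already rebalanced} is the economically meaningful input (the pre-existing pool price equals the external LST price $P_0$ because of arbitrage), while \eqref{assump: liquidity provision condition} is the structural AMM constraint that adding liquidity leaves the marginal rate $Y/X$ unchanged. I would keep the proof to three or four lines and let the preceding discussion carry the interpretive weight.
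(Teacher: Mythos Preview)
Your proposal is correct and is exactly the approach the paper takes: the paper's proof simply says that combining conditions~\eqref{assump: liquidity provision condition} and~\eqref{assump: pool already rebalanced} gives the result immediately, and you have merely written out the two lines of algebra behind that sentence. Your additional remarks about positivity of $x,X_0$ and the connection to the proportionality $x=\lambda X_0$, $y=\lambda Y_0$ are fine expository flourishes but not needed for the proof itself.
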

\begin{proof}
Combining conditions~\ref{assump: liquidity provision condition} and~\ref{assump: pool already rebalanced}, the result follows immediately.
\end{proof}

\begin{remark}
The liquidity provision condition requires that the value of the investor's supplied LST and ETH be equal, since $P_0x=y$.
\label{rmk: CPMM equal value}
\end{remark}

\subsection{Problem Formulation}
If the investor exits the AMM and the LSP at time $t$, the expected present value of her portfolio becomes
\label{sc: problem formulation}
\begin{equation*}
    e^{- \rho t} \  \mathbb{E} \left[ 2 \sqrt{xyD_tP_t} + D_tP_t \left(\frac{1-a}{P_0}-x\right) + \left(a-y \right) + D_tP_t  \frac{\left(e^{rt}-1\right)\left(1-a\right)}{P_0} \right],
\end{equation*}
where $\rho$ denotes the investor's discount rate. Here, $2 \sqrt{xyD_tP_t}$ represents the value of the tokens contributed by the investor to the AMM pool after rebalancing (as described in Section \ref{sc: CPMM}). The term $D_tP_t \left( \frac{1-a}{P_0} -x \right)$ denotes the value of the LST retained outside the pool, and $\left(a-y\right)$ corresponds to the remaining ETH not supplied to the AMM. Finally, $D_tP_t  \frac{\left(e^{rt}-1\right)\left(1-a\right)}{P_0}$ is the value of the staking rewards earned through the LSP.

\quad Additionally, assume that the investor earns transaction fees at rate $\phi_t$ per unit of LST deposited. Then, by time $t$, the investor has accrued a total of $x \cdot \int_{0}^t \phi_s e^{-\rho s} ds$ transaction fees discounted to the present value.

\begin{remark}
    We model the fee rate in units of deposited LST, in line with \citet{zhu2025optimalexiting}. However, while \citet{zhu2025optimalexiting} assumes a constant fee rate, we allow the rate to vary with time 
$t$, which can be seen as a natural generalization of their assumption.
\end{remark}

\quad To determine the optimal allocation strategy, we consider the following optimization problem:
\begin{equation*}
    \begin{aligned}
        \max \ & e^{-\rho t} \ \mathbb{E} \left[ 2 \sqrt{xyD_tP_t} + D_tP_t \left(\frac{1-a}{P_0}-x\right) + \left(a-y \right) + D_tP_t  \frac{\left(e^{rt}-1\right)\left(1-a\right)}{P_0} \right] \\
        & + x\int_{0}^t \phi_s e^{-\rho s} ds , \\
        \text{subject to } \ & 0 \leq y \leq a, \\
        & 0 \leq x \leq \frac{1-a}{P_0}, \\
        & 0\leq a \leq 1, \\
        & \frac{y}{x} = P_0.
    \end{aligned}
\end{equation*}

The objective function is defined as the sum of the investor’s expected  portfolio value at time $t$ and the transaction fees she earned, both discounted to their present values. The first three constraints follow naturally from their definitions, and the last constraint corresponds to the liquidity provision condition, as detailed in Section \ref{sc: Liquidity provision condition}.

\quad Substituting $y = P_0 x$, the problem simplifies to:
\begin{equation}
    \begin{aligned}
        \max \ &\left\{-e^{-\rho t}\mathbb{E} \left[ \left( \sqrt{D_tP_t} - \sqrt{P_0} \right)^2 \right] + \int_{0}^t \phi_s e^{-\rho s} ds \right\}x \\& - e^{-\rho t} \,  \left[\frac{\mathbb{E}\left(D_tP_t\right) e^{rt}}{P_0} - 1 \right]a + e^{-\rho  t} \frac{\mathbb{E}\left(D_tP_t\right) e^{rt}}{P_0}, \\
        \text{subject to} \ & P_0 x \leq a \leq 1-P_0x, \\
        &0 \leq x \leq \frac{1}{2P_0}.
    \end{aligned} \label{prob: double max prob}
\end{equation}

\subsection{Optimal Solutions}
\label{sc: optimal solutions}
We solve this problem by first determining the optimal value of $a$ for a fixed $x$, and then solving for the optimal $x$. To complete the analysis, we impose a natural tie-breaking assumption: 
\begin{assumption}
    If the investor is indifferent between taking a positive action and remaining inactive, they opt to do nothing. That is, if any $x^\star \in \left[0, \frac{1}{2P_0} \right]$ is optimal, then the investor chooses $x^\star = 0$; similarly, if any $a^\star \in \left[0,1 \right]$ is optimal, then the investor sets $a^\star = 0$. \label{assump: tie-breaking}
\end{assumption}

\begin{remark}
    This assumption is motivated by the fact that, in practice, investor participation without clear gains is often undesirable due to real-world frictions, such as transaction costs, effort, and risk.
\end{remark}



\quad The optimal value for $x^\star$ depends on the cumulative transaction fees, $\int_0^t \phi_s e^{-\rho s} ds$. By examining different values of the cumulative fees, we obtain the following proposition and theorem, which characterize the optimal solution and establish the necessary and sufficient condition for Problem (\ref{prob: double max prob}) to admit a positive optimal solution. Proofs of these results are presented in Appendix \ref{appx: proof of results in sc 3}.

\begin{proposition}
$x^\star = \frac{1}{2P_0}$, $a^\star = \frac{1}{2}$, and $y^\star = P_0 x^\star = \frac{1}{2}$.
\label{prop: optimal allocation solution}
\end{proposition}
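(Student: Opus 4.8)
The plan is to optimize Problem (\ref{prob: double max prob}) hierarchically: fix $x$, solve for the optimal $a$, substitute back, and then optimize over $x$, finally invoking the tie-breaking Assumption \ref{assump: tie-breaking}. The crucial observation is that the coefficient of $a$ in the objective is $-e^{-\rho t}\left[\frac{\mathbb{E}(D_tP_t)e^{rt}}{P_0}-1\right]$, and condition (\ref{assump: stake via LSP}) — specifically $r+g-\rho-m>\ln P_0$, which was assumed to hold throughout — is exactly what guarantees $\frac{\mathbb{E}(D_tP_t)e^{rt}}{P_0}=\frac{P_0 e^{(g-m)t}e^{rt}}{P_0}=e^{(r+g-m)t}>1$ for $t>0$ (since $r+g-m>\rho+\ln P_0>0$, hence certainly $r+g-m>0$). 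Therefore the objective is \emph{increasing} in $a$, so for any fixed feasible $x$ the optimal choice is to push $a$ to its upper bound $a^\star = 1-P_0 x$.

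Next I would substitute $a = 1 - P_0 x$ into the objective. The $a$-dependent part becomes $-e^{-\rho t}\left[\frac{\mathbb{E}(D_tP_t)e^{rt}}{P_0}-1\right](1-P_0 x)$, whose $x$-coefficient is $+e^{-\rho t}\left[\mathbb{E}(D_tP_t)e^{rt}-P_0\right]$. Adding this to the original $x$-coefficient $-e^{-\rho t}\mathbb{E}\left[\left(\sqrt{D_tP_t}-\sqrt{P_0}\right)^2\right]+\int_0^t\phi_s e^{-\rho s}\,ds$, the total coefficient of $x$ collapses: expanding $\mathbb{E}\left[\left(\sqrt{D_tP_t}-\sqrt{P_0}\right)^2\right] = \mathbb{E}(D_tP_t) - 2\sqrt{P_0}\,\mathbb{E}\sqrt{D_tP_t} + P_0$, the $\mathbb{E}(D_tP_t)$ terms partially cancel and one is left with a coefficient of the form $e^{-\rho t}\left[\mathbb{E}(D_tP_t)(e^{rt}-1) + 2\sqrt{P_0}\,\mathbb{E}\sqrt{D_tP_t} - 2P_0\right] + \int_0^t\phi_s e^{-\rho s}\,ds$. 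I would show this is strictly positive: the term $2\sqrt{P_0}\mathbb{E}\sqrt{D_tP_t}-2P_0 = 2\sqrt{P_0}\bigl(\sqrt{P_0}e^{(g/2-m/2-\sigma^2/8)t}-\sqrt{P_0}\bigr)$ may be negative, but it is dominated by $\mathbb{E}(D_tP_t)(e^{rt}-1) = P_0 e^{(g-m)t}(e^{rt}-1)$ together with the nonnegative fee term, again using that $r+g-m$ is comfortably positive from (\ref{assump: stake via LSP}); one can bound $\mathbb{E}\sqrt{D_tP_t}\ge \sqrt{P_0}e^{(g-m)t/2-\sigma^2 t/8}$ and verify the inequality $P_0 e^{(g-m)t}(e^{rt}-1)+2P_0 e^{(g-m)t/2-\sigma^2 t/8}-2P_0>0$ holds under the standing assumption (using $e^{(g-m)t}\ge e^{\rho t}P_0 \cdot e^{-rt}$ type estimates, or more directly since $g-m > \rho + \ln P_0 - r$ and for liquid-staking parameter regimes $g-m\ge 0$). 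Once the coefficient of $x$ is shown strictly positive, the objective is strictly increasing in $x$ on $[0,\frac{1}{2P_0}]$, so $x^\star = \frac{1}{2P_0}$; then $a^\star = 1 - P_0\cdot\frac{1}{2P_0} = \frac{1}{2}$ and $y^\star = P_0 x^\star = \frac{1}{2}$.

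The main obstacle I anticipate is verifying that the net coefficient of $x$ is strictly positive in \emph{all} parameter regimes allowed by the standing assumptions, because it mixes $\mathbb{E}(D_tP_t) = P_0 e^{(g-m)t}$ (which could be less than $P_0$ if $g<m$) with the Jensen-type term $\mathbb{E}\sqrt{D_tP_t}\le\sqrt{\mathbb{E}(D_tP_t)}$ and the sign-indefinite relationship between $r$, $g$, $m$, $\rho$. The clean resolution is to note that the problem statement says condition (\ref{assump: stake via LSP}), $r+g-\rho-m>\ln P_0>0$, is assumed to hold; this forces $r+g-m > \rho + \ln P_0 > 0$, so $\mathbb{E}(D_tP_t)e^{rt} = P_0 e^{(r+g-m)t} > P_0 e^{\rho t}$, which makes both the $a$-coefficient argument and (after the cancellation) the $x$-coefficient argument go through cleanly, with the fee term $\int_0^t\phi_s e^{-\rho s}\,ds\ge 0$ only helping. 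I would therefore organize the proof so that the single inequality (\ref{assump: stake via LSP}) is invoked twice — once to drive $a$ up, once to drive $x$ up — and close with Assumption \ref{assump: tie-breaking} being irrelevant here since the optima are at the boundary and strict, giving the stated $x^\star=\frac{1}{2P_0}$, $a^\star=\frac12$, $y^\star=\frac12$.
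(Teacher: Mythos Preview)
Your first step contains a sign error that invalidates the rest of the argument. You correctly identify the coefficient of $a$ in Problem~(\ref{prob: double max prob}) as $-e^{-\rho t}\left[\dfrac{\mathbb{E}(D_tP_t)e^{rt}}{P_0}-1\right]$ and correctly note that the bracket is positive under~(\ref{assump: stake via LSP}). But that makes the coefficient of $a$ \emph{negative}, so the objective is \emph{decreasing} in $a$, not increasing. For fixed $x$ the optimal inner choice is therefore the \emph{lower} bound $a^\star = P_0 x$, not the upper bound $1-P_0x$. (You can check this against the vertices of the feasible triangle: at $(x,a)=(0,0)$ the objective equals $e^{-\rho t}\mathbb{E}(D_tP_t)e^{rt}/P_0$, while at $(0,1)$ it equals $e^{-\rho t}$; the former is strictly larger.) The fact that both $P_0x$ and $1-P_0x$ coincide at $x=\tfrac{1}{2P_0}$ hides the error in the final numbers but not in the logic.

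Once you substitute the correct $a^\star=P_0x$, the coefficient of $x$ becomes
\[
\int_0^t\phi_s e^{-\rho s}\,ds \;-\; e^{-\rho t}\Bigl\{\mathbb{E}\!\left[\bigl(\sqrt{D_tP_t}-\sqrt{P_0}\bigr)^2\right]+\mathbb{E}(D_tP_t)e^{rt}-P_0\Bigr\},
\]
i.e.\ the opportunity-cost term enters with the \emph{opposite} sign from what you wrote. This quantity is \emph{not} unconditionally positive under~(\ref{assump: stake via LSP}) alone: both summands in braces are nonnegative, so positivity is equivalent to the fee condition $\int_0^t\phi_s e^{-\rho s}\,ds>\Phi(t)$ of Theorem~\ref{thm: bounds for fees}. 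In the paper, Proposition~\ref{prop: optimal allocation solution} and Theorem~\ref{thm: bounds for fees} are proved together via Lemmas in the appendix: one shows $a^\star=P_0x^\star$ from the sign of the $a$-coefficient, and the case analysis on the fee level then yields $x^\star=\tfrac{1}{2P_0}$ precisely when the fee exceeds $\Phi(t)$ (otherwise $x^\star=0$ under Assumption~\ref{assump: tie-breaking}). Your plan to derive $x^\star=\tfrac{1}{2P_0}$ from~(\ref{assump: stake via LSP}) alone, without invoking the fee lower bound, cannot succeed.
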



\begin{remark}
    This proposition offers several important insights. First, if an investor retains a portion of ETH instead of staking the full amount of 1 ETH, then the optimal strategy is to allocate the entire unstaked portion to the AMM pool. This follows from the fact that $y^\star = a^\star = \frac{1}{2}$. The rationale is that if transaction fees provide sufficient incentive for participation, then there's no reason to withhold any remaining ETH. 
    
    \quad Second, $a^\star = \frac{1}{2}$ can be interpreted as an equal-risk allocation strategy: half of the ETH is staked through the LSP, and the other half is allocated to provide liquidity in the AMM pool. This strategy is distinctive to CPMMs, as CPMMs require that the supplied LST and ETH be of equal value (as noted in Remark \ref{rmk: CPMM equal value}).
    
    \label{rmk: opt allocation risk}
\end{remark}

\begin{theorem}
    Let 
    \begin{align*}  
    \Phi \left( t\right) &= P_0 \left[ e^{\left(g-m-\rho \right)t} \left( e^{rt}+1 \right) - 2  e^{\left(\frac{1}{2}g - \frac{1}{8} \sigma^2 - \frac{1}{2}m - \rho \right)t}  \right].
    \end{align*}
$x^\star = \frac{1}{2P_0}$ and $a^\star = \frac{1}{2}$ if and only if:
\begin{equation*}
    \int_{0}^t \phi_s e^{-\rho s} ds > \Phi \left( t\right).
\end{equation*} \label{thm: bounds for fees} 
\end{theorem}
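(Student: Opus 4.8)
The plan is to reduce Problem~(\ref{prob: double max prob}) to a one-dimensional affine program in $x$, read off its optimizer, and convert the optimality of the nonzero corner into the claimed fee inequality by evaluating two explicit moments of the geometric Brownian motion $P_t$. I would first eliminate $a$: in the objective of~(\ref{prob: double max prob}) the coefficient of $a$ is $-e^{-\rho t}\bigl[\mathbb{E}(D_tP_t)e^{rt}/P_0-1\bigr]$, and since $P_t=P_0 e^{(g-\frac12\sigma^2)t+\sigma B_t}$ we have $\mathbb{E}(D_tP_t)=e^{-mt}\mathbb{E}(P_t)=P_0 e^{(g-m)t}$, so this coefficient equals $-e^{-\rho t}\bigl(e^{(g+r-m)t}-1\bigr)$. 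This is strictly negative for $t>0$ because $g+r-m>\rho\ge 0$ by~(\ref{assump: stake via LSP}) (which gives $g+r-m-\rho>\ln P_0\ge 0$). Hence the objective is strictly decreasing in $a$, and the feasibility window $P_0x\le a\le 1-P_0x$ forces $a=P_0x$ at every optimum. This is the one step I would flag as requiring a genuine sign check rather than routine algebra, and it is exactly where~(\ref{assump: stake via LSP}) enters.

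Substituting $a=P_0x$, the objective becomes affine in $x$ on $[0,\tfrac1{2P_0}]$ with slope
\[
 s(t)\;:=\;-e^{-\rho t}\,\mathbb{E}\!\left[\bigl(\sqrt{D_tP_t}-\sqrt{P_0}\bigr)^2\right]\;+\;\int_0^t\phi_s e^{-\rho s}\,ds\;-\;e^{-\rho t}\bigl(e^{(g+r-m)t}-1\bigr)P_0 .
\]
An affine function on an interval is maximized at an endpoint, so the optimizer is $x=\tfrac1{2P_0}$ (hence $a=P_0x=\tfrac12$, matching Proposition~\ref{prop: optimal allocation solution}) when $s(t)>0$ and $x=0$ when $s(t)<0$; by the tie-breaking Assumption~\ref{assump: tie-breaking} it is also $x=0$ when $s(t)=0$. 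Therefore $(x^\star,a^\star)=\bigl(\tfrac1{2P_0},\tfrac12\bigr)$ if and only if $s(t)>0$, i.e.\ if and only if $\int_0^t\phi_s e^{-\rho s}\,ds>\widetilde\Phi(t)$ with $\widetilde\Phi(t):=e^{-\rho t}\mathbb{E}[(\sqrt{D_tP_t}-\sqrt{P_0})^2]+e^{-\rho t}(e^{(g+r-m)t}-1)P_0$.

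It then remains to verify $\widetilde\Phi(t)=\Phi(t)$. Expanding $(\sqrt{D_tP_t}-\sqrt{P_0})^2=D_tP_t-2\sqrt{P_0D_tP_t}+P_0$ and using $\mathbb{E}(D_tP_t)=P_0 e^{(g-m)t}$ together with $\mathbb{E}(\sqrt{D_tP_t})=\sqrt{P_0}\,e^{-\frac12 mt}\,\mathbb{E}\bigl(e^{\frac12(g-\frac12\sigma^2)t+\frac12\sigma B_t}\bigr)=\sqrt{P_0}\,e^{(\frac12 g-\frac18\sigma^2-\frac12 m)t}$ (the last step from $\mathbb{E} e^{\frac12\sigma B_t}=e^{\frac18\sigma^2 t}$), one gets $e^{-\rho t}\mathbb{E}[(\sqrt{D_tP_t}-\sqrt{P_0})^2]=P_0\bigl(e^{(g-m-\rho)t}-2e^{(\frac12 g-\frac18\sigma^2-\frac12 m-\rho)t}+e^{-\rho t}\bigr)$, while $e^{-\rho t}(e^{(g+r-m)t}-1)P_0=P_0\bigl(e^{(g+r-m-\rho)t}-e^{-\rho t}\bigr)$. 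Adding these, the two $\pm P_0 e^{-\rho t}$ terms cancel and the remainder factors as $P_0\bigl[e^{(g-m-\rho)t}(1+e^{rt})-2e^{(\frac12 g-\frac18\sigma^2-\frac12 m-\rho)t}\bigr]=\Phi(t)$, completing the argument.

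Aside from this last computation, the only obstacle is organizational: the result is a one-dimensional linear-programming argument plus the tie-breaking convention, with the substantive content being (i) the sign of the $a$-coefficient, which forces the equal-risk split $a=P_0x$ and uses~(\ref{assump: stake via LSP}), and (ii) making the two GBM moments cancel cleanly so that $\widetilde\Phi$ collapses to the compact closed form $\Phi$.
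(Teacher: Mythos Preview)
Your argument is correct and follows essentially the same route as the paper's own proof: the paper packages the same three steps (the sign of the $a$-coefficient forcing $a^\star=P_0x^\star$ via Assumption~(\ref{assump: stake via LSP}); the affine corner analysis in $x$ with the tie-breaking assumption giving both sufficiency and necessity; and the identification of the threshold with $\Phi(t)$) into three short lemmas, while you carry it out in one pass and are more explicit about the two GBM moment computations that collapse $\widetilde\Phi$ into $\Phi$. There is no substantive difference in method.
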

\vspace{-1em}
This theorem establishes a lower bound on the transaction fees required to incentivize liquidity provision in the AMM. The first term of $\Phi_t$, given by $e^{-\rho t} x^\star\mathbb{E} \left[  \left(\sqrt{D_tP_t} - P_0 \right)^2 \right]$, is to offset the impermanent loss. The second term, $e^{-\rho t} a^\star \left[ \frac{\mathbb{E} \left( D_tP_t\right)e^{rt}}{P_0} - 1 \right]$, is to compensate for the opportunity cost faced by an investor who chooses to stake stake $1-a^\star = \frac{1}{2}$ ETH in the AMM rather than staking the full amount of 1 ETH through an LSP. Unlike other tokens, where the transaction fees provided by the AMM only accounts for impermanent loss, in this case, the fees must also compensate for the opportunity cost associated with staking. This feature is unique to liquid staking tokens.


\quad Automated Market Makers (AMMs) seek to minimize the transaction fees paid to liquidity providers (LPs) while maintaining sufficient incentives for LPs to supply liquidity, as lower fees can also enhance incentives for traders. Therefore, we aim to identify the lower bounds of transaction fees and to characterize fee structures that achieve these bounds. Specifically, we investigate functions $\phi_t$ for $t \geq 0$ that satisfy the following condition:
\begin{equation*}
    \int_{0}^t \phi_s e^{-\rho s} ds =  \Phi \left( t\right).
\end{equation*}

\begin{proposition}

There exists a function $\phi_t$ for $t \geq 0$ that attains the lower bounds established in Theorem~\ref{thm: bounds for fees}. This function is given explicitly by
 \begin{align*}
    \phi_t = P_0 \left[ 2\left(g-m-\rho \right) e^{\left(g-m-\rho \right)t }  - \left(g - \frac{1}{4} \sigma^2 - m -2 \rho \right) e^{\left(\frac{1}{2}g - \frac{1}{8} \sigma^2 - \frac{1}{2}m - \rho \right)t} \right],
    \end{align*}
for reward-bearing liquid staking tokens, where $g > 0$ and $r=0$; and
\begin{align*}
    \phi_t = P_0 \left[ \left(r-m-\rho \right) e^{\left(r-m-\rho \right)t } -\left( m+\rho\right) e^{\left(-m-\rho \right)t} + \left(  \frac{1}{4} \sigma^2 + m +2 \rho \right) e^{\left( - \frac{1}{8} \sigma^2 - \frac{1}{2}m - \rho \right)t} \right],
\end{align*}
for rebasing liquid staking tokens, where $g = 0$ and $r>0$.
\end{proposition}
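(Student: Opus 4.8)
The plan is to treat the defining requirement $\int_0^t \phi_s e^{-\rho s}\,ds = \Phi(t)$ as an integral equation for the fee-rate function $\phi$ and to invert it via the fundamental theorem of calculus. The first thing to verify is the compatibility condition at the origin: setting $t=0$ in $\Phi$ gives $P_0\bigl[(e^{r\cdot 0}+1)-2\bigr]=P_0(2-2)=0$, which holds for both token types, so the two sides of the identity already agree at $t=0$. Since $\Phi$ is smooth, the identity is then equivalent to its differentiated form, which expresses $\phi_t$ through $\Phi'(t)$ (and conversely, the $\phi$ so obtained satisfies the identity because both sides then agree at $t=0$ and have equal derivatives). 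This simultaneously gives existence and reduces the proposition to producing $\Phi'$ in closed form.

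Next I would compute $\Phi'$. The only preliminary simplification is to expand the product $e^{(g-m-\rho)t}\!\left(e^{rt}+1\right)=e^{(g+r-m-\rho)t}+e^{(g-m-\rho)t}$, so that $\Phi$ is, up to the prefactor $P_0$, a linear combination of the three exponentials $e^{(g+r-m-\rho)t}$, $e^{(g-m-\rho)t}$ and $e^{(\frac12 g-\frac18\sigma^2-\frac12 m-\rho)t}$; each is then differentiated termwise, pulling down its exponent. Specializing to the two regimes finishes the computation. For a reward-bearing token ($g>0$, $r=0$) the first two exponentials coincide and merge into $2e^{(g-m-\rho)t}$, and rewriting the coefficient of the third exponential via $2\bigl(\tfrac12 g-\tfrac18\sigma^2-\tfrac12 m-\rho\bigr)=g-\tfrac14\sigma^2-m-2\rho$ yields the first displayed formula. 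For a rebasing token ($g=0$, $r>0$) the three exponentials stay distinct, with exponents $r-m-\rho$, $-m-\rho$ and $-\tfrac18\sigma^2-\tfrac12 m-\rho$; differentiating and recording the sign of the last coefficient as $+\bigl(\tfrac14\sigma^2+m+2\rho\bigr)$ (again since $2\bigl(-\tfrac18\sigma^2-\tfrac12 m-\rho\bigr)=-(\tfrac14\sigma^2+m+2\rho)$) yields the second displayed formula.

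I would close by noting the converse check, which is immediate: integrating each claimed $\phi_t$ against $e^{-\rho s}$ over $[0,t]$ reproduces exactly the three-term (respectively two-term) exponential expression for $\Phi(t)$, so the constructed rate attains --- and does not merely bound --- the fee threshold of Theorem~\ref{thm: bounds for fees}. I do not anticipate a genuine obstacle: the substance is the fundamental theorem of calculus, and the only points warranting a moment's care are the compatibility condition $\Phi(0)=0$ (without which no bona fide rate function $\phi$ could reproduce $\Phi$, only a measure with an atom at the origin), keeping the two token regimes separate, and the elementary coefficient identities displayed above.
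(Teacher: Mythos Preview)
Your approach via the fundamental theorem of calculus is the natural one, and the paper offers no explicit proof beyond stating the formulas, so at the level of strategy there is nothing to distinguish. There is, however, a slip in your execution that you should not gloss over. Differentiating the defining identity $\int_0^t \phi_s e^{-\rho s}\,ds=\Phi(t)$ gives $\phi_t\,e^{-\rho t}=\Phi'(t)$, hence $\phi_t=e^{\rho t}\Phi'(t)$, \emph{not} $\phi_t=\Phi'(t)$. What you actually compute is $\Phi'(t)$, and this indeed reproduces the displayed formulas in the proposition term for term --- but that means the proposition's $\phi_t$ is literally $\Phi'(t)$, and therefore satisfies $\int_0^t \phi_s\,ds=\Phi(t)$ rather than $\int_0^t \phi_s e^{-\rho s}\,ds=\Phi(t)$.

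Concretely, your closing ``converse check'' --- integrating the claimed $\phi_t$ against $e^{-\rho s}$ and recovering $\Phi(t)$ --- would \emph{fail} as written: each exponent would acquire an extra $-\rho$, and the coefficients would no longer telescope to the clean form of $\Phi$. (For instance, in the reward-bearing case the first term would integrate to $\tfrac{2(g-m-\rho)}{g-m-2\rho}\bigl(e^{(g-m-2\rho)t}-1\bigr)$, not $2e^{(g-m-\rho)t}-2$.) So either the paper is tacitly treating $\phi$ as the already-discounted fee rate, or there is a missing $e^{\rho t}$ factor in the stated formulas. Your write-up should surface this inconsistency rather than asserting the verification goes through; the rest of the argument --- the compatibility condition $\Phi(0)=0$, the termwise differentiation, and the coefficient identities $2\bigl(\tfrac12 g-\tfrac18\sigma^2-\tfrac12 m-\rho\bigr)=g-\tfrac14\sigma^2-m-2\rho$ --- is fine.
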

\label{fee structure}





\quad Although transaction fees are necessary to incentivize liquidity providers to lock in their assets, it is important to prevent them from growing unrealistically. To address this, we set an upper bound on the fees. Secifically, we set the cumulative transaction fee for both reward-bearing and rebasing tokens to be at most $2P_0K$, where $K$ is a constant. Here, we denote the upper limit by $2P_0K$ instead of $K$ for the ease of notation in the following sections.

\section{Optimal Exit Time}
\label{sc: optimal time to exit}
\quad In this section, we analyze the optimal time for an investor to exit the AMM pool. Section \ref{sc: optimal exit problem formulation} formulates the problem in the framework of a free boundary problem. Seciond \ref{sc: fixed stopping levels} considers fixed stopping levels. Section \ref{sc: without transaction fees} analyzes the properties of the objective function and the optimal solution in the absence of transaction fees, and Section \ref{sc: with transaction fees} explores the characteristics of the problem with transaction fees.

\subsection{Formulation: Free Boundary Problem} \label{sc: optimal exit problem formulation}
Given the protocol design specified in the preceding sections, we have $a^\star = \frac{1}{2}$ and $x^\star = \frac{1}{2P_0}$. Accordingly, the investor’s total portfolio value at time $t$, together with the accrued fees, is given by
\begin{align*}
     S\left(t, P_t\right) = \frac{1}{2P_0} \int_{0}^t \phi_s e^{-\rho s} ds + e^{-\rho  t} \left[ \frac{\left(e^{rt} - 1\right)D_t P_t}{2 P_0} + \sqrt{\frac{D_t P_t}{P_0}} \right].
\end{align*}
Compared to pure staking through an LSP, the present value of the net profit from providing liquidity to an AMM pool at time $t$ is:
\begin{align}
    W \left(t, P_t\right) &=  S\left(t, P_t\right) - e^{-\rho t} \frac{e^{rt} D_t P_t}{P_0}, \notag \\
    &=\frac{1}{2 P_0} \int_{0}^t \phi_s e^{-\rho s}  ds +  e^{-\rho t} \,  \left( - \frac{e^{rt} D_t P_t}{2 P_0} + \sqrt{\frac{D_t P_t}{P_0}} - \frac{ D_t P_t}{2 P_0} \right).
    \label{eq: W(t,Pt)}
\end{align}



We consider the optimal stopping problem with the value function
\begin{equation}
    V_{\star}\left(P \right) = \sup_{\tau \in \mathcal{T}} \ \mathbb{E} \left[W\left(\tau, P_{\tau}\right)\right],
    \label{eq: OST}
\end{equation}
where $\mathcal{T}$ denotes the family of stopping times that we are interested in. 

\quad Let $\tilde{P_t} = \dfrac{P_t}{P_0}$. 
We have
\begin{equation*}
    d\tilde{P_t} = g\tilde{P_t} \, dt+ \sigma\tilde{P_t} \, dB_t.
\end{equation*}


Substituting the transaction fee specified in Theorem~\ref{thm: bounds for fees}, together with the imposed upper bound, into Eq.~\eqref{eq: W(t,Pt)}, we define the corresponding reward function as follows:
\begin{equation}
    Z\left(t,x \right) = \min \left\{\frac{1}{2}e^{\left(g-m-\rho \right)t} \left( e^{rt}+1 \right) -   e^{\left(\frac{1}{2}g - \frac{1}{8} \sigma^2 - \frac{1}{2}m - \rho \right)t}, K \right\} + e^{-\rho t} \left( - \frac{e^{(r-m)t} x}{2 } + e^{-\frac{1}{2}mt} \sqrt{x} - \frac{ e^{-mt} x}{2 } \right).
    \label{eq: Z(t,x)}
\end{equation}

The objective is to determine a stopping time $\tau$ that maximizes
\begin{equation*}
    \mathbb{E}^{s,x} \left[ Z \left(\tau, \tilde{P}_{\tau} \right) \right],
\end{equation*}
where $\mathbb{E}^{s,x} \left( \cdot, \cdot\right)$ denotes the expectation conditional on the process starting at time $s$ and state $x$. In particular, setting $s = 0$ yields the optimal solution to our problem. 

\quad Let $Y_t = \left( s+t, \tilde{P}_t \right)$. The characteristic operator $\hat{\mathcal{A}}$ associated with $Y_t$ is given by
\begin{equation*}
    \hat{\mathcal{A}} f\left(s,x \right) = \frac{\partial f}{\partial s} + gx\frac{\partial f}{\partial x} + \frac{1}{2} \sigma^2 x^2 \frac{\partial^2 f}{\partial x^2}, \quad f \in C^2 \left(\mathbb{R}^2 \right).
\end{equation*}

Let $D$ denote the continuation region, and define the subset
\begin{equation*}
    U \coloneqq \left\{  \left(s,x \right): \hat{\mathcal{A}} Z\left(s,x \right) > 0\right\} \subset D.
\end{equation*}
The set $\left\{x: x \in U \right\}$ takes the form $B_1 \left( s\right)<x<B_2\left( s\right)$, where $B_1, B_2$ are functions of $s$. Accordingly, we consider the optimal stopping time of the form $\tau^\star = \inf \left\{ t \geq 0: B_1 \left(t\right) \leq \tilde{P_t} \leq B_2 \left(t \right)  \right\}$.
Let $V_{\star} \left( s,x\right) = \mathbb{E}^{s,x} \left[ Z \left(\tau^\star, \tilde{P}_{\tau^\star} \right) \right]$. The value function $V_\star$ and the optimal stopping boundary $B_1\left(s\right)$ and $B_2\left(s\right)$ can be obtained by solving the following free boundary problem \citep{oksendal1998,Peskir2006}:


\begin{equation}
\begin{aligned}
    \hat{\mathcal{A}} V_{\star} &= 0, \quad \text{ for } B_1 \left(s\right) < x <B_2 \left(s\right),\\
    V_{\star} \left(s, B_1 \left(s\right) \right) &= Z\left(s, B_1 \left( s\right) \right),\\
    V_{\star} \left(s, B_2 \left(s\right) \right) &= Z\left(s, B_2 \left( s\right) \right),\\
    \frac{\partial V_\star \left( s,x \right
    )}{\partial x} \left|_{x = B_1 (s)^+} \right. &= \frac{\partial Z \left(s,x \right)}{\partial x} \left|_{x = B_1 (s)^+} \right.,\\
    \frac{\partial V_\star \left( s,x \right
    )}{\partial x} \left|_{x = B_2 (s)^-} \right. &= \frac{\partial Z \left(s,x \right)}{\partial x} \left|_{x = B_2 (s)^-} \right.,
\end{aligned}
\label{prob: free boundary}
\end{equation}

In general, solving the free-boundary problem \eqref{prob: free boundary} is analytically intractable. We therefore begin by restricting the analysis to stopping times with fixed stopping levels. We then use numerical experiments to investigate the free-boundary solution and examine how it differs from the fixed-level strategy.



\subsection{Fixed Stopping Levels}
\label{sc: fixed stopping levels}
Define $\tilde{L} = \frac{L}{P_0}$, $ d = \frac{\sigma}{2} - \frac{g}{\sigma}$, and $c = \frac{1}{\sigma} \ln \tilde{L}$.
The stopping time $\tau = T_{c,d}$ is defined as
\begin{equation*}
    T_{c,d}=\inf\left\{t \geq 0: P_t \geq L \right\} = \inf\left\{t \geq 0: B_t = c  + dt \right\},
\end{equation*} where $L > P_0$ and $c > 0$, or
\begin{equation*}
    T_{c,d}=\inf\left\{t \geq 0: P_t \leq L \right\} = \inf\left\{t \geq 0: B_t = c  + dt \right\},
\end{equation*}
where $L < P_0$ and $c < 0$. The set of stopping times we are interested in is defined as $\mathcal{T} = \left\{T_{c,d}:c \in \mathbb{R}, c\neq 0 \right\}$. According to \citet{10.3150/bj/1068129008}, the Laplace transform of the stopping time is given as
\begin{align*}
    \mathbb{E} \left( e^{-\lambda T_{c,d}}\right) =\begin{cases}
        \exp \left( -c\left( d + \sqrt{d^2 + 2\lambda}\right)\right), & \text{ when } c  > 0,\\
        \exp \left( -c\left( d - \sqrt{d^2 + 2\lambda}\right)\right), & \text{ when } c  < 0.
    \end{cases} 
\end{align*}
Since $\tilde{L} = e^{\sigma c}$, when $c>0$, our objective value can be expressed as
\begin{equation}
\begin{aligned}
    \mathbb{E} \left[ W 
    \left(T_{c,d}, P_{T_{c,d}} \right)
    \right] &=\min \left\{\frac{1}{2} e^{-c\left( d + \sqrt{d^2 - 2\left(g+r-m-\rho\right)}\right)} + \frac{1}{2} e^{-c\left( d + \sqrt{d^2 - 2\left(g-m-\rho\right)}\right)} \right.\\
    & \left. \quad \quad \quad \ \  - e^{-c\left( d + \sqrt{d^2 - \left(g - \frac{1}{4} \sigma^2 - m - 2\rho\right)}\right)}, K \right\} \\
    & \quad +  e^{-c\left( -\frac{1}{2} \sigma+ d + \sqrt{d^2 + m + 2\rho}\right)} -\frac{1}{2} e^{-c\left( - \sigma+ d + \sqrt{d^2 - 2\left(r-\rho-m\right)}\right)} \\
    & \quad - \frac{1}{2}  e^{-c\left( -\sigma+ d + \sqrt{d^2 + 2\left(\rho+m\right)}\right)}. 
\end{aligned}
\label{eq: obj value for OST c > 0}
\end{equation}
Similarly, when $c < 0$, 
\begin{equation}
\begin{aligned}
    \mathbb{E} \left[ W 
    \left(T_{c,d}, P_{T_{c,d}} \right)
    \right] &=\min \left\{\frac{1}{2} e^{-c\left( d - \sqrt{d^2 - 2\left(g+r-m-\rho\right)}\right)} + \frac{1}{2} e^{-c\left( d - \sqrt{d^2 - 2\left(g-m-\rho\right)}\right)} \right.\\
    & \left. \quad \quad \quad \ \  - e^{-c\left( d - \sqrt{d^2 - \left(g - \frac{1}{4} \sigma^2 - m - 2\rho\right)}\right)}, K \right\} \\
    & \quad +  e^{-c\left( -\frac{1}{2} \sigma+ d - \sqrt{d^2 + m + 2\rho}\right)} -\frac{1}{2} e^{-c\left( - \sigma+ d - \sqrt{d^2 - 2\left(r-\rho-m\right)}\right)} \\
    & \quad - \frac{1}{2}  e^{-c\left( -\sigma+ d - \sqrt{d^2 + 2\left(\rho+m\right)}\right)}. 
\end{aligned}
\label{eq: obj value for OST c < 0}
\end{equation}
\quad We make the following assumptions.

\begin{assumption}
    \begin{align*}
        \left( \frac{\sigma}{2} - \frac{g}{\sigma} \right)^2 - 2\left( g+r-m-\rho \right) &> 0, \\
         \left( \frac{\sigma}{2} - \frac{g}{\sigma} \right)^2 - \left( g - \frac{1}{4} \sigma^2 - m -2\rho\right) &> 0, \\
         d = \frac{\sigma}{2} - \frac{g}{\sigma} & >0, \\
        \frac{1}{4} \sigma^2 - m &> 0.
    \end{align*}
    \label{assump: range for paras}
\end{assumption}

    These conditions ensure that the investor’s problem is well-posed and admits non-degenerate solutions. In particular, they reflect the fact that the reward rate $r$ for rebasing tokens, and the price growth rate $g$ for reward-bearing tokens are moderate relative to the uncertainty in the market, represented by $\sigma$. This is consistent with the high volatility typically observed in crypto markets, where token prices can fluctuate significantly over short horizons. Moreover, the discounting factor $m$ captures the potential delay when the liquid staking protocol (LSP) lacks sufficient ETH for immediate withdrawal and must unstake from the chain, as discussed in Section \ref{sc: operational details}. The parameter $\rho$ is used to discount future payoffs to present values. Both of them should be modest compared to volatility.



\quad Let $T_i^+$ denote the $i$-th term term in Eq. (\ref{eq: obj value for OST c > 0}), and $T_i^-$ denote the $i$-th term in Eq. (\ref{eq: obj value for OST c < 0}) (assuming there's no upper limit $K$), $i = \left[ 6\right]$.The explicit expressions of $T_i^+$ and $T_i^-$ are provided in Appendix~\ref{appx: Ti}.

\quad Although from Section \ref{sc: optimal allocation}, the net profit for the investor from providing liquidity to an AMM pool without transaction fees, relative to pure staking, is always negative under Assumption \ref{assump: stake via LSP}, as given by
\begin{align*}
    -e^{-\rho t}x^\star \mathbb{E} \left[ \left( \sqrt{D_tP_t} - \sqrt{P_0} \right)^2 \right] - e^{-\rho t} \,\left[\frac{\mathbb{E}\left(D_tP_t\right) e^{rt}}{P_0} - 1 \right]a^\star.
\end{align*}
 We demonstrate in the following section that a wisely chosen stopping time can yield a positive profit.

\subsection{No Transaction Fees}
\label{sc: without transaction fees}

We examine the investor's payoff in the absence of transaction fees, given by,
\begin{equation*}
    M\left(t, P_t \right) =   e^{-\rho t} \,  \left( - \frac{e^{rt} D_t P_t}{2 P_0} + \sqrt{\frac{D_t P_t}{P_0}} - \frac{ D_t P_t}{2 P_0} \right). 
\end{equation*}
Therefore, our objective function can be expressed as
\begin{align*}
    \mathbb{E} \left[ M \left( T_{c,d}, P_{T_{c,d}} \right) \right] =\begin{cases}
    T_4^+ + T_5^+ + T_6^+, &\text{ when } c > 0, \\
     T_4^- + T_5^- + T_6^- ,  &\text{ when } c < 0.
\end{cases}
\end{align*}

By analyzing the terms $T_1^+, T_1^-,\dots, T_6^+,T_6^-$, we establish the following result.
\begin{proposition}
The following results hold:
\begin{enumerate}
    \item $\mathbb{E} \left[ M \left(T_{c,d}, P_{T_{c,d}} \right) \right] > 0  \ \text{ when } c < 0$; 
    \item $\mathbb{E} \left[ M \left(T_{c,d}, P_{T_{c,d}} \right) \right] < 0 \ \text{ when } c >0$.
\end{enumerate}
\label{prop: no fees obj value pos neg}
\end{proposition}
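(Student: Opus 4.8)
The plan is to work directly with the explicit closed forms of the six terms $T_i^{\pm}$ collected in Appendix~\ref{appx: Ti}, and to show that in each regime the relevant sum $T_4^{\pm}+T_5^{\pm}+T_6^{\pm}$ has a definite sign. Recall that $M(t,P_t)=e^{-\rho t}\bigl(-\tfrac{e^{rt}D_tP_t}{2P_0}+\sqrt{D_tP_t/P_0}-\tfrac{D_tP_t}{2P_0}\bigr)$, so the three pieces correspond to the staking-reward term ($T_5$, carrying $e^{rt}$), the square-root / impermanent-gain term ($T_4$), and the pure-discount term ($T_6$). Each $T_i^{\pm}$ is of the form $\pm\tfrac12 e^{-c\beta_i^{\pm}}$ (with a sign) where $\beta_i^{\pm}=-\kappa_i\sigma+d\pm\sqrt{d^2+\gamma_i}$ for appropriate constants $\kappa_i\in\{\tfrac12,1\}$ and $\gamma_i$ determined by the Laplace-transform formula. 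Since $c>0$ in case (2) and $c<0$ in case (1), the sign of each exponential contribution is governed by the sign of the exponent $\beta_i^{\pm}$, and the whole proposition reduces to comparing these exponents.

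First I would rewrite $M(t,P_t)$, using $D_t=e^{-mt}$ and $\tilde P_t=P_t/P_0$, and factor out the common $\sqrt{D_tP_t/P_0}=e^{-mt/2}\sqrt{\tilde P_t}$, getting
\[
M(t,P_t)=e^{-mt/2}\sqrt{\tilde P_t}\,e^{-\rho t}\Bigl(1-\tfrac12 e^{(r-m/2)t}\sqrt{\tilde P_t}-\tfrac12 e^{-mt/2}\sqrt{\tilde P_t}\Bigr).
\]
On the event $\{T_{c,d}<\infty\}$ we have $P_{T_{c,d}}=L$, i.e. $\tilde P_{T_{c,d}}=\tilde L=e^{\sigma c}$, so $\sqrt{\tilde P_{T_{c,d}}}=e^{\sigma c/2}$. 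Hence on a pathwise level, at the stopping time, $M$ equals
\[
e^{-mT/2}e^{\sigma c/2}e^{-\rho T}\Bigl(1-\tfrac12 e^{(r-m/2)T}e^{\sigma c/2}-\tfrac12 e^{-mT/2}e^{\sigma c/2}\Bigr).
\]
When $c<0$, $e^{\sigma c/2}<1$; I would argue that the subtracted terms are then dominated by $1$ for the relevant range of $T$ (using Assumption~\ref{assump: range for paras}, in particular $\tfrac14\sigma^2-m>0$ and the well-posedness inequalities, which control the admissible drifts), so the bracket is positive and therefore $M(T_{c,d},P_{T_{c,d}})>0$ pathwise, giving part~(1) after taking expectations. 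When $c>0$, $e^{\sigma c/2}>1$ and one expects the negative terms to win; here the pathwise argument is less clean, so I would instead fall back on the explicit sum $T_4^+ + T_5^+ + T_6^+$ and show term-by-term, or by grouping $T_4^++T_6^+$ against $T_5^+$, that it is negative — this is where the exponent comparison above does the work: $T_4^+$ (the $+$ contribution) has exponent with $\kappa=\tfrac12$ while $T_5^+,T_6^+$ (the $-$ contributions) have $\kappa=1$, and one checks $\beta_4^+ > \beta_5^+,\beta_6^+$ so that $e^{-c\beta_4^+}$ is the smallest of the three, i.e. the positive term cannot compensate the two negative ones.

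The main obstacle will be case (2): unlike case (1), there is no uniform pathwise sign, and one must exploit the precise square-root exponents, showing $-\tfrac12\sigma+d+\sqrt{d^2+m+2\rho}$ strictly exceeds both $-\sigma+d+\sqrt{d^2-2(r-\rho-m)}$ and $-\sigma+d+\sqrt{d^2+2(\rho+m)}$. The second comparison is immediate since $\sqrt{d^2+m+2\rho}<\sqrt{d^2+2m+2\rho}$ is false in the wrong direction — so care is needed: one actually compares $\tfrac12\sigma+\sqrt{d^2+m+2\rho}$ with $\sqrt{d^2+2\rho+2m}$, and squaring reduces it to a bound of the form $\sigma\sqrt{d^2+m+2\rho}>m-\tfrac14\sigma^2$, which holds because the right side is negative by Assumption~\ref{assump: range for paras} ($\tfrac14\sigma^2-m>0$). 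The first comparison similarly reduces, after isolating the square roots and squaring, to an inequality that follows from $d>0$ and $r-\rho-m$ being controlled by $d^2$. Once these two exponent inequalities are in hand, $|T_4^+|<|T_5^+|$ and $|T_4^+|<|T_6^+|$ for $c>0$, hence $T_4^++T_5^++T_6^+ < T_5^++T_6^+ + \min(|T_5^+|,|T_6^+|) < 0$ after a final bookkeeping check on the coefficients $\tfrac12,\tfrac12,1$, completing part~(2). The analogous (easier) exponent inequalities with the minus-sign roots, together with the pathwise argument, close part~(1).
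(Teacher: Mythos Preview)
Your exponent-comparison strategy for part~(2) is essentially the paper's route. The paper packages the two inequalities you isolate (that $-\tfrac12\sigma+d+\sqrt{d^2+m+2\rho}$ strictly exceeds each of the other two exponents) as a lemma establishing $\tfrac12 T_4^{+}+T_5^{+}<0$ and $\tfrac12 T_4^{+}+T_6^{+}<0$, and then simply adds. Your squaring reduction for the $T_6^{+}$ comparison, landing on $\sigma\sqrt{d^2+m+2\rho}>m-\tfrac14\sigma^2$ with the right side negative by Assumption~\ref{assump: range for paras}, is exactly the paper's step. One bookkeeping slip: from $\beta_4^{+}>\beta_5^{+}$ you only obtain $|T_4^{+}|<2|T_5^{+}|$, not $|T_4^{+}|<|T_5^{+}|$, because $T_4^{+}$ carries coefficient $1$ while $T_5^{+}$ carries $\tfrac12$. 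You therefore genuinely need \emph{both} exponent inequalities---equivalently, the half-splitting of $T_4^{+}$ between $T_5^{+}$ and $T_6^{+}$---to conclude. Your closing chain does in effect use both, but the inequalities you state just before it are too strong as written.

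The real gap is the pathwise argument for part~(1). The bracket
\[
1-\tfrac12\, e^{(r-m/2)T}e^{\sigma c/2}-\tfrac12\, e^{-mT/2}e^{\sigma c/2}
\]
is \emph{not} positive for all realizations of $T=T_{c,d}$. When $c<0$ the hitting time is a.s.\ finite but unbounded in distribution, and for rebasing tokens one has $r>m+\rho>m/2$ under Assumption~\eqref{assump: stake via LSP}, so $e^{(r-m/2)T}\to\infty$ and the bracket tends to $-\infty$ along paths with large $T$. Nothing in the standing assumptions caps $T$ from above, so the ``relevant range of $T$'' clause cannot be made rigorous. Part~(1) must therefore also be handled by the Laplace-transform exponent comparison (now with the $d-\sqrt{\cdot}$ roots), exactly as you outline for part~(2); this is what the paper does, again pairing $\tfrac12 T_4^{-}$ with $T_5^{-}$ and with $T_6^{-}$ separately and summing.
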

The proof of Propostion \ref{prop: no fees obj value pos neg} is provided in Appendix \ref{appx: proof of results in sc 4}.

\begin{remark}
    The proposition demonstrates that selecting a stopping level $c < 0$, which corresponds to $L < P_0$, can lead to a positive profit, despite the fact that
    \begin{equation*}
        -e^{-\rho t}x^\star \mathbb{E} \left[ \left( \sqrt{D_tP_t} - \sqrt{P_0} \right)^2 \right] - e^{-\rho t} \,\left[\frac{\mathbb{E}\left(D_tP_t\right) e^{rt}}{P_0} - 1 \right]a^\star < 0,
    \end{equation*}
    as discussed at the end of Section \ref{sc: optimal exit problem formulation}.
\end{remark}

\quad At first glance, it may appear counterintuitive that a positive profit can happen when $c < 0$. Under standard expectations, an investor’s total profit and loss cannot be positive without earning transaction fees. However, in an optimal stopping framework, the stopping time is path-dependent. This allows the investor to selectively stop along trajectories where the total profit and loss happens to be positive. Proposition~\ref{prop: no fees obj value pos neg} formally demonstrates that it is indeed possible to achieve a positive total profit and loss for certain values of $c < 0$. To further understand the origin of this phenomenon, we decompose the investor’s net profit into two components, as presented in the following proposition.

\begin{proposition}
In the absence of transaction fees, the investor's net profit at time 
$t$, relative to pure staking, can be decomposed as follows:
\begin{equation*}
    M \left(t, P_t \right) = ST\left(t,P_t \right) + IL\left(t,P_t \right), 
\end{equation*}
where 
\begin{equation}
\begin{aligned}
    ST \left( t, P_t\right) &= -\frac{1}{2} e^{-\rho t} \left ( \frac{D_t P_t e^{rt}}{P_0} - 1\right), \\
    IL \left( t, P_t\right) &= -\frac{1}{2P_0} e^{-\rho t} \left( D_tP_t + P_0 - 2\sqrt{P_0D_tP_t}  \right),\\
    & = -\frac{1}{2P_0} e^{-\rho t} \left( \sqrt{D_tP_t} - \sqrt{P_0} \right)^2.
    \label{eq: ST and IL}
\end{aligned}
\end{equation}
Here, $IL\left(t,P_t\right)$ represents the impermanent loss arising from liquidity provision in the AMM pool, and $ST \left( t, P_t \right
)$ captures the opportunity cost due to staking only half an ETH in the liquid staking pool (LSP), instead of the full amount of one ETH.
\label{prop:decomp ST and IL}
\end{proposition}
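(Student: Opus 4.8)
The plan is to verify the identity by direct algebra, organized around an intermediate ``no-AMM'' benchmark that makes the two summands interpretable. Recall from Section~\ref{sc: optimal exit problem formulation} that $M(t,P_t)$ is exactly the fee-free part of $W(t,P_t) = S(t,P_t) - e^{-\rho t}e^{rt}D_tP_t/P_0$, i.e.\ the present value of the liquidity-provision strategy (evaluated at $a^\star=\tfrac12$, $x^\star=\tfrac1{2P_0}$, $y^\star=\tfrac12$, so that $S(t,P_t)=e^{-\rho t}\bigl(\tfrac{(e^{rt}-1)D_tP_t}{2P_0}+\sqrt{D_tP_t/P_0}\bigr)$) minus the present value $e^{-\rho t}e^{rt}D_tP_t/P_0$ of staking the full unit of ETH. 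The first step is to introduce the benchmark
\[
  H(t,P_t) \;=\; e^{-\rho t}\Bigl(\tfrac{e^{rt}D_tP_t}{2P_0} + \tfrac12\Bigr),
\]
the present value of the ``stake half, hold the rest'' portfolio: the $\tfrac1{2P_0}$ LST from staking $\tfrac12$ ETH, held outside the pool, together with accrued rewards $\tfrac{(e^{rt}-1)}{2P_0}$ LST (total value $\tfrac{e^{rt}D_tP_t}{2P_0}$ in ETH), plus the retained $\tfrac12$ ETH. Then I write $M = (S - H) + \bigl(H - e^{-\rho t}e^{rt}D_tP_t/P_0\bigr)$ and evaluate each bracket.

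For the first bracket, substituting $S$ and $H$ the reward terms $\tfrac{(e^{rt}-1)D_tP_t}{2P_0}$ and $-\tfrac{e^{rt}D_tP_t}{2P_0}$ combine to $-\tfrac{D_tP_t}{2P_0}$, leaving $e^{-\rho t}\bigl(\sqrt{D_tP_t/P_0} - \tfrac{D_tP_t}{2P_0} - \tfrac12\bigr)$; using $\sqrt{D_tP_t/P_0}=\sqrt{P_0D_tP_t}/P_0$ and completing the square, this equals $-\tfrac{1}{2P_0}e^{-\rho t}\bigl(D_tP_t + P_0 - 2\sqrt{P_0D_tP_t}\bigr) = -\tfrac{1}{2P_0}e^{-\rho t}(\sqrt{D_tP_t}-\sqrt{P_0})^2 = IL(t,P_t)$. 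For the second bracket, $H - e^{-\rho t}e^{rt}D_tP_t/P_0 = e^{-\rho t}\bigl(\tfrac12 - \tfrac{e^{rt}D_tP_t}{2P_0}\bigr) = -\tfrac12 e^{-\rho t}\bigl(\tfrac{e^{rt}D_tP_t}{P_0} - 1\bigr) = ST(t,P_t)$. Adding the two gives $M = ST + IL$, and the two displayed forms of $IL$ coincide by the elementary identity $(\sqrt{D_tP_t}-\sqrt{P_0})^2 = D_tP_t + P_0 - 2\sqrt{P_0D_tP_t}$.

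To justify the interpretation, I would note that by Proposition~\ref{prop: single investor rebalancing} the investor's AMM share, after arbitrage rebalancing, is worth $2\sqrt{x^\star y^\star D_tP_t} = \sqrt{D_tP_t/P_0}$, whereas merely holding the same $x^\star$ LST and $y^\star$ ETH outside the pool is worth $\tfrac{D_tP_t}{2P_0} + \tfrac12$; the (undiscounted) difference is exactly $IL$, the classical impermanent loss of a CPMM position relative to buy-and-hold. The remaining gap $ST$ is the shortfall from committing only $1-a^\star=\tfrac12$ ETH to staking instead of the full unit, i.e.\ the opportunity cost of staking. There is no substantive obstacle here: the statement is an algebraic identity, and the only point needing a little care is the bookkeeping of the factors $e^{-\rho t}$ and $D_t=e^{-mt}$ (which factor out uniformly from $S$, $H$, and the pure-staking value) together with the square completion above. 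A reader who prefers may bypass the benchmark $H$ entirely and simply add the stated expressions for $ST$ and $IL$: the $+\tfrac12$ and $-\tfrac12$ cancel, and the surviving terms reproduce $M(t,P_t)$ verbatim.
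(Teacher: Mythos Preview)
Your proof is correct. The paper does not give an explicit proof of this proposition---it is stated as a direct decomposition/definition and left to the reader---so your argument via the intermediate ``stake half, hold half'' benchmark $H(t,P_t)$ is in fact more detailed than the paper's treatment, while following exactly the same algebraic route and interpretation the paper intends.
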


\quad In the following two propositions, we analyze the behavior of the two components, $ST\left(t,P_t \right)$ and $IL\left(t,P_t \right)$, respectively.

\begin{proposition} \leavevmode
\begin{enumerate}
    \item $\mathbb{E} \left[ ST \left(T_{c,d}, P_{T_{c,d}} \right
    ) \right] > 0$ when $c < 0$;
    \item $\mathbb{E} \left[ ST \left(T_{c,d}, P_{T_{c,d}} \right
    ) \right] < 0$ when $c > 0$.
\end{enumerate}
\label{prop: obj staking reward behavior}
\end{proposition}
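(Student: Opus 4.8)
The plan is to compute $\mathbb{E}\big[ST(T_{c,d},P_{T_{c,d}})\big]$ in closed form via the Laplace transform of $T_{c,d}$ and then read off its sign. The key preliminary observation is that the boundary slope $d=\tfrac{\sigma}{2}-\tfrac{g}{\sigma}$ is calibrated precisely so that the time-dependent part of $\ln\tilde{P}_t$ cancels on the moving boundary: since $\tilde{P}_0=1$, $\tilde{P}_t=\exp\!\big((g-\tfrac12\sigma^2)t+\sigma B_t\big)$ and $\sigma d=\tfrac12\sigma^2-g$, on the event $\{B_t=c+dt\}$ one gets $\tilde{P}_t=e^{\sigma c}=\tilde{L}$, hence $P_{T_{c,d}}=L$ and $\tfrac{D_{T_{c,d}}P_{T_{c,d}}e^{rT_{c,d}}}{P_0}=e^{\sigma c}e^{(r-m)T_{c,d}}$ on $\{T_{c,d}<\infty\}$. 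Substituting into the definition of $ST$ and taking expectations gives
\begin{equation*}
\mathbb{E}\big[ST(T_{c,d},P_{T_{c,d}})\big]=\tfrac12\,\mathbb{E}\big[e^{-\rho T_{c,d}}\big]-\tfrac12\,e^{\sigma c}\,\mathbb{E}\big[e^{(r-m-\rho)T_{c,d}}\big].
\end{equation*}
Both expectations are finite: the first because $\rho\ge0$, and the second because $-(r-m-\rho)>-\tfrac12 d^2$ by the first inequality of Assumption~\ref{assump: range for paras} together with $g\ge0$, so that the Laplace formula of \citet{10.3150/bj/1068129008} applies. In the notation of Eqs.~(\ref{eq: obj value for OST c > 0})--(\ref{eq: obj value for OST c < 0}), the second term equals $T_5^{\pm}$ and the first is $\tfrac12 e^{-c(d\pm\sqrt{d^2+2\rho})}$.

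For part~(1) ($c<0$), the cleanest route bypasses the explicit formula. From the decomposition $M=ST+IL$ one has $IL(t,P_t)=-\tfrac{1}{2P_0}e^{-\rho t}\big(\sqrt{D_tP_t}-\sqrt{P_0}\big)^2\le0$ for every $t$ and every path, hence $ST(t,P_t)\ge M(t,P_t)$ pointwise, so
\begin{equation*}
\mathbb{E}\big[ST(T_{c,d},P_{T_{c,d}})\big]\ \ge\ \mathbb{E}\big[M(T_{c,d},P_{T_{c,d}})\big]\ >\ 0,
\end{equation*}
the strict inequality being Proposition~\ref{prop: no fees obj value pos neg}(1). (If one prefers a self-contained computation, expanding the two Laplace transforms above reduces the claim to $\sqrt{d^2+2\rho}-\sqrt{d^2-2(r-m-\rho)}<\sigma$, which is immediate when $r\le m$ and, when $r>m$, follows from the conjugate-radical estimate below together with the moderate-parameter regime of Assumption~\ref{assump: range for paras}.)

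For part~(2) ($c>0$) the pointwise bound points the wrong way, so I use the explicit expression: for $c>0$,
\begin{equation*}
2\,e^{cd}\,\mathbb{E}\big[ST(T_{c,d},P_{T_{c,d}})\big]=e^{-c\sqrt{d^2+2\rho}}-e^{\sigma c-c\sqrt{d^2-2(r-m-\rho)}},
\end{equation*}
so, since $c>0$, the left side is negative iff $\sqrt{d^2-2(r-m-\rho)}-\sqrt{d^2+2\rho}<\sigma$. Writing $d^2-2(r-m-\rho)=d^2+2\rho+2(m-r)$: if $r\ge m$ the left side is $\le0<\sigma$ and we are done; if $r<m$ (which includes the reward-bearing case $r=0$), the conjugate-radical identity gives
\begin{equation*}
\sqrt{d^2+2\rho+2(m-r)}-\sqrt{d^2+2\rho}=\frac{2(m-r)}{\sqrt{d^2+2\rho+2(m-r)}+\sqrt{d^2+2\rho}}\ \le\ \frac{2(m-r)}{\sqrt{2(m-r)}}=\sqrt{2(m-r)}\ \le\ \sqrt{2m}\ <\ \sigma,
\end{equation*}
where the last inequality is the condition $\tfrac14\sigma^2-m>0$ in Assumption~\ref{assump: range for paras}. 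Hence $\mathbb{E}\big[ST(T_{c,d},P_{T_{c,d}})\big]<0$.

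The genuinely delicate step is the quantitative control of the difference of the two square-root exponents in part~(2) for $r<m$ (and, if one does part~(1) by the explicit route, in part~(1) for $r>m$): there the sign cannot be obtained by monotonicity alone, and the conjugate-radical bound is exactly what turns the problem into the single inequality $2m<\sigma^2$ supplied by Assumption~\ref{assump: range for paras}. A second, more routine point to check carefully is the finiteness of $\mathbb{E}[e^{(r-m-\rho)T_{c,d}}]$ when $r-m-\rho>0$, which is where the first inequality of Assumption~\ref{assump: range for paras} (with $g\ge0$) is needed to legitimise the Laplace-transform manipulations.
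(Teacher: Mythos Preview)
Your proof is correct and in places more complete than the paper's own. For part~(1) the paper computes $\mathbb{E}[ST]$ directly as $\tfrac12\big(e^{-c(d-\sqrt{d^2+2\rho})}-e^{-c(-\sigma+d-\sqrt{d^2-2r+2m+2\rho})}\big)$ and reduces the sign to the inequality $\sqrt{d^2+2\rho}-\sqrt{d^2-2r+2m+2\rho}<\sigma$, dispatching it by invoking $m<r$. Your domination argument --- $IL\le0$ pointwise, hence $ST\ge M$, combined with Proposition~\ref{prop: no fees obj value pos neg}(1) --- bypasses this computation entirely; it is shorter, does not depend on which of $r,m$ is larger, and simply reuses a result already in hand. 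For part~(2) both you and the paper arrive at the same explicit inequality (the paper writes it in squared form as $\sigma^2+2r-2m+2\sigma\sqrt{d^2+2\rho}>0$). The paper again appeals only to $m<r$, which makes that quantity trivially positive; your case split treats $r\ge m$ and $r<m$ separately, with the latter handled by the conjugate-radical bound $\sqrt{2(m-r)}\le\sqrt{2m}<\sigma$ drawn from the condition $\tfrac14\sigma^2>m$ in Assumption~\ref{assump: range for paras}. This is exactly the extra ingredient needed in the reward-bearing regime $r=0$, which the paper's one-line justification does not explicitly cover. Your check that $\mathbb{E}[e^{(r-m-\rho)T_{c,d}}]$ is finite via the first inequality of Assumption~\ref{assump: range for paras} is likewise a point the paper leaves implicit.
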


The proof of Proposition \ref{prop: obj staking reward behavior} is given in Appendix \ref{appx: proof of results in sc 4}.

\begin{proposition}
The following result always holds:
    \begin{equation*}
        \mathbb{E} \left[IL\left( T_{c,d}, P_{T_{c,d}}\right)\right] < 0,
    \end{equation*}
when $c > 0$ or $c < 0$.
\label{prop: obj impermanent loss behavior}
\end{proposition}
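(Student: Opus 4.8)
The plan is to compute $IL(T_{c,d},P_{T_{c,d}})$ in closed form and then show its expectation is strictly negative; the weak inequality will be essentially automatic. The key observation is that $T_{c,d}$ is, by construction, the first hitting time of the level $L$, so $P_{T_{c,d}}=L$, i.e.\ $\tilde P_{T_{c,d}}=\tilde L$. (Concretely, $\tilde P_t=e^{(g-\sigma^2/2)t+\sigma B_t}$ and $B_{T_{c,d}}=c+dT_{c,d}$, while $g-\sigma^2/2+\sigma d=0$ and $e^{\sigma c}=\tilde L$.) Hence $D_{T_{c,d}}P_{T_{c,d}}=P_0\tilde L\,e^{-mT_{c,d}}$, and substituting into the definition of $IL$ yields
\[
    IL\left(T_{c,d},P_{T_{c,d}}\right)=-\tfrac12\,e^{-\rho T_{c,d}}\left(\sqrt{\tilde L}\,e^{-mT_{c,d}/2}-1\right)^2\qquad\text{on }\{T_{c,d}<\infty\},
\]
with the discounted payoff taken to be $0$ on $\{T_{c,d}=\infty\}$ (every term carries a factor $e^{-\rho t}\to 0$).

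Since the right-hand side is pointwise non-positive, $\mathbb{E}[IL(T_{c,d},P_{T_{c,d}})]\le 0$ holds for every stopping time, so the content of the proposition is the strict inequality. For that I would argue the integrand is not almost surely zero: it vanishes only if $\sqrt{\tilde L}\,e^{-mT_{c,d}/2}=1$, i.e.\ $T_{c,d}=\tfrac1m\ln\tilde L$ (a fixed deterministic value), or on $\{T_{c,d}=\infty\}$. Because $T_{c,d}$ has a non-atomic law on $(0,\infty)$, the first event is null; and $\mathbb{P}(T_{c,d}<\infty)>0$ in both regimes — it equals $1$ when $c<0$ and $e^{-2cd}>0$ when $c>0$ (which is also the $\lambda\downarrow 0$ limit of the Laplace transform of \citet{10.3150/bj/1068129008}). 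Hence the integrand is strictly negative on a set of positive probability, so $\mathbb{E}[IL(T_{c,d},P_{T_{c,d}})]<0$.

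If one instead prefers the explicit finite value, one expands the square and applies the Laplace-transform identity term by term to obtain, with $\lambda_3=\rho+m/2$ (the midpoint of $\rho$ and $\rho+m$) and $\tilde L=e^{\sigma c}$,
\[
    \mathbb{E}\left[IL\left(T_{c,d},P_{T_{c,d}}\right)\right]=-\tfrac12\left(\tilde L\,\mathbb{E}e^{-(\rho+m)T_{c,d}}+\mathbb{E}e^{-\rho T_{c,d}}-2\sqrt{\tilde L}\,\mathbb{E}e^{-\lambda_3 T_{c,d}}\right),
\]
and bounds the bracket below by $0$ via AM–GM on the first two terms together with the strict concavity (for $c>0$, where the Laplace exponent is $d+\sqrt{d^2+2\lambda}$) or strict convexity (for $c<0$, where it is $d-\sqrt{d^2+2\lambda}$) of $\lambda\mapsto d\pm\sqrt{d^2+2\lambda}$ evaluated at the midpoint $\lambda_3$; all three transforms are finite since $\rho,\lambda_3,\rho+m>0$.

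The step I expect to be the main (though mild) obstacle is the strictness argument: one must notice that at $T_{c,d}$ the two-variable reward $IL(t,p)$ collapses to a function of the single random variable $T_{c,d}$ (because $\tilde P_{T_{c,d}}=\tilde L$), and then handle the infinite-time event together with the non-atomicity of $T_{c,d}$ to upgrade ``$\le 0$'' to ``$<0$''. In the explicit route, the only delicate point is carrying the sign of $c$ correctly through the conversion of the concavity/convexity of the Laplace exponent into the inequality on the bracket.
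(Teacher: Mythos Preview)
Your proposal is correct. Your first (pointwise) route is a genuinely different and more elementary argument than the paper's: you observe that $P_{T_{c,d}}=L$ on $\{T_{c,d}<\infty\}$, so the definition of $IL$ as $-\tfrac{1}{2P_0}e^{-\rho t}(\sqrt{D_tP_t}-\sqrt{P_0})^2$ immediately yields a pointwise non-positive random variable, and strictness follows from non-atomicity of the first-passage time law together with $\mathbb{P}(T_{c,d}<\infty)>0$. The paper never makes this observation; instead it writes out the expectation via the Laplace-transform identity as $\tfrac12(-e^{-cC_1}-e^{-cC_2}+2e^{-cC_3})$ and applies a two-step Jensen argument (convexity of $x\mapsto e^{-cx}$ to get $\tfrac12 e^{-cC_1}+\tfrac12 e^{-cC_2}>e^{-c(C_1+C_2)/2}$, then strict concavity of $\sqrt{\cdot}$ to show $(C_1+C_2)/2<C_3$). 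Your explicit second route---AM--GM on the first two terms followed by concavity/convexity of $\lambda\mapsto d\pm\sqrt{d^2+2\lambda}$ at the midpoint $\lambda_3=\rho+m/2$---is exactly this argument rephrased: the AM--GM step is the paper's first Jensen step, and the Laplace-exponent step is the paper's square-root concavity step. Your pointwise argument buys simplicity and robustness (it needs no parameter conditions beyond $m,\rho>0$), whereas the paper's explicit computation has the advantage of producing a closed-form value that can be compared quantitatively with the other components $ST$ and $N$.
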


The proof of Proposition \ref{prop: obj impermanent loss behavior} is provided in Appendix \ref{appx: proof of results in sc 4}.

\begin{remark}
    These two propositions yield several important insights. First, the impermanent loss is always negative, regardless of the investor's choice of stopping levels. This aligns with the fact that
\begin{equation*}
    -e^{-\rho t}x^\star \mathbb{E} \left[ \left( \sqrt{D_tP_t} - \sqrt{P_0} \right)^2 \right] < 0,
\end{equation*}
as described in Section \ref{sc: optimal allocation}. 

\quad Second, the opportunity cost becomes positive when $c < 0$. This is because, when the price declines, the portion of ETH that remains unstaked (i.e., not staked through the LSP) is not affected by the decrease in the value of LSTs, thereby compensating for the loss incurred from staking via the LSP and holding the LSTs. Consequently, the so-called opportunity cost is not a true ``cost'' in this case, but rather a gain. Moreover, Proposition \ref{prop: no fees obj value pos neg} implies that,
\begin{equation*}
    \left| \mathbb{E} \left[ ST\left( T_{c,d}, P_{T_{c,d}}\right) \right] \right| > \left| \mathbb{E} \left[ IL\left( T_{c,d}, P_{T_{c,d}}\right) \right] \right|.
\end{equation*}
That is, the profit from allocating $\frac{1}{2}$ ETH to stake through LSP and $\frac{1}{2}$ ETH to hold is not only positive, but is also sufficient to outweight the impermanent loss.

\quad Third, when $c > 0$, both $\mathbb{E} \left[ ST\left( T_{c,d}, P_{T_{c,d}}\right) \right]$ and $ \mathbb{E} \left[IL\left( T_{c,d}, P_{T_{c,d}}\right) \right]$ are negative, and therefore $\mathbb{E} \left[M \left(T_{c,d}, P_{T_{c,d}} \right) \right] < 0$. In this case, the price increase leads to a genuine opportunity cost, since the investor could have staked the full amount of 1 ETH through the LSP. By doing so, she would have earned more staking rewards and also benefited from the increase in the value of LSTs, which is not realized by merely holding ETH.
\end{remark}

\quad As discussed above, a positive reward is achievable only through the allocation strategy, which embodies the principle of risk allocation described in Remark \ref{rmk: opt allocation risk}.

\quad According to Proposition \ref{prop: no fees obj value pos neg}, the maximum of $\mathbb{E} \left[ M \left( T_{c,d}, P_{T_{c,d}}\right) \right]$ must occur at some $c < 0$. While the closed-form expression of the optimal $c^\star$ is difficult to make explicit in general, it can be characterized under suitable conditions, as given in Proposition \ref{prop: optimal c characterization} in Appendix \ref{appx: proof of results in sc 4}.


\quad We perform numerical experiments to examine the behavior of the objective function $\mathbb{E} \left[ M \left( T_{c,d}, P_{T_{c,d}}\right) \right]$ and its components, $\mathbb{E} \left[IL\left( T_{c,d}, P_{T_{c,d}}\right)\right]$ and $\mathbb{E} \left[ST\left( T_{c,d}, P_{T_{c,d}}\right)\right]$. A more detailed analysis of the experimental results is provided in Section~\ref{sc: numerical studies}. These numerical findings corroborate our theoretical results.

\subsection{With Transaction Fees} 
\label{sc: with transaction fees}

Let $N\left(t, P_{t} \right)$ denote the transaction fee upper bounded by $2P_0K$ (as discussed at the end of Section \ref{sc: optimal allocation}), when $x^\star = \frac{1}{2P_0}$ and $a^\star = \frac{1}{2}$. Let $\mathbb{E} \left[N\left(T_{c,d}, P_{T_{c,d}} \right) \right]$ denote the expectation of $N\left(t, P_{t} \right)$ under the stopping time $T_{c,d}$. That is,
\begin{align*}
   \mathbb{E} \left[N\left(T_{c,d}, P_{T_{c,d}} \right) \right]
&=
\begin{cases}
    \min  \left\{T_1^+ + T_2^+ + T_3^+, K \right\} ,  &\text{ when } c > 0, \\
     \min  \left\{T_1^- + T_2^- + T_3^-, K \right\} ,  &\text{ when } c > 0.
\end{cases}
\end{align*}

\begin{lemma}
The following result always holds:
\begin{equation*}
 \mathbb{E} \left[N\left(T_{c,d}, P_{T_{c,d}} \right) \right] > 0,
\end{equation*}
when $c > 0$ or $c < 0$.
\label{prop: fee behavior}
\end{lemma}
\begin{proof}
    It suffices to observe that $ 2\left(g+r-m-\rho\right) > g - \frac{1}{4} \sigma^2 - m - 2\rho$ and $2\left(g-m-\rho\right) >g - \frac{1}{4} \sigma^2 - m - 2\rho$ always hold under Assumption \ref{assump: stake via LSP}.
\end{proof}

\begin{proposition}
    $E\left[ W\left(T_{c,d},P_{T_{c,d}} \right)\right] > 0$ when $c < 0$.
    \label{prop: 4.9}
\end{proposition}

\quad Proposition \ref{prop: fee behavior} indicates that the transaction fee remains positive for both $c < 0$ and $c > 0$. Combining this observation with Proposition \ref{prop: no fees obj value pos neg}, we may conclude that Proposition \ref{prop: 4.9} holds. In fact, with transaction fees, our numerical findings indicate that the optimal solution $c^\star$  occurs at some $c < 0$. While the objective function without fees is positive when $c < 0$ and negative when $c > 0$, it is important to note that we set the transaction fee at its lower bound. (This choice reflects the fact that it is not beneficial for the AMM to impose high fees on traders, as excessive fees may discourage trading activity; accordingly, we also impose an upper bound on the fee.) As established in Theorem~\ref{thm: bounds for fees}, the transaction fee is expressed as a sum of exponential functions and depends on time $t$. However, under the stopping time $T_{c,d} = \inf\left\{t \geq 0: B_t = c + dt \right\}$ with $d > 0$ , the event of hitting a positive threshold $c > 0$  is relatively unlikely, given by $\mathbb{P} \left(T_{c,d}  < \infty \right) = e^{-2cd}$. As a result, the expected transaction fee under $T_{c,d}$ for $c > 0$ remains low and is insufficient to offset the negative objective value in the absence of fees. 

\quad When $c < 0$, the exponential form of the fee allows it to grow, when either the reward rate $r$ or the price growth rate $g$ is large. Given that the objective function without transaction fees is already positive in this region, it is not surprising that the addition of transaction fees preserves this positivity. As shown in the plots in Section~\ref{sc: numerical studies}, the optimal $c^\star$ is more negative in the presence of fees than in their absence. This suggests that transaction fees may encourage investors to remain in the AMM for a longer period. A more detailed presentation and discussion of the numerical results are provided in Section \ref{sc: numerical studies}.

\section{Numerical Studies and Further Insights}
\label{sc: numerical studies}

\quad In this section, we conduct a series of numerical experiments to validate the theoretical predictions and gain further insights into the economic implications of the investor’s liquidity provision and exit strategy. Several key observations emerge from this analysis. Section \ref{sc: analysis of the objective function} examines the decomposition of the objective function, focusing on the respective roles of impermanent loss, opportunity cost, and transaction fee in shaping investor incentives. Section \ref{sc: implications for fee structure} explores how the transaction fee structure affects the objective function and investor's participation decision. Section \ref{sc: characterization of the optimal exit threshold} characterizes the optimal exit threshold. Section~\ref{sc: parametric analysis of optimal exit} analyzes how the presence of transaction fees and variations in model parameters affect the optimal exit strategy. Finally, Section \ref{sc: free boundary} presents the numerical results for the free-boundary problem defined in \ref{prob: free boundary}.

\subsection{Analysis of the Objective Function}
\label{sc: analysis of the objective function}
The objective function can be decomposed into the sum of three components: the opportunity cost of capital, impermanent loss, and accrued transaction fee income. The explicit expressions for each component are provided in Appendix~\ref{sc: decomposition}. We conduct numerical experiments to understand how each component contributes to the overall objective. In this analysis, we separately consider the cases where $c$ is positive and negative.

\begin{enumerate}
    \item For $c>0$, we consider the stopping time $T = \inf\left\{t \geq 0: P_t \geq L\right\}$, which corresponds to a stop-win strategy.

    \item For $c<0$, we consider the stopping time $T = \inf\left\{t \geq 0: P_t \leq L\right\}$, which corresponds to a stop-loss strategy. 
    
\end{enumerate}

\begin{figure}[h]
    \centering
    \includegraphics[width=\linewidth]{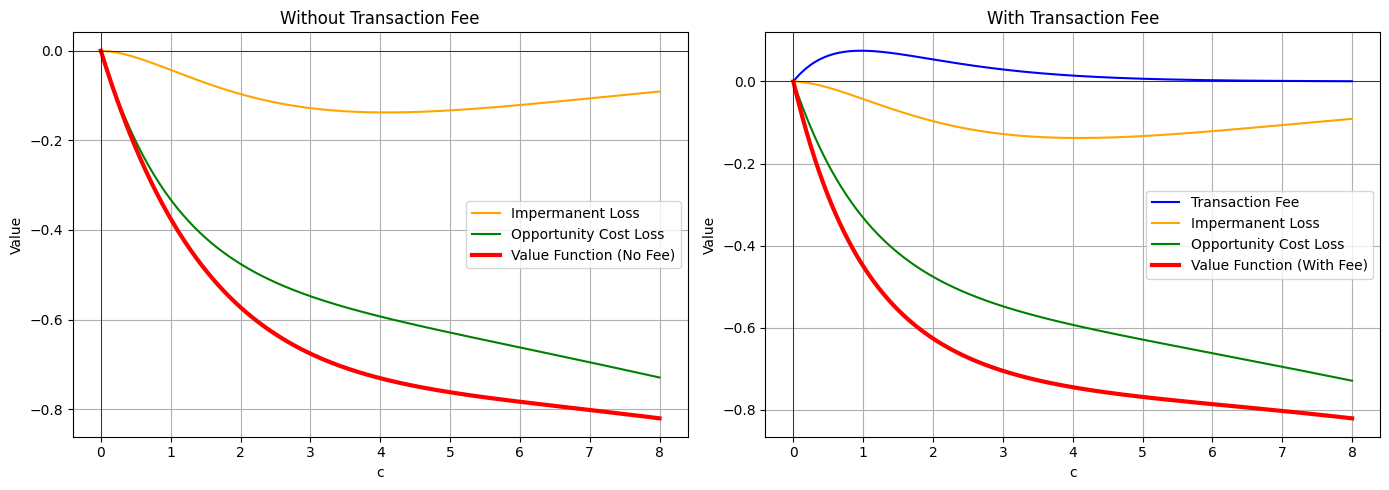}
    \caption{Objective function when $c>0$}
    \label{fig: objective function with postive c}
\end{figure}

\begin{figure}[h]
    \centering
    \includegraphics[width=\linewidth]{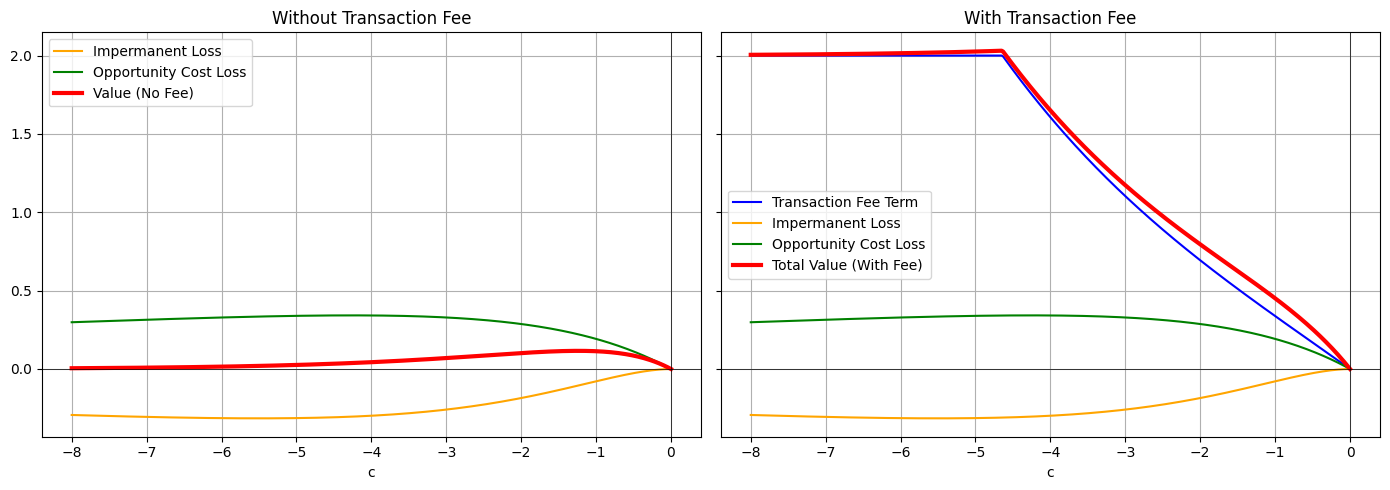}
    \caption{Objective function when $c<0$}
    \label{fig: objective function with negative c}
\end{figure}

\quad Figure~\ref{fig: objective function with postive c} and ~\ref{fig: objective function with negative c} plot the three components: impermanent loss, opportunity cost loss, and transaction fee, as well as the overall objective function, both with and without transaction fees. 

\quad From Figure~\ref{fig: objective function with postive c}, we observe that when $c > 0$, the investor suffers negative impermanent loss and opportunity cost from not staking fully. Both terms are negative, and their combined magnitude outweighs the transaction fee income. As a result, the overall value function remains negative, regardless of whether transaction fees are present. This implies that for an investor in the stop-win strategy, liquidity provision is strictly worse than staking.

\quad From Figure~\ref{fig: objective function with negative c}, we observe that when $c < 0$, the investor benefits from holding ETH instead of staking, since the value of unstaked ETH remains unaffected by the price decline. This leads to a positive opportunity gain, in contrast to the loss observed under the stop-win strategy.

\quad Interestingly, we find that the opportunity gain can, in some cases, outweigh the impermanent loss, resulting in a positive total value even in the absence of transaction fees. While this may seem counterintuitive, the intuition lies in the directional nature of the price drop. Unlike impermanent loss, which arises from relative price movement between the two tokens in a liquidity pool, the opportunity gain here comes from a favorable directional exposure to ETH price: by avoiding staking, the investor avoids locking in the losses from a falling ETH price. As a result, the net value from not staking can surpass the loss induced by impermanent loss in the LP position.

\subsection{Implications for Fee Structure}
\label{sc: implications for fee structure}

\quad As observed in Figure~\ref{fig: objective function with negative c}, transaction fees accumulate at a significantly faster rate and to a much higher level when $c < 0$, compared to the $c > 0$ regime. This asymmetry plays a critical role in shaping the optimal strategy discussed in Section~\ref{sc: characterization of the optimal exit threshold}. Here, we provide an explanation for the observed shape of the fee term.

\quad Mathematically, the fee expression consists of exponential terms of the form $e^{-Dc}$, where $D = d \pm \sqrt{\cdot}$, and the entire term is subject to a cap $K$. Since the constants $D$ are strictly positive under typical parameterizations, the exponential expression grows large when $c < 0$ and shrinks rapidly toward zero when $c > 0$. This explains why the fee component dominates in the left tail of the objective function, especially when $g$ or $r$ is relatively large.

\quad In addition, consider the stopping time $T_{c,d} = \inf\{t \geq 0: B_t = c + dt\}$ with $d > 0$. Then the probability that $T_{c,d}$ is finite depends on the sign of $c$:
\[
\mathbb{P}(T_{c,d} < \infty) = 
\begin{cases}
e^{-2cd}, & c > 0 \\
1, & c < 0
\end{cases}
\]
This means that for $c > 0$, the process is unlikely to hit the target threshold as $c$ increases, while for $c < 0$, the process almost surely reaches the threshold. In a positively drifting setting, this implies that downward barriers (i.e., lower bound $L$) are much more likely to be hit than upward ones.

\quad Together, these effects create a compelling asymmetry: fees accumulate rapidly when $c < 0$ both due to the exponential structure of the fee formula and because hitting the stopping boundary is more probable. This asymmetry, combined with the fee cap $K$ that limits how long an investor can remain in the AMM pool, provides additional intuition as to why the optimal exit is more likely to be characterized by a lower bound $L^* < P_0$ (i.e., a stop-loss strategy), rather than an upper bound $L^* > P_0$. This point will be further explored in Section~\ref{sc: characterization of the optimal exit threshold}.

\subsection{Characterization of the Optimal Exit Threshold}
\label{sc: characterization of the optimal exit threshold}
To determine the optimal stopping threshold $L$ (equivalently, the optimal final log price change $c$), we analyze the shape of the objective function over c.

\begin{figure}[h]
    \centering
    \includegraphics[width=\linewidth]{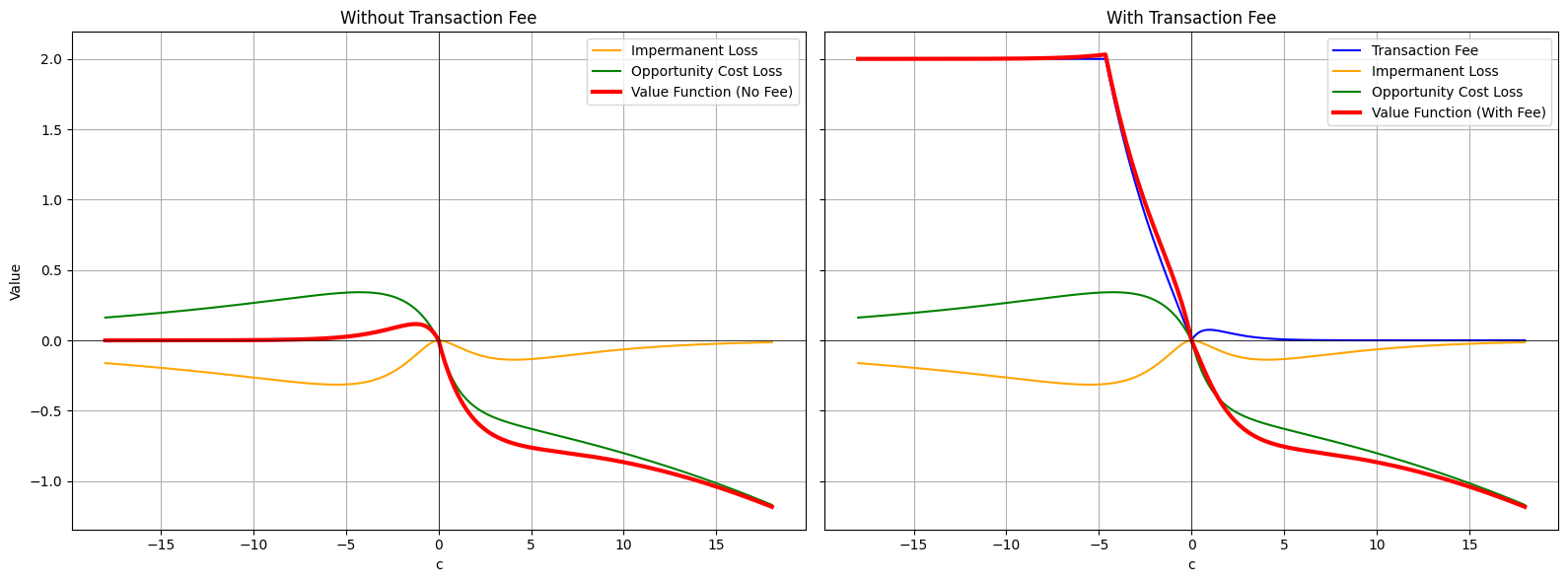}
    \caption{Shape of objective function }
    \label{fig: shape of objective function}
\end{figure}

\quad Our objective is to maximize the value function of $c$. As illustrated in Figure~\ref{fig: shape of objective function}, and consistent with the component-wise analysis in Section~\ref{sc: analysis of the objective function}, the objective function remains strictly negative for all $c > 0$. This implies that under a stop-win strategy, the investor always incurs losses from liquidity provision, making such a strategy not optimal. Since the value function remains negative in this region, it suggests that an upper bound on the stopping threshold is unlikely to be probable in practice.

\quad In contrast, for $c<0$, the objective function exhibits a concave shape with a unique maximum at an interior point $c^*<0$. To the left of $c^*$, the value function decreases gradually; to the right, it drops sharply and converges to zero as $c \to 0^{-}$. This structure implies the existence of an optimal negative threshold $c^*$ that maximizes the investor’s expected value, and hence determines the optimal lower bound $L^* = P_0 e^{\sigma c^*}$ for price-based exit.

\quad In this region, the optimal strategy is to wait until the asset price falls to a certain low level before exiting the pool, which is regarded as a stop-loss strategy. Interestingly, as shown in Figure~\ref{fig: shape of objective function} and discussed in Section~\ref{sc: analysis of the objective function}, it is not really loss if we have transaction fee.

\quad The economic rationale behind this strategy is as follows: although a declining price leads to temporary losses, the longer the price remains depressed, the more fees the investor accrues. In particular, the fee generation is highly convex in $c$ when $c < 0$, i.e., the speed and amount of fee accumulation increase sharply as price drops (which has been discussed in ~\ref{sc: implications for fee structure}). This makes it worthwhile for the investor to endure the price downturn, as the anticipated fee revenue can outweigh the losses and justify remaining in the liquidity pool. The optimal exit therefore reflects a deliberate trade-off: the investor accepts temporary losses from price declines in exchange for rapidly accumulating transaction fee gain.


\subsection{Parametric Analysis of the Optimal Exit Strategy}
\label{sc: parametric analysis of optimal exit}
We conduct a parametric analysis to examine how the investor's optimal exit threshold responds to changes in different settings and parameters. In this analysis, we consider two cases:

\begin{enumerate}
    \item We compare the optimal exit threshold with and without transaction fees.
    \item We examine how the optimal exit threshold responds to changes in model parameters.
\end{enumerate}

\begin{table}[h] 
\centering
\begin{tabular}{ |c|c|c|c|c|c|c| } 
 \hline
 \rule{0pt}{10pt} & Staking Reward $r$ & 0.12 & 0.14 & 0.16 & 0.18 & 0.20 \\ 
 \hline 
 \multirow{3}{*}{No Fees} 
  & Scaled Threshold $c^\star$ & -1.291 & -1.298 & -1.301 & -1.292 & -1.236 \\ 
 & Optimal Exit Price $L^\star$ &0.315 &0.313 &0.312 &0.315 & 0.331 \\
 & Optimal Value $V^\star$ &0.187 &0.176 & 0.163 &0.145 & 0.116 \\
 \hline
 \multirow{3}{*}{With Fees} 
 & Scaled Threshold $c^\star$ & -63.688 & -18.931 & -10.538 & -6.885 & -4.640 \\ 
 &Optimal Exit Price $L^\star$ &1.82e-25 & 4.43e-8 & 8.07e-5 & 2.12e-3 & 0.016 \\
 &Optimal Value $V^\star$ &2.000 &2.000 &2.002 &2.014 & 2.031 \\
 \hline
\end{tabular}

\medskip
\caption{This table presents a parametric analysis of the optimal exit scaled threshold $c^*$ and optimal exit price $L^*$ with and without transaction fees, under varying staking reward $r$, fixed $\rho=0.03$, $\sigma = \sqrt{0.8}$, $g=0$, $K=2$ and $m=0.08$.}
\label{table: exit-threshold-vs-r}
\end{table}

\begin{table}[h] 
\centering
\begin{tabular}{ |c|c|c|c|c|c|c| } 
 \hline
 \rule{0pt}{10pt} & Price Growth $g$ & 0.120 & 0.125 & 0.130 & 0.135 & 0.140 \\ 
 \hline 
 \multirow{3}{*}{No Fees} 
  & Scaled Threshold $c^\star$ & -1.167 & -1.163 & -1.160 & -1.157 & -1.154 \\ 
 & Optimal Exit Price $L^\star$ &0.352 &0.353 &0.354 &0.355 & 0.356 \\
 & Optimal Value $V^\star$ &0.221 & 0.221 & 0.220 & 0.220 & 0.219 \\
 \hline
 \multirow{3}{*}{With Fees} 
 & Scaled Threshold $c^\star$ & -20.613 & -13.311 & -9.733 & -7.546 & -5.990 \\ 
 & Optimal Exit Price $L^\star$ &9.84e-9 & 6.75e-6 & 1.66e-4 & 1.17e-3 & 4.71e-3\\
 & Optimal Value $V^\star$ &2.000 &2.000 &2.002 &2.009 & 2.022 \\
 \hline
\end{tabular}

\medskip
\caption{This table presents a parametric analysis of the optimal exit scaled threshold $c^*$ and optimal exit price $L^*$ with and without transaction fees, under varying price growth $g$, fixed $\rho=0.03$, $\sigma = \sqrt{0.8}$, $r=0$, $K=2$ and $m=0.08$.}
\label{table: exit-threshold-vs-g}
\end{table}

\quad The results are presented in Table ~\ref{table: exit-threshold-vs-r} and Table ~\ref{table: exit-threshold-vs-g}. Additional results are shown in Table ~\ref{table: exit-threshold-vs-K} to Table ~\ref{table: exit-threshold-vs-rho2} in Appendix~\ref{appx: additional numerics}. By comparing the values in each column across the ``with fees" and ``no fees" rows of a given table, we can observe how the presence of transaction fees alters the investor's optimal exit threshold. Furthermore, by analyzing the "with fees" rows across different columns within the same table, we can analyze how changes in individual parameters (such as staking reward rate $r$ in Table ~\ref{table: exit-threshold-vs-r} or price growth rate $g$ in Table ~\ref{table: exit-threshold-vs-g}) affect the investor's optimal exit behavior.

\quad We observe that, in most cases, the optimal exit threshold is substantially lower when transaction fees are present, suggesting that investors remain in the pool for a longer period. This observation aligns with the analysis in Section~\ref{sc: characterization of the optimal exit threshold}, which shows that fee accumulation grows rapidly as the price declines. Hence, in the presence of fees, the investor finds it beneficial to remain in the AMM pool longer to accrue more fees, even at the cost of impermanent loss.

\quad Furthermore, in the presence of transaction fees, Table ~\ref{table: exit-threshold-vs-r}, Table ~\ref{table: exit-threshold-vs-g}, and Table ~\ref{table: exit-threshold-vs-K} to Table ~\ref{table: exit-threshold-vs-rho2} reveal a clear and consistent pattern: the optimal exit threshold decreases as the term $r + g - \rho - m$ in Assumption ~\ref{assump: stake via LSP} becomes smaller, indicating delayed exit from the liquidity provision. The term $r + g - \rho - m$ in Assumption ~\ref{assump: stake via LSP} means the difference between the reward rates and the discount rates. It will decrease when either the staking reward $r$ or the price growth rate $g$ decrease, or when the discounting parameters $\rho$ or $m$ increase. As previously discussed in Section~\ref{sc: analysis of the objective function}, the structure of the objective function reveals that, as the difference between the reward rates and the discount rates $r + g - \rho - m$ decreases, the staking reward becomes less attractive relative to fee accumulation, and the objective function becomes increasingly dominated by transaction fee accumulation. This can also be seen in the analysis in Appendix~\ref{appx: additional numerics}. Intuitively, this phenomenon
encourages the investor to delay exit in order to harvest more transaction fees, which further validates the stop-loss strategy identified in Section ~\ref{sc: characterization of the optimal exit threshold}.

\subsection{Numerical Analysis of the Free-Boundary Problem}
\label{sc: free boundary}
We perform numerical experiments to solve the free-boundary problem defined in \ref{prob: free boundary}.

\begin{figure}[h]
    \centering
        \includegraphics[width=0.8\linewidth]{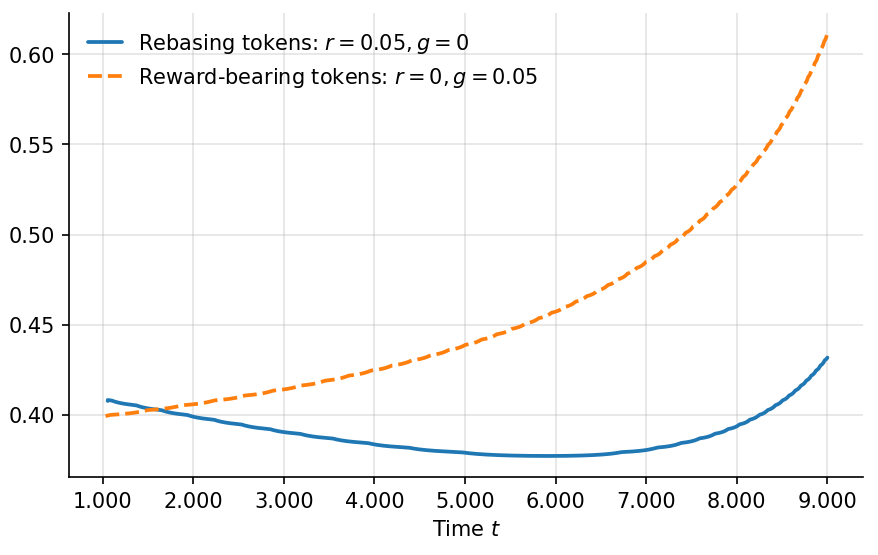}
    \caption{Evolution of the boundary for rebasing and reward-bearing tokens.}
    \label{fig:free_boundaries}
\end{figure}

\begin{figure}[h]
    \centering
    \begin{subfigure}[t]{0.32\linewidth}
        \renewcommand\captionlabelfont{}
        \centering
        \includegraphics[width=\linewidth]{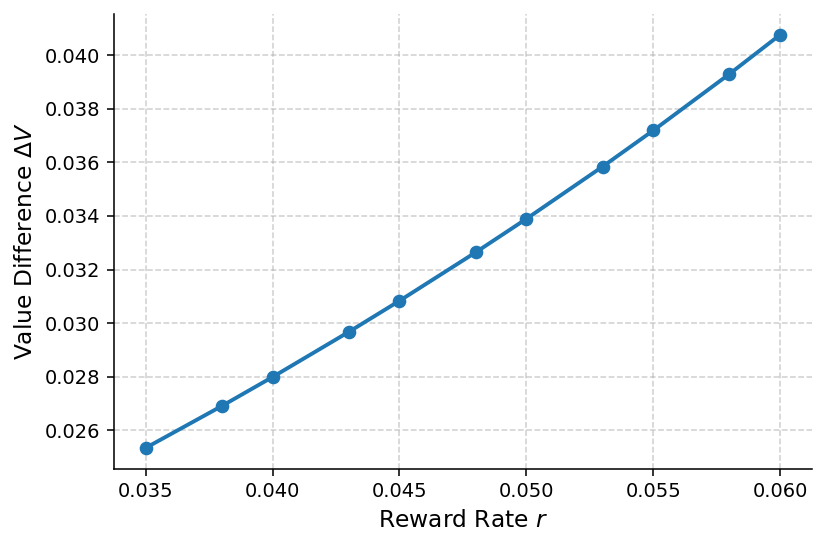}
        \caption{}
        \label{fig:gaps_r}
    \end{subfigure}
    \begin{subfigure}[t]{0.32\linewidth}
        \renewcommand\captionlabelfont{}
        \centering
        \includegraphics[width=\linewidth]{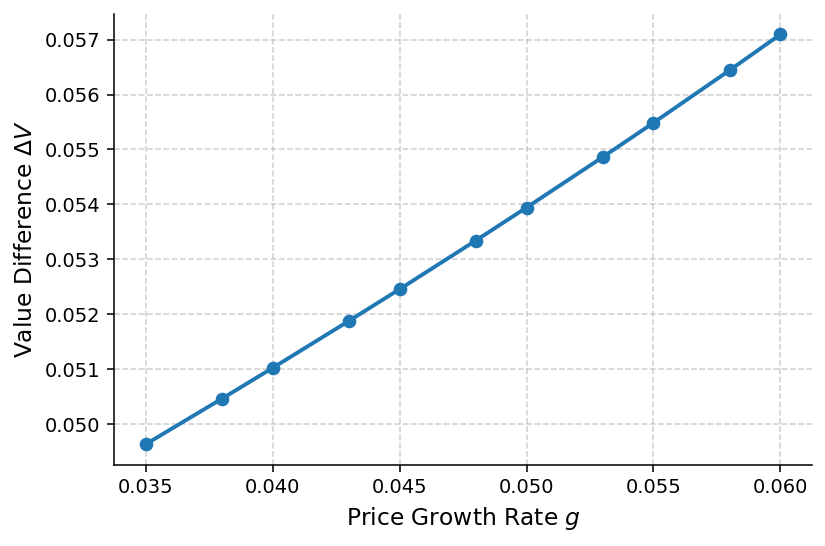}
        \caption{}
        \label{fig:gaps_g}
    \end{subfigure}
    \begin{subfigure}[t]{0.32\linewidth}
        \renewcommand\captionlabelfont{}
        \centering
        \includegraphics[width=\linewidth]{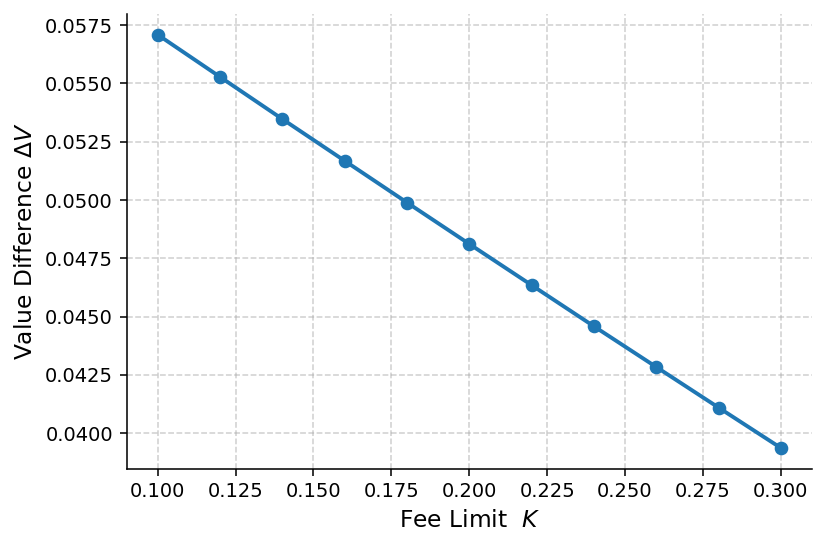}
        \caption{}
        \label{fig:gaps_K}
    \end{subfigure}
    \caption{Changes in optimal value differences with reward rate $r$, growth rate $g$, and fee limit $K$}
    \label{fig:gaps vs g r}
\end{figure}

\quad Figure \ref{fig:free_boundaries} presents the evolution of the free boundary for reward-bearing tokens and rebasing tokens. The parameter configurations for the two cases are as follows: $\sigma = 0.8, \ \rho = 0.02, \ m = 0.01$, and $\ K = 0.1$. For rebasing tokens, the parameters are $g = 0.05$ and $\ r = 0$, while for reward-bearing tokens, they are $g = 0, r = 0.05$. Under the first set of parameters, the optimal value of the free-boundary problem is $0.345$, compared to $0.329$ for the fixed stopping level. Under the second set, the corresponding optimal values are $0.338$ and $0.304$, respectively.

\quad In both plots, the free boundaries begin at some $t > 0$. This occurs because the investor’s total profit and loss comprises two components: transaction fees and the value of the investor’s portfolio, which we later refer to as the ``total payoff'' in the figures in Appendix~\ref{appx sub: free boundary}. It is initially advantageous for the investor to allow transaction fees to accumulate, given that the fee depends solely on $t$. However, as $t$ increases, although the transaction fee may continue to rise until it reaches its upper limit $K$, the portfolio’s optimal value, represented by the second term in Eq.~\eqref{eq: Z(t,x)}, may decline. Consequently, the optimal free boundary must balance these two opposing effects.

\quad As shown in Proposition~\ref{prop:decomp ST and IL}, the investor’s portfolio value can be decomposed into two components: opportunity gain, arising from staking only half of the ETH, and impermanent loss. When considering the trajectory-wise decision-making problem (rather than expectations), impermanent loss is independent of both the reward rate $r$ and the price growth rate $g$, as shown in Eq.~\eqref{eq: ST and IL}. For the opportunity gain term, $ST(t, P_t)$ as defined in Eq.~\eqref{eq: ST and IL}, the second term $\frac{1}{2} e^{-\rho t}$ decreases over time, while the coefficient $e^{(r - m - \rho)t}$ of the first term $-\frac{1}{2} e^{(r - m - \rho)t} x$ increases with $t$ for rebasing tokens ($r > 0$) and decreases with $t$ for reward-bearing tokens ($r = 0$). Therefore, as time $t$ grows, the investor’s portfolio value (i.e., the total payoff) for rebasing tokens declines more sharply as price $x$ increases, as illustrated in Figures~\ref{fig:payoff decomp_r} to \ref{fig:total_comparison} in Appendix~\ref{appx sub: free boundary}. As a result, reward-bearing tokens can tolerate higher prices, whereas rebasing tokens tend to have lower optimal exit prices.

\quad From the optimal values, we observe that the free-boundary stopping strategies yield higher optimal outcomes, as expected. However, the improvement over fixed stopping levels is relatively modest. Figure~\ref{fig:gaps vs g r} illustrates how the differences between the optimal values under the free-boundary strategy and those under the fixed stopping-level strategy vary with parameters $r$, $g$, and $K$, where
\begin{equation*}
\Delta V = V^\star_{\text{free}} - V^\star_{\text{fixed}},
\end{equation*}
and $V^\star_{\text{free}}$ and $V^\star_{\text{fixed}}$ denote the optimal values for the free-boundary and fixed stopping-level strategies, respectively. The value gap $\Delta V$ increases with the reward rate $r$ and the price growth rate $g$, but decreases as the upper bound on transaction fees $K$ increases. This pattern is intuitive: higher values of $r$ or $g$ imply greater profit potential, making a flexible (unconstrained) strategy more advantageous relative to a fixed stopping rule. In contrast, when the transaction fee bound $K$ increases, the transaction fees constitute a more significant component of the investor’s total payoff. In such cases, remaining in the position longer becomes more beneficial and the precise timing of exit plays a less critical role. However, in general, the differences between the optimal values are not substantial, suggesting that a simple fixed stopping-level strategy may be more practical in real-world applications.


\section{Discussion}

\quad In this paper, we follow the settings of \citet{milionis2024}, which do not include traders. A direction worth exploring is to extend the model by explicitly incorporating traders. Traders play a dual role in the interaction among liquidity providers (LPs), traders, and the platform: 1) their trading volume determine the transaction fees earned by LPs, thereby affecting the amount of liquidity deposited into the AMM; 2) they influence the price within the AMM pool, which is eventually aligned with the external market price by arbitrageurs \citep{caodavid2025structuralmodel}. Our analyses focuses on liquid staking, with the AMM being part of the whole system. 
Specifically, we derive lower bounds for cumulative transaction fees that provide incentives for both LPs and traders, and model the price as eventually being aligned with the external market through arbitrageurs. We propose a feasible transaction fee structure that achieves this bound. An extension of our framework would be to explicitly model traders and account for trading volume into the derivation of the bounds, as well as into the design of a fee structure that achieves these bounds. 


\section{Concluding Remarks}
\label{sc: conclusion}

\quad In this paper, we study the optimal allocation problem faced by an investor holding ETH across direct holding, liquidity provision, and liquid staking, as well as the optimal timing to exit liquidity provision. We derive the necessary and sufficient conditions under which the investor chooses to provide liquidity and design a fee mechanism that aligns incentives for both the investor and the platform. We characterize the optimal exit time from the AMM pool and analyze the dynamics of the investor’s payoff. Without transaction fees, we analytically investigate the investor’s payoff dynamics and derive the first-order condition for the optimal stopping threshold. With transaction fees, we supplement our analysis with numerical experiments. In both scenarios, we find that a stop-loss strategy emerges as the most economically rational behavior.

\bigskip
{\bf Acknowledgement}:
We thank Tim Roughgarden for helpful discussions. Ruofei Ma is supported by Columbia-Ethereum Foundation Research Center for Blockchain Protocol Design fellowship, the Columbia Innovation Hub grant, and the Center for Digital Finance and Technologies (CDFT) grant. Wenpin Tang acknowledges financial support by NSF grant DMS-2206038, and the Tang Family Assistant Professorship. The works of Ruofei Ma and David Yao are part of a Columbia-CityU/HK collaborative project that is supported by InnoHK Initiative, The Government of the HKSAR and the AIFT Lab.

\bibliographystyle{abbrvnat}
\bibliography{References}

@misc{gogol2024,
      title={SoK: Liquid Staking Tokens (LSTs) and Emerging Trends in Restaking}, 
      author={Krzysztof Gogol and Yaron Velner and Benjamin Kraner and Claudio Tessone},
      year={2024},
      eprint={2404.00644},
      archivePrefix={arXiv},
      primaryClass={cs.CR},
      url={https://arxiv.org/abs/2404.00644}, 
}

@article{scharnowski2025,
  title={The economics of liquid staking derivatives: Basis determinants and price discovery},
  author={Scharnowski, Stefan and Jahanshahloo, Hossein},
  journal={Journal of Futures Markets},
  volume={45},
  number={2},
  pages={91--117},
  year={2025},
  publisher={Wiley Online Library}
}

@misc{jha2023,
  author       = {Prashant Jha},
  title        = {Liquid staking solutions now have more TVL than DEXs: Finance Redefined},
  year         = {2023},
  url          = {https://cointelegraph.com/news/liquid-staking-solutions-now-have-more-tvl-than-dexs-finance-redefined},
  note         = {Cointelegraph, May 5, 2023}
}

@misc{xiong2024,
      title={Exploring the Market Dynamics of Liquid Staking Derivatives (LSDs)}, 
      author={Xihan Xiong and Zhipeng Wang and Qin Wang},
      year={2024},
      eprint={2402.17748},
      archivePrefix={arXiv},
      primaryClass={cs.CR},
      url={https://arxiv.org/abs/2402.17748}, 
}

@misc{milionis2023,
      title={Automated Market Making and Arbitrage Profits in the Presence of Fees}, 
      author={Jason Milionis and Ciamac C. Moallemi and Tim Roughgarden},
      year={2023},
      eprint={2305.14604},
      archivePrefix={arXiv},
      primaryClass={q-fin.MF},
      url={https://arxiv.org/abs/2305.14604}, 
}

@misc{milionis2024,
      title={Automated Market Making and Loss-Versus-Rebalancing}, 
      author={Jason Milionis and Ciamac C. Moallemi and Tim Roughgarden and Anthony Lee Zhang},
      year={2024},
      eprint={2208.06046},
      archivePrefix={arXiv},
      primaryClass={q-fin.MF},
      url={https://arxiv.org/abs/2208.06046}, 
}

@misc{capponi2021adoptionblockchainbaseddecentralizedexchanges,
      title={The Adoption of Blockchain-based Decentralized Exchanges}, 
      author={Agostino Capponi and Ruizhe Jia},
      year={2021},
      eprint={2103.08842},
      archivePrefix={arXiv},
      primaryClass={q-fin.TR},
      url={https://arxiv.org/abs/2103.08842}, 
}

@misc{xiong2025,
      title={Leverage Staking with Liquid Staking Derivatives (LSDs): Opportunities and Risks}, 
      author={Xihan Xiong and Zhipeng Wang and Xi Chen and William Knottenbelt and Michael Huth},
      year={2025},
      eprint={2401.08610},
      archivePrefix={arXiv},
      primaryClass={q-fin.GN},
      url={https://arxiv.org/abs/2401.08610}, 
}

@misc{cartea2025,
      title={Decentralised Finance and Automated Market Making: Execution and Speculation}, 
      author={A Cartea and Fayçal Drissi and Marcello Monga},
      year={2025},
      eprint={2307.03499},
      archivePrefix={arXiv},
      primaryClass={q-fin.TR},
      url={https://arxiv.org/abs/2307.03499}, 
}

@inproceedings{gogol2024liquid,
  title={Liquid Staking Tokens in Automated Market Makers},
  author={Gogol, Krzysztof and Fritsch, Robin and Schlosser, Malte and Messias, Johnnatan and Kraner, Benjamin and Tessone, Claudio},
  booktitle={The International Conference on Mathematical Research for Blockchain Economy},
  pages={27--54},
  year={2024},
  organization={Springer}
}

@inproceedings{angeris2020,
  title={Improved price oracles: Constant function market makers},
  author={Angeris, Guillermo and Chitra, Tarun},
  booktitle={Proceedings of the 2nd ACM Conference on Advances in Financial Technologies},
  pages={80--91},
  year={2020}
}

@misc{tang2023,
      title={Transaction fee mechanism for Proof-of-Stake protocol}, 
      author={Wenpin Tang and David D. Yao},
      year={2023},
      eprint={2308.13881},
      archivePrefix={arXiv},
      primaryClass={cs.GT},
      url={https://arxiv.org/abs/2308.13881}, 
}

@misc{neuder2024optimizingexitqueuesproofofstake,
      title={Optimizing Exit Queues for Proof-of-Stake Blockchains: A Mechanism Design Approach}, 
      author={Michael Neuder and Mallesh Pai and Max Resnick},
      year={2024},
      eprint={2406.05124},
      archivePrefix={arXiv},
      primaryClass={econ.TH},
      url={https://arxiv.org/abs/2406.05124}, 
}

@article{rocsu2021evolution,
  title={Evolution of shares in a proof-of-stake cryptocurrency},
  author={Ro{\c{s}}u, Ioanid and Saleh, Fahad},
  journal={Management Science},
  volume={67},
  number={2},
  pages={661--672},
  year={2021},
  publisher={INFORMS}
}

@misc{tang2022stabilitysharesproofstake,
      title={Stability of shares in the Proof of Stake Protocol -- Concentration and Phase Transitions}, 
      author={Wenpin Tang},
      year={2022},
      eprint={2206.02227},
      archivePrefix={arXiv},
      primaryClass={econ.GN},
      url={https://arxiv.org/abs/2206.02227}, 
}

@article{tang2023trading,
  title={Trading under the proof-of-stake protocol--A continuous-time control approach},
  author={Tang, Wenpin and Yao, David D},
  journal={Mathematical Finance},
  volume={33},
  number={4},
  pages={979--1004},
  year={2023},
  publisher={Wiley Online Library}
}

@incollection{tang2023tradingwealthevolutionproof,
  title={Trading and wealth evolution in the Proof of Stake protocol},
  author={Tang, Wenpin},
  booktitle = {Proof-of-Stake for Blockchain Networks: Fundamentals, Challenges and Approaches},
  chapter   = {7},
  pages = {135-161},
  year = {2024},
  publisher={IET},
}

@misc{zhu2025optimalexiting,
    title = {Optimal Exiting for Liquidity Provision in Constant Function Market Makers},
    author = {Capponi, Agostino and Zhu, Brian},
    year = {2025},
    url = {https://ssrn.com/abstract=5148585}
}

@misc{cryptopedia2025lptokens,
  author       = {{Cryptopedia}},
  title        = {How Liquidity Provider (LP) Tokens Work},
  howpublished = {\url{https://www.gemini.com/cryptopedia/liquidity-provider-amm-tokens}},
  note         = {Accessed: 2025-06-26},
  year         = {2025}
}

@misc{coinmarketcap,
  author       = {Kraken},
  title        = {What are liquidity provider (LP) tokens?},
  howpublished = {\url{https://www.kraken.com/zh-cn/learn/what-are-liquidity-provider-lp-tokens}},
  note         = {Accessed: 2025-06-26},
  year = {2024}
}

@misc{xrpledger2025,
  author       = {{XRP Ledger}},
  title        = {Automated Market Makers},
  howpublished = {\url{https://xrpl.org/docs/concepts/tokens/decentralized-exchange/automated-market-makers}},
  note         = {Accessed: 2025-06-26},
  year = {2025}
}

@misc{bitcoinlst,
    author = {ChainLabo},
    title = {Liquid Staking Tokens (LSTs): A Comprehensive Guide [2024]},
    howpublished = {\url{https://www.chainlabo.com/blog/liquid-staking-tokens-lsts-a-comprehensive-guide-2024}},
    note         = {Accessed: 2025-06-26},
    year = {2024}
}

@misc{lidosteth,
author = {Eugene Pshenichnyy and Eugene Mamin and Max Merkulov and Alexey Potapkin},
title = {Lido V3 Whitepaper (RFC)},
howpublished = {\url{https://hackmd.io/@lido/S1-r1Cdexl}},
year = {2025}
}

@misc{taguslabs,
author = {{Tagus Labs}},
title = {{The Importance of Native Reward-Bearing Tokens}},
howpublished = {\url{https://www.taguslabs.xyz/research/the-importance-of-native-reward-bearing-tokens/?utm_source=chatgpt.com}},
note = {Accessed: 2025-06-26},
year = {2024}
}

@article{team2024eigenlayer,
  title={Eigenlayer: The restaking collective},
  author={EigenLayer},
  journal={White paper},
  pages={1--19},
  year={2024}
}

@misc{heimbach2023defilendingmerge,
      title={DeFi Lending During The Merge}, 
      author={Lioba Heimbach and Eric Schertenleib and Roger Wattenhofer},
      year={2023},
      eprint={2303.08748},
      archivePrefix={arXiv},
      primaryClass={q-fin.GN},
      url={https://arxiv.org/abs/2303.08748}, 
}

@Inbook{oksendal1998,
author="{\O}ksendal, Bernt",
title="Application to Optimal Stopping",
bookTitle="Stochastic Differential Equations: An Introduction with Applications",
year="1998",
publisher="Springer Berlin Heidelberg",
address="Berlin, Heidelberg",
pages="195--223",
abstract="Problem 5 in the introduction is a special case of a problem of the following type:",
isbn="978-3-662-03620-4",
doi="10.1007/978-3-662-03620-4_10",
url="https://doi.org/10.1007/978-3-662-03620-4_10"
}

@Inbook{Peskir2006,
author = {Goran Peskir and Albert Shiryaev},
title="Methods of solution",
bookTitle="Optimal Stopping and Free-Boundary Problems",
year="2006",
publisher="Birkh{\"a}user Basel",
address="Basel",
pages="143--241",
isbn="978-3-7643-7390-0",
doi="10.1007/978-3-7643-7390-0_4",
url="https://doi.org/10.1007/978-3-7643-7390-0_4"
}

@article{10.3150/bj/1068129008,
author = {Jim Pitman and Marc Yor},
title = {{Hitting, occupation and inverse local times of one-dimensional diffusions: martingale and excursion approaches}},
volume = {9},
journal = {Bernoulli},
number = {1},
publisher = {Bernoulli Society for Mathematical Statistics and Probability},
pages = {1 -- 24},
keywords = {arcsine law, Feynman-Kac formula, last exit decomposition},
year = {2003},
doi = {10.3150/bj/1068129008},
URL = {https://doi.org/10.3150/bj/1068129008}
}

@misc{bergault2024automatedmarketmakingcase,
      title={Automated Market Making: the case of Pegged Assets}, 
      author={Philippe Bergault and Louis Bertucci and David Bouba and Olivier Guéant and Julien Guilbert},
      year={2024},
      eprint={2411.08145},
      archivePrefix={arXiv},
      primaryClass={q-fin.TR},
      url={https://arxiv.org/abs/2411.08145}, 
}

@misc{caodavid2025structuralmodel,
title={A Structural Model of Automated Market Making},
author = {Cao, David and Kogan, Leonid and Tsoukalas, Gerry and Hemenway Falk, Brett},
year = {2025},
url={https://ssrn.com/abstract=4591447}
}

\newpage

\appendix 

\begin{center}
 \large \textbf{Appendix} 
\end{center}

\section{Proof of Results in Section \ref{sc: optimal allocation}}
\label{appx: proof of results in sc 3}

\textbf{Proof of Lemma \ref{lem:lmda pool}.}

\begin{proof}
    Observe that the objective function $f\left(U,V \right) = D_tP_tU + V$ satisfies 
    \begin{equation}
       f\left(\lambda U, \lambda V \right) = \lambda f\left(U, V\right), \forall \lambda > 0,
       \label{eq: lmda obj func}
    \end{equation}

    and the constraint function $h
    \left(U,V\right) = UV = L$ satisfies
    \begin{equation}
       h\left(\lambda U, \lambda V \right) = \lambda^2 h\left(U,V\right) ,\forall \lambda > 0.
       \label{eq: lmda constr func}
    \end{equation}
    Note that $L^\prime = \lambda^2 L$. 
    
    \quad We first prove that $V\left(P_t; L^\prime \right) = \lambda V\left(P_t; L \right)$. Let $\left(u^\star, v^\star \right)$ be the minimizer of Problem \ref{prob: pool value problem single}, so that $u^\star v^\star = L^\prime = xy$, and $V\left(P_t; L^\prime \right) =f\left(u^\star, v^\star \right) = D_tP_t u^\star + v^\star$. Consider the point $\left(\frac{1}{\lambda}u^\star, \frac{1}{\lambda}v^\star \right)$. 
    Since
    \begin{equation*}
        \frac{1}{\lambda^2}u^\star v^\star = \frac{1}{\lambda^2} L^\prime = L,
    \end{equation*}
        $\left(\frac{1}{\lambda}u^\star, \frac{1}{\lambda}v^\star \right)$ is a feasible solution to Problem \ref{prob: pool value problem all}. Thus,
        \begin{equation}
            V\left(P_t;L \right) \leq f\left( \frac{1}{\lambda}u^\star, \frac{1}{\lambda}v^\star \right) = \frac{1}{\lambda} f\left( u^\star, v^\star \right) = \frac{1}{\lambda} V\left( P_t; L^\prime\right),
            \label{ineq: L <= L'}
        \end{equation}
        by Eqs. (\ref{eq: lmda obj func}) and (\ref{eq: lmda constr func}).
        
        \quad Let $\left(U^\star, V^\star \right)$ be the minimizer of Problem \ref{prob: pool value problem all}, so that $U^\star V^\star = L = XY$, and $V\left(P_t; L \right) = f\left(U^\star, V^\star \right) = D_tP_t U^\star + V^\star$. Consider the point $\left(\lambda U^\star, \lambda V^\star \right)$. 
        Since
        \begin{equation*}
            \lambda^2 U^\star V^\star = \lambda^2 L = L^\prime,
        \end{equation*}
        $\left(\lambda U^\star, \lambda V^\star \right)$ is a feasible point for Problem \ref{prob: pool value problem single}, and thus
        \begin{equation}
            V\left(P_t;L^\prime \right) \leq f\left( \lambda U^\star, \lambda V^\star \right) = \lambda f\left( U^\star, V^\star \right) = \lambda V\left( P_t; L^\prime\right),
            \label{ineq: L' <= L}
        \end{equation}
        by Eqs. (\ref{eq: lmda obj func}) and (\ref{eq: lmda constr func}). Combining Inequalities (\ref{ineq: L <= L'}) and (\ref{ineq: L' <= L}), we conclude that $V\left(P_t; L^\prime \right) = \lambda V\left(P_t; L \right)$. Since $\left(\frac{1}{\lambda} u^\star, \frac{1}{\lambda} v^\star \right)$ and $\left(\lambda U^\star, \lambda V^\star \right)$ are feasible for Problems \ref{prob: pool value problem all}  and \ref{prob: pool value problem single}, respectively, and both achieve the optimal value, the results in (\ref{eq: opt sols and value}) follow.
\end{proof}

\smallskip
\textbf{Proof of Proposition \ref{prop: optimal allocation solution} and Theorem \ref{thm: bounds for fees}.} To establish Proposition \ref{prop: optimal allocation solution} and Theorem \ref{thm: bounds for fees}, we first present several supporting lemmas.

\begin{lemma}
    The coefficient of $a$,
\begin{equation}
\frac{\mathbb{E}\left(D_tP_t\right) e^{rt}}{P_0} - 1 =  e^{\left(r+g-m\right)t}  - 1\label{eq: coeff for a},
\end{equation}
is positive. Therefore, we have $a^\star = P_0 x^\star$. 
\end{lemma}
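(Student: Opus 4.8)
The plan is to split the claim into a short moment computation, a sign check, and a one-line optimization argument. For the computation, I would use that the SDE $dP_t/P_t = g\,dt+\sigma\,dB_t$ is solved by $P_t = P_0\exp\bigl((g-\tfrac12\sigma^2)t+\sigma B_t\bigr)$, so that the Gaussian moment generating function gives $\mathbb{E}(P_t)=P_0 e^{gt}$. Since $D_t=e^{-mt}$ is deterministic, $\mathbb{E}(D_tP_t)=P_0 e^{(g-m)t}$; multiplying by $e^{rt}/P_0$ and subtracting $1$ yields exactly the identity $\mathbb{E}(D_tP_t)e^{rt}/P_0 - 1 = e^{(r+g-m)t}-1$ asserted in \eqref{eq: coeff for a}.

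For positivity, I would invoke the standing assumption \eqref{assump: stake via LSP}, namely $r+g-\rho-m > \ln P_0 > 0$. Since the investor's discount rate satisfies $\rho\ge 0$, this gives $r+g-m > \rho+\ln P_0 \ge \ln P_0 > 0$, hence $(r+g-m)t>0$ for every $t>0$ and therefore $e^{(r+g-m)t}-1>0$. (At $t=0$ the coefficient is $0$ and the allocation problem degenerates, so I would either note that case and set it aside or state the inequality for $t>0$.)

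For the conclusion $a^\star = P_0 x^\star$, I would observe that in Problem \eqref{prob: double max prob} the objective is affine in $a$ with slope $-e^{-\rho t}\bigl[\mathbb{E}(D_tP_t)e^{rt}/P_0 - 1\bigr]$, which is strictly negative by the previous two steps together with $e^{-\rho t}>0$. Maximizing therefore drives $a$ to its smallest feasible value; the constraints $P_0 x\le a\le 1-P_0 x$ describe a nonempty interval precisely because $0\le x\le \tfrac1{2P_0}$, so the optimal $a$ is the left endpoint $P_0 x$. Applying this at $x=x^\star$ gives $a^\star = P_0 x^\star$.

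I do not anticipate a genuine obstacle here. The only point meriting care is that the sign of the $a$-coefficient is already forced by the standing assumption rather than by an extra hypothesis: one must use $\rho\ge 0$ (equivalently, note that $r+g-\rho-m>0$ with $\rho\ge 0$ implies $r+g-m>0$). The geometric-Brownian-motion moment computation and the linear-programming-in-$a$ argument are otherwise entirely routine.
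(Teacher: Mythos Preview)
Your proposal is correct and follows essentially the same approach as the paper: the paper's proof is a terse two-sentence appeal to Assumption~\eqref{assump: stake via LSP} for positivity and then the conclusion $a^\star = P_0 x^\star$, while you simply fill in the routine moment computation $\mathbb{E}(D_tP_t)=P_0 e^{(g-m)t}$ and make the linear-in-$a$ optimization argument explicit. Your care in noting that $\rho\ge 0$ is needed to pass from $r+g-\rho-m>0$ to $r+g-m>0$ is a detail the paper leaves implicit.
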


\begin{proof}
Assumption \ref{assump: stake via LSP} ensures that Eq.~\eqref{eq: coeff for a} is strictly positive.
Since (\ref{eq: coeff for a}) $> 0$, we have $a^\star = P_0 x^\star$. 
\end{proof}

\begin{lemma}
    A sufficient condition for $x^\star$ and $a^\star$ to be positive is:
    \begin{equation}
        \int_{0}^t \phi_s e^{-\rho s} ds >  P_0 \left[ e^{\left(g-m-\rho \right)t} \left( e^{rt}+1 \right) - 2  e^{\left(\frac{1}{2}g - \frac{1}{8} \sigma^2 - \frac{1}{2}m - \rho \right)t}  \right], \label{ineq: suff cond}
    \end{equation}
    in which case $x^\star = \frac{1}{2P_0}$ and $a^\star = \frac{1}{2}$.
    \label{lem: suff cond}
\end{lemma}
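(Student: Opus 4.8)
The plan is to exploit the affine structure of the objective in Problem~(\ref{prob: double max prob}) by optimizing sequentially: first over $a$ for a fixed $x$, then over $x$, using $\max_{x,a}=\max_x\max_a$. The lemma established just above shows that the coefficient of $a$ equals $e^{(r+g-m)t}-1$, which is strictly positive under Assumption~\ref{assump: stake via LSP}; hence the objective is strictly decreasing in $a$. For any fixed $x\in[0,\tfrac{1}{2P_0}]$ the feasible set $\{a: P_0 x\le a\le 1-P_0 x\}$ is a nonempty interval, so the inner maximum is attained at its left endpoint $a=P_0 x$. Substituting $a=P_0 x$ reduces the problem to maximizing an affine function of the scalar $x$ over $[0,\tfrac{1}{2P_0}]$.

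The next step is to identify the slope of this affine function. Collecting the terms multiplying $x$ after the substitution gives the coefficient
\[
\int_{0}^{t}\phi_s e^{-\rho s}\,ds \;-\; e^{-\rho t}\,\mathbb{E}\!\left[\left(\sqrt{D_tP_t}-\sqrt{P_0}\right)^2\right] \;-\; e^{-\rho t}P_0\left(e^{(r+g-m)t}-1\right).
\]
To evaluate the middle term I would expand the square and use that $D_tP_t=P_0 e^{-mt}e^{(g-\frac12\sigma^2)t+\sigma B_t}$ is lognormal, so that $\mathbb{E}[D_tP_t]=P_0 e^{(g-m)t}$ and $\mathbb{E}[\sqrt{D_tP_t}]=\sqrt{P_0}\,e^{(\frac12 g-\frac18\sigma^2-\frac12 m)t}$, which yields $\mathbb{E}[(\sqrt{D_tP_t}-\sqrt{P_0})^2]=P_0\big(e^{(g-m)t}-2e^{(\frac12 g-\frac18\sigma^2-\frac12 m)t}+1\big)$. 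Substituting this, together with $\mathbb{E}(D_tP_t)e^{rt}/P_0=e^{(r+g-m)t}$, the $\pm P_0 e^{-\rho t}$ constants cancel and the remaining exponentials combine to exactly $-\Phi(t)$, so the slope equals $\int_0^t\phi_s e^{-\rho s}\,ds-\Phi(t)$.

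Finally, a scalar affine function on $[0,\tfrac{1}{2P_0}]$ attains its maximum at the right endpoint $x^\star=\tfrac{1}{2P_0}$ exactly when its slope is positive, i.e.\ when $\int_0^t\phi_s e^{-\rho s}\,ds>\Phi(t)$; the matching allocation is then $a^\star=P_0 x^\star=\tfrac12$, which together with the sequential reduction is the global optimum of Problem~(\ref{prob: double max prob}). The only step needing care is the lognormal moment computation and the bookkeeping of exponents confirming the slope is precisely $\int_0^t\phi_s e^{-\rho s}\,ds-\Phi(t)$; everything else is immediate from linearity. Note that this argument delivers sufficiency only (strict positivity of the slope), and the tie-breaking Assumption~\ref{assump: tie-breaking} is what would be required to upgrade it to the ``if and only if'' of Theorem~\ref{thm: bounds for fees} by disposing of the degenerate zero-slope case.
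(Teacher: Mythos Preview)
Your proposal is correct and follows essentially the same approach as the paper's proof: use the preceding lemma to reduce to $a^\star=P_0x^\star$, then recognize the remaining problem as maximizing an affine function of $x$ over $[0,\tfrac{1}{2P_0}]$, with the three-way case split on the sign of the slope. Your version is in fact more explicit than the paper's, which simply states the three cases without writing out the lognormal moment computation that identifies the slope with $\int_0^t\phi_s e^{-\rho s}\,ds-\Phi(t)$; your bookkeeping of the exponents (including the cancellation of the $\pm P_0e^{-\rho t}$ terms) is exactly what is needed to justify that identification.
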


\begin{proof}

There are are three possible cases, as detailed below.

\begin{enumerate}
    \item If 
    \begin{align*}
        \int_{0}^t \phi_s e^{-\rho s} ds &> e^{-\rho t} \left\{ \mathbb{E} \left[ \left( \sqrt{D_tP_t} - \sqrt{P_0} \right)^2 \right] + \mathbb{E}\left(D_tP_t\right) e^{rt} - P_0 \right\},\\
        &= P_0 \left[ e^{\left(g-m-\rho \right)t} \left( e^{rt}+1 \right) - 2  e^{\left(\frac{1}{2}g - \frac{1}{8} \sigma^2 - \frac{1}{2}m - \rho \right)t}  \right],
    \end{align*}
    $x^\star = \frac{1}{2P_0}$ and $a^\star = \frac{1}{2}$.
    \item If 
    \begin{equation*}
        \int_{0}^t \phi_s e^{-\rho s} ds < P_0 \left[ e^{\left(g-m-\rho \right)t} \left( e^{rt}+1 \right) - 2  e^{\left(\frac{1}{2}g - \frac{1}{8} \sigma^2 - \frac{1}{2}m - \rho \right)t}  \right],
    \end{equation*}
    $x^\star = 0$ and $a^\star = 0$.
    \item If \begin{equation*}
        \int_{0}^t \phi_s e^{-\rho s} ds = P_0 \left[ e^{\left(g-m-\rho \right)t} \left( e^{rt}+1 \right) - 2  e^{\left(\frac{1}{2}g - \frac{1}{8} \sigma^2 - \frac{1}{2}m - \rho \right)t}  \right],
    \end{equation*}
    then any $x^\star \in \left[0,\frac{1}{2P_0} \right]$ and $a^\star = P_0 x^\star$ are optimal. 
\end{enumerate}
Thus, the result follows under Assumption \ref{assump: tie-breaking}.
    
\end{proof}

\begin{lemma}
    Inequality (\ref{ineq: suff cond}) is also a necessary condition for $x^\star$ and $a^\star$ to be positive under  Assumption \ref{assump: tie-breaking}.
    \label{lem: nec condition}
\end{lemma}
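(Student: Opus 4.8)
The plan is to prove necessity by contraposition, reusing the case analysis already set up in the proof of Lemma~\ref{lem: suff cond}. Recall that after substituting $y = P_0 x$ and then fixing $a^\star = P_0 x^\star$ (legitimate because the coefficient of $a$ in Problem~(\ref{prob: double max prob}) is strictly positive under Assumption~\ref{assump: stake via LSP}), the problem collapses to maximizing an \emph{affine} function of the single scalar $x$ over the compact interval $\left[0,\frac{1}{2P_0}\right]$, whose slope is
\begin{align*}
    C(t) &= \int_{0}^{t} \phi_s e^{-\rho s}\, ds - e^{-\rho t}\Bigl\{ \mathbb{E}\bigl[(\sqrt{D_tP_t}-\sqrt{P_0})^2\bigr] + \mathbb{E}(D_tP_t)e^{rt} - P_0 \Bigr\} \\
    &= \int_{0}^{t} \phi_s e^{-\rho s}\, ds - \Phi(t),
\end{align*}
where the second equality uses the same simplification of the expectations as in Case~1 of the proof of Lemma~\ref{lem: suff cond}.

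Next I would invoke the elementary fact that an affine function on a compact interval attains its maximum at an endpoint: if $C(t)>0$ the unique maximizer is $x=\frac{1}{2P_0}$; if $C(t)<0$ the unique maximizer is $x=0$; and if $C(t)=0$ every point of the interval is a maximizer, in which case Assumption~\ref{assump: tie-breaking} selects $x^\star=0$. Hence the selected solution satisfies $x^\star>0$ if and only if $C(t)>0$, i.e.\ if and only if $\int_{0}^{t}\phi_s e^{-\rho s}\,ds>\Phi(t)$, which is exactly inequality~(\ref{ineq: suff cond}); and since $a^\star = P_0 x^\star$, the same statement holds verbatim for $a^\star$. Reading this in the contrapositive: whenever (\ref{ineq: suff cond}) fails we are in Case~2 or Case~3 of Lemma~\ref{lem: suff cond}, and in both $x^\star=a^\star=0$. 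This establishes that (\ref{ineq: suff cond}) is necessary for positivity of $x^\star$ and $a^\star$, and, combined with Lemma~\ref{lem: suff cond}, upgrades the implication to the equivalence stated in Theorem~\ref{thm: bounds for fees}.

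The argument is essentially bookkeeping, so I do not anticipate a genuine obstacle; the one point that deserves explicit care is the degenerate case $C(t)=0$. There the optimization alone does not force $x^\star=0$, and it is precisely the tie-breaking convention of Assumption~\ref{assump: tie-breaking} that rules out $x^\star=\frac{1}{2P_0}$ and thereby turns the necessary condition from the weak inequality $\ge$ into the strict inequality appearing in (\ref{ineq: suff cond}). I would make sure the write-up flags this dependence explicitly, since it is the only place where Assumption~\ref{assump: tie-breaking} is actually used in the ``necessity'' direction.
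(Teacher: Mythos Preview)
Your proposal is correct and follows essentially the same approach as the paper: both reduce the problem to an affine function of $x$ on $\left[0,\frac{1}{2P_0}\right]$ (using that the coefficient of $a$ is strictly positive so $a^\star=P_0x^\star$), then observe that $x^\star>0$ forces the slope to be strictly positive, with Assumption~\ref{assump: tie-breaking} eliminating the degenerate equality case. Your write-up is simply more explicit than the paper's terse version, but the argument is the same.
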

\begin{proof}
Since (\ref{eq: coeff for a}) $> 0$, $a^\star > 0$ and $x^\star > 0$ implies \begin{align*}
        \int_{0}^t \phi_s e^{-\rho s} ds 
        > P_0 \left[ e^{\left(g-m-\rho \right)t} \left( e^{rt}+1 \right) - 2  e^{\left(\frac{1}{2}g - \frac{1}{8} \sigma^2 - \frac{1}{2}m - \rho \right)t}  \right],
    \end{align*}
under Assumption \ref{assump: tie-breaking}. Note that under Assumption \ref{assump: tie-breaking}, $a^\star > 0$ and $x^\star > 0$ implies that a positive solution is strictly better than $a^\star = x^\star = 0$, and thus the inequality must be strict.
\end{proof}

\quad Proposition \ref{prop: optimal allocation solution} can be concluded from the proof of Lemma \ref{lem: suff cond}. Combining Lemmas \ref{lem: suff cond} and \ref{lem: nec condition}, Theorem \ref{thm: bounds for fees} follows immediately.

\newpage
\section{Explicit Expressions for \texorpdfstring{$T_i^+$ and $T_i^-$}{Ti+ and Ti-}}
\label{appx: Ti}

The explicit expressions for $T_i^+$ are given as follows.
\begin{align*}
    T_1^+ &= \frac{1}{2} \exp \left( -c\left( d + \sqrt{d^2 - 2\left(g+r-m-\rho\right)}\right)\right), \\
    T_2^+ &=\frac{1}{2} \exp \left( -c\left( d + \sqrt{d^2 - 2\left(g-m-\rho\right)}\right)\right),
    \\
    T_3^+ &= - \exp \left( -c\left( d + \sqrt{d^2 - \left(g - \frac{1}{4} \sigma^2 - m - 2\rho\right)}\right)\right),\\
    T_4^+ &= \exp \left( -c\left( -\frac{1}{2} \sigma+ d + \sqrt{d^2 + m + 2\rho}\right)\right),\\
    T_5^+ &=  -\frac{1}{2} \exp \left( -c\left( - \sigma+ d + \sqrt{d^2 - 2\left(r-\rho-m\right)}\right)\right),\\
    T_6^+ &= - \frac{1}{2}  \exp \left( -c\left( -\sigma+ d + \sqrt{d^2 + 2\left(\rho+m\right)}\right)\right).
\end{align*}

The terms $T_i^-$ are defined similarly from Eq. (\ref{eq: obj value for OST c < 0}), with the plus sign between the term $d$ and the square root term replaced by a minus sign.

\newpage
\section{Proof of Results in Section \ref{sc: optimal time to exit}}
\label{appx: proof of results in sc 4}

\textbf{Proof of Proposition \ref{prop: no fees obj value pos neg}.}

To prove Proposition \ref{prop: no fees obj value pos neg}, we need the following lemma.

\begin{lemma}
    The following results hold:
    \begin{enumerate}
        \item $\frac{1}{2} T_4^- + T_5^- > 0$;
        \vspace{0.5em}
        \item $\frac{1}{2} T_4^- + T_6^- > 0$;
        \vspace{0.5em}
        \item $\frac{1}{2} T_4^+ + T_5^+ < 0$;
        \vspace{0.5em}
        \item $\frac{1}{2}T_4^+ + T_6^+ < 0$.
    \end{enumerate}
    \label{lem: obj value without fees}
\end{lemma}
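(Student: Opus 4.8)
The plan is to reduce each of the four inequalities to a comparison between two exponents, since all the terms $T_i^{\pm}$ are (up to sign) of the form $\pm\tfrac{1}{2}\exp(-c\,\kappa_i)$ or $\pm\exp(-c\,\kappa_i)$ for explicit constants $\kappa_i$ depending on $d,\sigma,g,r,m,\rho$. Concretely, $\tfrac12 T_4^- + T_5^-$ is a difference of two terms each carrying the factor $\exp\!\big(c\,(\tfrac12\sigma - d)\big)$ (note $c<0$ here), so after factoring that common piece out, positivity of $\tfrac12 T_4^- + T_5^-$ is equivalent to an inequality between $\exp\!\big(c\sqrt{d^2+m+2\rho}\big)$ and $\exp\!\big(c\sqrt{d^2-2(r-\rho-m)}\big)$ (with the appropriate $\tfrac12 \cdot \tfrac12$ vs $\tfrac12$ coefficients from $T_4$ and $T_5$). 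Since $c<0$ for the ``$-$'' terms and $c>0$ for the ``$+$'' terms, the direction of each inequality is controlled entirely by which of the two square-root exponents is larger, i.e. by comparing the radicands.

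First I would write out, for parts (1)–(2), the explicit constants: $\tfrac12 T_4^- = \tfrac12\exp\!\big(-c(-\tfrac12\sigma + d - \sqrt{d^2+m+2\rho})\big)$ and $T_5^- = -\tfrac12\exp\!\big(-c(-\sigma + d - \sqrt{d^2-2(r-\rho-m)})\big)$, and similarly $T_6^- = -\tfrac12\exp\!\big(-c(-\sigma + d - \sqrt{d^2+2(\rho+m)})\big)$. The key algebraic observation for (1) is that $\tfrac12\sigma$ vs $\sigma$ in the two exponents differs by $\tfrac12\sigma$, and combined with the sign of $c$ this contributes a definite-sign factor; the residual comparison is between the radicands $d^2+m+2\rho$ and $d^2 - 2(r-\rho-m) = d^2 + 2r - 2\rho - 2m$... wait, one must be careful: I would instead directly compare $\tfrac12 T_4^-$ and $|T_5^-|$ by taking the ratio and checking its exponent has the correct sign. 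Using $c<0$, Assumption~\ref{assump: range for paras} (in particular $d>0$ and $\tfrac14\sigma^2 - m>0$), and Assumption~\ref{assump: stake via LSP} ($r+g-\rho-m>0$, hence for $g=0$ we get $r>\rho+m$ so $d^2-2(r-\rho-m)<d^2$), the needed monotonicity of $x\mapsto \sqrt{x}$ and of $c\mapsto e^{cx}$ yields the claim. Parts (3)–(4) are the mirror images: now $c>0$, so the same radicand comparisons flip the sign, giving negativity.

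The steps, in order: (i) substitute the explicit $T_i^{\pm}$ from Appendix~\ref{appx: Ti}; (ii) for (1), factor $\exp\!\big(-c(d - \tfrac12\sigma)\big)$ from $\tfrac12 T_4^-$ and note $T_5^-$ carries $\exp\!\big(-c(d-\sigma)\big)$, so write $\tfrac12 T_4^- + T_5^- = \exp\!\big(-c(d-\sigma)\big)\big[\tfrac12 e^{c(\tfrac12\sigma + \sqrt{d^2+m+2\rho})} - \tfrac12 e^{c\sqrt{d^2 - 2(r-\rho-m)}}\big]$ and hence positivity reduces to $\tfrac12\sigma + \sqrt{d^2+m+2\rho} < \sqrt{d^2 - 2(r-\rho-m)}$ being \emph{false}... so actually I would recompute the bracket sign carefully, keeping track that $c<0$ reverses $e^{c\alpha}<e^{c\beta}\iff \alpha>\beta$; (iii) verify the radicand/linear-term comparison using the stated Assumptions; (iv) repeat for (2) replacing $r-\rho-m$ by $-(\rho+m)$; (v) obtain (3),(4) by the identical computation with $c>0$, which negates every reduced inequality. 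The main obstacle I anticipate is purely bookkeeping: the exponents mix a linear-in-$\sigma$ shift (from the $\sqrt{x}$ term in $M$, giving $\tfrac12\sigma$ vs $\sigma$) with the square-root-of-radicand shift, and one must confirm that the linear shift does not overpower the radicand comparison in the wrong direction — i.e. that $\sqrt{d^2+m+2\rho} - \sqrt{d^2-2(r-\rho-m)}$ has the right sign and magnitude relative to $\tfrac12\sigma$. This is where Assumption~\ref{assump: range for paras}'s clauses $d>0$ and $\tfrac14\sigma^2 - m > 0$ (equivalently $\tfrac12\sigma > \sqrt{m}$, roughly) do the real work, and I would isolate that single scalar inequality as the crux of the lemma.
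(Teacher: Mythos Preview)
Your overall strategy --- reduce each of the four claims to a single scalar comparison of exponents and then verify that comparison from the standing assumptions --- is exactly what the paper does. The paper's proof of each part is literally one line: write down the relevant exponent inequality and observe it holds.

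Two issues, one cosmetic and one substantive.

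The cosmetic one: in your factorisation in step (ii) the bracket should carry $-\tfrac12\sigma$, not $+\tfrac12\sigma$, which is why you got the direction backwards and had to say ``I would recompute.'' With the sign fixed, part (1) reduces to
\[
\sqrt{d^{2}+m+2\rho}-\sqrt{d^{2}-2r+2\rho+2m}<\tfrac12\sigma,
\]
and part (2) to the same with $d^{2}+2\rho+2m$ under the second root. The paper disposes of both by arguing the left side is $\le 0$ (for (2) this is just $m>0$).

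The substantive gap is your step (v). Parts (3) and (4) are \emph{not} obtained from (1) and (2) by ``negating every reduced inequality.'' Passing from $T_i^{-}$ to $T_i^{+}$ flips the sign of the square-root term inside each exponent; passing from $c<0$ to $c>0$ flips the direction of $e^{-cA}\gtrless e^{-cB}$. These two flips do \emph{not} cancel: in both (1) and (3) the reduction lands on ``$A>B$,'' but with different $A,B$. Concretely, (1) reduces to $\tfrac12\sigma+\sqrt{d^{2}-2r+2\rho+2m}>\sqrt{d^{2}+m+2\rho}$, while (3) reduces to the \emph{distinct} inequality $\tfrac12\sigma+\sqrt{d^{2}+m+2\rho}>\sqrt{d^{2}-2r+2\rho+2m}$. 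Neither is the negation of the other; together they assert $\lvert\sqrt{P}-\sqrt{Q}\rvert<\tfrac12\sigma$, and each side must be checked separately. The paper handles (3) by squaring and invoking $2r-m>0$ (Assumption~\ref{assump: stake via LSP}), and (4) by squaring and invoking $\tfrac14\sigma^{2}-m>0$ (Assumption~\ref{assump: range for paras}). Your ``mirror image'' shortcut skips this, so as written the proposal does not actually establish (3) and (4).
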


\begin{proof} \leavevmode
\begin{enumerate}
    \item It suffices to observe that, by Assumption \ref{assump: liquidity provision condition},
    \begin{equation*}
        \sqrt{d^2 + m + 2\rho} - \sqrt{d^2 - 2r + 2\rho + 2m} < 0 < \frac{1}{2} \sigma.
    \end{equation*}

\item It suffices to note that, since $m > 0$,
\begin{equation*}
    \sqrt{d^2 + m + 2\rho} - \sqrt{d^2 + 2\rho+2m} < 0 <\frac{1}{2} \sigma.
\end{equation*}
\item Since $2r - m > 0$ by Assumption \ref{assump: stake via LSP},
it can be easily shown that
\begin{equation*}
    \frac{1}{2} \sigma + \sqrt{d^2+m+2\rho} > \sqrt{d^2 - 2r+2\rho + 2m}.
\end{equation*}
The result then follows immediately.
\item By assumption \ref{assump: range for paras}, 
\begin{equation*}
    \frac{1}{4} \sigma^2 + \sigma\sqrt{d^2+m+2\rho} - m > 0.
\end{equation*}
The result then follows immediately.
\end{enumerate}
\end{proof}

By Lemma \ref{lem: obj value without fees}, Proposition \ref{prop: no fees obj value pos neg} follows immediately.

\bigskip
\textbf{Proof of Proposition \ref{prop: obj staking reward behavior}.}
\begin{proof} \leavevmode
\begin{enumerate}
    \item When $c < 0$,
    \begin{equation*}
        \mathbb{E} \left[ ST\left(T_{c,d}, P_{T_{c,d}} \right
        ) \right] = \frac{1}{2} \left(-e^{-c \left(-\sigma + d - \sqrt{d^2 -2r + 2m + 2\rho} \right)} + e^{-c \left( d - \sqrt{d^2 + 2\rho}\right)} \right).
    \end{equation*}
    Since $m < r$ by Assumption \ref{assump: stake via LSP}, we have
    \begin{equation*}
        \sqrt{d^2 + 2\rho} - \sqrt{d^2 -2r + 2m +2\rho} < 0 < \sigma.
    \end{equation*}
    Therefore, $\mathbb{E} \left[ ST\left(T_{c,d}, P_{T_{c,d}} \right
        ) \right] > 0$.
\item When $c > 0$,
    \begin{equation*}
        \mathbb{E} \left[ ST\left(T_{c,d}, P_{T_{c,d}} \right
        ) \right] = \frac{1}{2} \left(-e^{-c \left(-\sigma + d + \sqrt{d^2 -2r + 2m + 2\rho} \right)} + e^{-c \left( d + \sqrt{d^2 + 2\rho}\right)} \right).
    \end{equation*}
    Since $m < r$ by Assumption \ref{assump: stake via LSP}, we have
    \begin{equation*}
        \sigma^2 + 2r - 2m + 2\sigma \sqrt{d^2 + 2\rho} > 0.
    \end{equation*}
    Therefore, $\mathbb{E} \left[ ST\left(T_{c,d}, P_{T_{c,d}} \right
        ) \right] < 0$.
\end{enumerate}
\end{proof}

\textbf{Proof of Proposition \ref{prop: obj impermanent loss behavior}.}
\begin{proof} \leavevmode
    \begin{enumerate}
        \item When $c > 0$, we have
        \begin{align*}
            \mathbb{E} \left[IL\left( T_{c,d}, P_{T_{c,d}}\right)\right] &= \frac{1}{2} \left(-e^{-c\left(-\sigma + d+\sqrt{d^2+2m+2\rho} \right)} - e^{-c\left(d + \sqrt{d^2+2\rho} \right)} + 2e^{-c\left( -\frac{1}{2} \sigma +d +\sqrt{d^2+m+2\rho}\right)} \right).
        \end{align*}
        Define the constants
        \begin{align*}
            C_1 &= -\sigma + d+\sqrt{d^2+2m+2\rho}, \\
            C_2 &= d + \sqrt{d^2+2\rho}, \\
            C_3 &= -\frac{1}{2} \sigma +d +\sqrt{d^2+m+2\rho}.
        \end{align*}
       Since the function $e^{-cx}$ is strictly convex in $x$, by Jensen's Inequality, we have 
       \begin{equation*}
           \frac{1}{2}e^{-c C_1} + \frac{1}{2}e^{-c C_2} > e^{-c\left( \frac{1}{2}C_1 + \frac{1}{2}C_2 \right)}.
       \end{equation*}
        To complete the proof, it suffices to show that
        \begin{equation}
            \frac{1}{2} C_1 + \frac{1}{2} C_2 < C_3, \label{ineq: for prop IL}
        \end{equation}
        which would imply 
        \begin{equation*}
            e^{-c C_3 } < e^{-c\left(\frac{1}{2}C_1 + \frac{1}{2}C_2 \right)}.
        \end{equation*}
        Showing Inequality (\ref{ineq: for prop IL}) is equivalent to proving the following inequality:
        \begin{equation}
            \sqrt{d^2+m+2\rho} > \frac{\sqrt{d^2+2m+2\rho} + \sqrt{d^2+2\rho}}{2}. \label{ineq: for prop IL equiv.}
        \end{equation}
        Observe that
        \begin{equation*}
            d^2+m+2\rho = \frac{1}{2} \left(d^2 + 2m + 2\rho \right) + \frac{1}{2} \left( d^2+2\rho\right)
        \end{equation*}
        Since the function $\sqrt{x}$ is strictly concave on $\left. \left[0,\infty \right. \right)$, it follows from Jensen's Inequality that Inequality (\ref{ineq: for prop IL equiv.}) holds. 
        \item When $c < 0$, define the constants
        \begin{align*}
             C_1^\prime &= -\sigma + d-\sqrt{d^2+2m+2\rho}, \\
            C_2^\prime &= d - \sqrt{d^2+2\rho}, \\
            C_3^\prime &= -\frac{1}{2} \sigma +d -\sqrt{d^2+m+2\rho}.
        \end{align*}
        It can be proved similarly that
        \begin{equation*}
            \frac{1}{2} C_1^\prime + \frac{1}{2} C_2^\prime > C_3^\prime,
        \end{equation*}
        which would imply
        \begin{equation*}
            \frac{1}{2}e^{-cC1^\prime}+\frac{1}{2}e^{-cC2^\prime} > e^{-c\left(\frac{1}{2}C_1^\prime + \frac{1}{2} C_2^\prime \right)} > e^{-cC_3^\prime} .
        \end{equation*}
        This completes the proof.
    \end{enumerate}
\end{proof}

\begin{proposition}
Let $c^\star = \underset{c < 0}{\arg\max} \; \mathbb{E} \left[ M \left( T_{c,d}, P_{T_{c,d}}\right) \right]$. $c^\star$ is the solution to the following equation
\begin{equation}
     -D_1e^{-cD_1}
   +\frac{1}{2}D_2 e^{-cD_2} + \frac{1}{2} D_3  e^{-cD_3} = 0,
   \label{eq: FOC for c no fees}
\end{equation}
where
\begin{align*}
    D_1 &= -\frac{1}{2} \sigma+ d - \sqrt{d^2 + m + 2\rho}, \\
    D_2 &= - \sigma+ d - \sqrt{d^2 - 2\left(r-\rho-m\right)}, \\
    D_3 &= -\sigma+ d - \sqrt{d^2 + 2\left(\rho+m\right)},
\end{align*}
under the conditions that
\begin{equation}
     D_1^2 e^{-cD_1} - \frac{1}{2} D_2^2 e^{-cD_2} - \frac{1}{2} D_3^2 e^{-cD_3} < 0.
     \label{cond: concavity}
\end{equation}
\label{prop: optimal c characterization}
\end{proposition}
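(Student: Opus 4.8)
The plan is to treat $\mathbb{E}[M(T_{c,d},P_{T_{c,d}})]$ as an explicit function of the single real variable $c$ on the interval $(-\infty,0)$ and apply ordinary calculus. From Section~\ref{sc: without transaction fees} and Appendix~\ref{appx: Ti}, for $c<0$ we have
\begin{equation*}
    \mathbb{E}\left[M\left(T_{c,d},P_{T_{c,d}}\right)\right] = T_4^- + T_5^- + T_6^- = e^{-cD_1} - \tfrac{1}{2}e^{-cD_2} - \tfrac{1}{2}e^{-cD_3},
\end{equation*}
where $D_1,D_2,D_3$ are the exponents read off from $T_4^-,T_5^-,T_6^-$ (i.e.\ the bracketed quantities in Eq.~(\ref{eq: obj value for OST c < 0}) with the minus sign before each square root), matching the definitions in the statement. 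First I would record this identification carefully, so that the objective is manifestly a finite sum of exponentials in $c$, hence real-analytic on all of $\mathbb{R}$, and in particular differentiable as many times as needed.

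Next I would differentiate once with respect to $c$:
\begin{equation*}
    \frac{d}{dc}\,\mathbb{E}\left[M\left(T_{c,d},P_{T_{c,d}}\right)\right] = -D_1 e^{-cD_1} + \tfrac{1}{2}D_2 e^{-cD_2} + \tfrac{1}{2}D_3 e^{-cD_3},
\end{equation*}
which is exactly the left-hand side of Eq.~(\ref{eq: FOC for c no fees}). Since $c^\star = \arg\max_{c<0}\mathbb{E}[M(T_{c,d},P_{T_{c,d}})]$ and, by Proposition~\ref{prop: no fees obj value pos neg}, the objective is strictly positive on $(-\infty,0)$ while tending to $0$ as $c\to 0^-$ (each $e^{-cD_i}\to 1$ and $1-\tfrac12-\tfrac12=0$), any maximizer must be an interior point of $(-\infty,0)$; being interior and the function being differentiable, the first-order condition $\frac{d}{dc}\mathbb{E}[M]=0$ holds at $c^\star$, giving Eq.~(\ref{eq: FOC for c no fees}). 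This is the core of the argument. I would then differentiate a second time,
\begin{equation*}
    \frac{d^2}{dc^2}\,\mathbb{E}\left[M\left(T_{c,d},P_{T_{c,d}}\right)\right] = D_1^2 e^{-cD_1} - \tfrac{1}{2}D_2^2 e^{-cD_2} - \tfrac{1}{2}D_3^2 e^{-cD_3},
\end{equation*}
and observe that condition~(\ref{cond: concavity}) is precisely the statement that this second derivative is negative at the critical point, which certifies that the stationary point is a local maximum; combined with the boundary behavior at $c\to 0^-$ and $c\to-\infty$ (where the dominant exponential term controls the sign), it pins down $c^\star$ as the asserted solution.

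The one point requiring care — and the main obstacle — is ensuring the maximizer is genuinely interior and finite, i.e.\ ruling out that the supremum is attained only in the limit $c\to-\infty$ or fails to be attained at all. Here one must check the asymptotics of $e^{-cD_1}-\tfrac12 e^{-cD_2}-\tfrac12 e^{-cD_3}$ as $c\to-\infty$: the behavior is governed by which of $D_1,D_2,D_3$ is smallest (that exponent dominates as $c\to-\infty$ since $-c\to+\infty$), and one uses Assumptions~\ref{assump: stake via LSP} and~\ref{assump: range for paras} to order these exponents and show the objective does not blow up to $+\infty$ but rather stays bounded (or decays), forcing the maximum onto a finite interior point. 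Once boundedness and the interior location are established, the first- and second-order conditions follow immediately from elementary calculus, and the proposition is proved. I would present the boundedness/asymptotics check as a short lemma before invoking Fermat's stationary-point argument.
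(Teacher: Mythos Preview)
Your proposal is correct and follows essentially the same approach as the paper: identify $\mathbb{E}[M(T_{c,d},P_{T_{c,d}})]=e^{-cD_1}-\tfrac12 e^{-cD_2}-\tfrac12 e^{-cD_3}$ for $c<0$, observe that Eq.~(\ref{eq: FOC for c no fees}) is its first derivative and Eq.~(\ref{cond: concavity}) is the negativity of its second derivative. In fact your treatment is more careful than the paper's two-line proof, which simply asserts these two facts without the boundary analysis you propose; that extra asymptotic check is a welcome addition but not something the paper itself supplies.
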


\begin{proof}
    Eq. (\ref{eq: FOC for c no fees}) is the first order derivative of $\mathbb{E} \left[ M \left( T_{c,d}, P_{T_{c,d}}\right) \right]$. Inequality (\ref{cond: concavity}) ensures the the function $\mathbb{E} \left[ M \left( T_{c,d}, P_{T_{c,d}}\right) \right]$ is concave. 
\end{proof}

\newpage
\section{Expressions for Impermanent Loss, Opportunity Cost Loss, and Transaction Fee}
\label{sc: decomposition}
\begin{enumerate}
    \item When $c > 0$, we have:
    \begin{align*}
        \text{Impermanent Loss} &= \frac{1}{2}\left(
        -e^{\sigma c - c\left(d+\sqrt{d^{2}+2m+2\rho}\right)} 
        -e^{-c\left(d+\sqrt{d^{2}+2\rho}\right)} \right. \left.+2e^{\frac{1}{2} \sigma c - c\left(d+\sqrt{d^{2}+m+2\rho}\right)}
        \right),\\
        \text{Opportunity Cost Loss} &= \frac{1}{2}\left(
        -e^{\sigma c - c\left(d+\sqrt{d^{2}-2r+2m+2\rho}\right)} 
        +e^{-c\left(d+\sqrt{d^{2}+2\rho}\right)}
        \right),\\
        \text{Transaction Fee} &= \min \left\{\frac{1}{2} e^{-c\left( d - \sqrt{d^2 - 2\left(g+r-m-\rho\right)}\right)} + \frac{1}{2} e^{-c\left( d - \sqrt{d^2 - 2\left(g-m-\rho\right)}\right)} \right.\\
        & \left. \quad \quad \quad \ \  - e^{-c\left( d - \sqrt{d^2 - \left(g - \frac{1}{4} \sigma^2 - m - 2\rho\right)}\right)}, K \right\}.
    \end{align*}

\item  When $c < 0$, we have
    \begin{align*}
        \text{Impermanent Loss} &= \frac{1}{2}\left(
        -e^{\sigma c - c\left(d-\sqrt{d^{2}+2m+2\rho}\right)} 
        -e^{-c\left(d-\sqrt{d^{2}+2\rho}\right)} \right. \left. +2e^{\frac{1}{2} \sigma c - c\left(d-\sqrt{d^{2}+m+2\rho}\right)}
        \right),\\
        \text{Opportunity Cost Loss} &= \frac{1}{2}\left(
        -e^{\sigma c - c\left(d-\sqrt{d^{2}-2r+2m+2\rho}\right)} 
        +e^{-c\left(d-\sqrt{d^{2}+2\rho}\right)}
        \right), \\
        \text{Transaction Fee} &= \min \left\{
        \frac{1}{2} e^{-c\left( d + \sqrt{d^2 - 2(g + r - m - \rho)}\right)} 
        + \frac{1}{2} e^{-c\left( d + \sqrt{d^2 - 2(g - m - \rho)}\right)} \right.\\
        & \left. \quad \quad \quad \ \ - e^{-c\left( d + \sqrt{d^2 - \left(g - \frac{1}{4} \sigma^2 - m - 2\rho\right)}\right)},\ 
        K \right\}.
    \end{align*}
\end{enumerate}

\newpage
\section{Additional Results for Numerical Experiments}
\label{appx: additional numerics}
\subsection{Fixed Stopping Levels}

\begin{table}[h] 
\centering
\begin{tabular}{ |c|c|c|c|c|c|c| } 
 \hline
 \rule{0pt}{10pt} With Fees & Fee Limit $K$ & 2 & 3 & 4 & 5 & 6 \\ 
 \hline 
 \multirow{3}{*}{\shortstack[l]{$r=0.14$, \\ $g=0$}} 
 & Scaled Threshold $c^\star$ & -18.931 & -24.516 & -28.462 & -31.519 & -34.016 \\ 
 & Optimal Exit Price $L^\star$ & 4.43e-8 & 3.00e-10 & 8.79e-12 & 5.71e-13 & 6.12e-14 \\
 & Optimal Value $V^\star$ &2.000 & 3.000 & 4.000 & 5.000 & 6.000 \\
 \hline
 \multirow{3}{*}{\shortstack[l]{$r=0$, \\ $g=0.13$}} 
 & Scaled Threshold $c^\star$ & -9.733 & -14.620 & -18.334 & -21.258 & -23.657 \\ 
 & Optimal Exit Price $L^\star$ & 1.66e-4 & 2.09e-6 & 7.56e-8 & 5.53e-9 & 6.47e-10\\
 & Optimal Value $V^\star$ &2.002 & 3.001 & 4.000 & 5.000 & 6.000 \\
 \hline
\end{tabular}

\medskip
\caption{This table presents a parametric analysis of the optimal exit scaled threshold $c^*$ and optimal exit price $L^*$ with transaction fees, under varying fee limit $K$, fixed $\rho=0.03$, $\sigma = \sqrt{0.8}$, $m=0.08$.}
\label{table: exit-threshold-vs-K}
\end{table}

\quad Table ~\ref{table: exit-threshold-vs-K} examines the impact of the transaction fee cap $K$, as defined in Proposition~\ref{fee structure}, on the investor’s optimal exit strategy under fixed reward rates and discount rates. We consider both rebasing $(r>0, g=0)$ and reward-bearing $(r=0, g>0)$ scenarios.

\quad Across all settings, we observe that the investor’s optimal objective value $V^\star$ closely matches the fee cap $K$, while the optimal exit price $L^\star= e^{c^\star\sigma}$ becomes vanishingly small. This implies that the effects of impermanent loss and opportunity cost on the overall objective are negligible. As a result, the investor chooses to stay in the AMM pool until the transaction fees have fully accumulated up to the cap $K$.

\quad This finding reinforces the results in Section~\ref{sc: implications for fee structure}, where we pointed out that fee accumulation is both rapid and substantial, and can dominate the investor’s overall payoff. It also further validates the stop-loss optimal exit strategy discussed in Section~\ref{sc: characterization of the optimal exit threshold}: the investor does not exit until the cumulative gain reaches an upper threshold, despite any intermediate loss.

\begin{table}[h] 
\centering
\begin{tabular}{ |c|c|c|c|c|c|c| } 
 \hline
 \rule{0pt}{10pt} & Exit Discount $m$ & 0.06 & 0.07 & 0.08 & 0.09 & 0.10 \\ 
 \hline 
 \multirow{3}{*}{No Fees} 
  & Scaled Threshold $c^\star$ & -1.340 & -1.319 & -1.298 & -1.278 & -1.259 \\ 
 & Optimal Exit Price $L^\star$ &0.302 &0.307 &0.313 &0.319 & 0.324 \\
 & Optimal Value $V^\star$ &0.169 &0.173 & 0.176 &0.179 & 0.182 \\
 \hline
 \multirow{3}{*}{With Fees} 
 & Scaled Threshold $c^\star$ & -10.475 & -13.662 & -18.931 & -29.355 & -63.688 \\ 
 &Optimal Exit Price $L^\star$ &8.53e-5 & 4.93e-6 & 4.43e-8 & 3.96e-12 & 1.82e-25 \\
 &Optimal Value $V^\star$ &2.002 &2.000 &2.000 &2.000 & 2.000 \\
 \hline
\end{tabular}

\medskip
\caption{This table presents a parametric analysis of the optimal exit scaled threshold $c^*$ and optimal exit price $L^*$ with and without transaction fees, under varying exit discount $m$, fixed $\rho=0.03$, $\sigma = \sqrt{0.8}$, $r=0.14$, $g=0$, and $K=2$.}
\label{table: exit-threshold-vs-m1}
\end{table}

\begin{table}[h] 
\centering
\begin{tabular}{ |c|c|c|c|c|c|c| } 
 \hline
 \rule{0pt}{10pt} & Exit Discount $m$ & 0.070 & 0.075 & 0.080 & 0.085 & 0.090 \\ 
 \hline 
 \multirow{3}{*}{No Fees} 
  & Scaled Threshold $c^\star$ & -1.180 & -1.170 & -1.160 & -1.151 & -1.142 \\ 
 & Optimal Exit Price $L^\star$ &0.348 &0.351 &0.354 &0.357 & 0.360 \\
 & Optimal Value $V^\star$ & 0.221 & 0.221 & 0.220 & 0.220 & 0.220 \\
 \hline
 \multirow{3}{*}{With Fees} 
 & Scaled Threshold $c^\star$ & -6.360 & -7.741 & -9.733 & -13.030 & -19.794 \\ 
 & Optimal Exit Price $L^\star$ & 3.39e-3 & 9.84e-4 & 1.66e-4 & 8.68e-6 & 2.05e-8\\
 & Optimal Value $V^\star$ &2.019 &2.008 &2.002 &2.000 & 2.000 \\
 \hline
\end{tabular}

\medskip
\caption{This table presents a parametric analysis of the optimal exit scaled threshold $c^*$ and optimal exit price $L^*$ with and without transaction fees, under varying exit discount $m$, fixed $\rho=0.03$, $\sigma = \sqrt{0.8}$, $r=0$, $g=0.13$, and $K=2$.}
\label{table: exit-threshold-vs-m2}
\end{table}

\begin{table}[h] 
\centering
\begin{tabular}{ |c|c|c|c|c|c|c| } 
 \hline
 \rule{0pt}{10pt} & Discount Rate $\rho$ & 0.010 & 0.015 & 0.020 & 0.025 & 0.030 \\ 
 \hline 
 \multirow{3}{*}{No Fees} 
  & Scaled Threshold $c^\star$ & -1.363 & -1.346 & -1.329 & -1.313 & -1.300 \\ 
 & Optimal Exit Price $L^\star$ & 0.295 & 0.300 & 0.305 & 0.309 & 0.313 \\
 & Optimal Value $V^\star$ & 0.180 &0.179 & 0.178 & 0.177 & 0.176 \\
 \hline
 \multirow{3}{*}{With Fees} 
 & Scaled Threshold $c^\star$ & -10.509 & -11.915 & -13.672 & -15.930 & -18.931 \\ 
 &Optimal Exit Price $L^\star$ & 8.27e-5 & 2.35e-5 & 4.89e-6 & 6.49e-7 & 4.43e-8 \\
 &Optimal Value $V^\star$ &2.003 &2.001 &2.000 &2.000 & 2.000 \\
 \hline
\end{tabular}

\medskip
\caption{This table presents a parametric analysis of the optimal exit scaled threshold $c^*$ and optimal exit price $L^*$ with and without transaction fees, under varying discount rate $\rho$, fixed $\sigma = \sqrt{0.8}$, $r=0.14$, $g=0$, $K=2$ and $m=0.08$.}
\label{table: exit-threshold-vs-rho1}
\end{table}

\begin{table}[h] 
\centering
\begin{tabular}{ |c|c|c|c|c|c|c| } 
 \hline
 \rule{0pt}{10pt} & Discount Rate $\rho$ & 0.010 & 0.015 & 0.020 & 0.025 & 0.030 \\ 
 \hline 
 \multirow{3}{*}{No Fees} 
  & Scaled Threshold $c^\star$ & -1.220 & -1.204 & -1.189 & -1.174 & -1.160 \\ 
 & Optimal Exit Price $L^\star$ & 0.336 & 0.341 & 0.345 & 0.350 & 0.354 \\
 & Optimal Value $V^\star$ & 0.234 & 0.230 & 0.227 & 0.224 & 0.220 \\
 \hline
 \multirow{3}{*}{With Fees} 
 & Scaled Threshold $c^\star$ & -4.446 & -5.349 & -6.400 & -7.763 & -9.733 \\ 
 & Optimal Exit Price $L^\star$ & 1.87e-2 & 8.34e-3 & 3.27e-3 & 9.65e-4 & 1.66e-4\\
 & Optimal Value $V^\star$ &2.068 & 2.039 & 2.020 & 2.008 & 2.002 \\
 \hline
\end{tabular}

\medskip
\caption{This table presents a parametric analysis of the optimal exit scaled threshold $c^*$ and optimal exit price $L^*$ with and without transaction fees, under varying discount rate $\rho$, fixed $\sigma = \sqrt{0.8}$, $r=0$, $g=0.13$, $K=2$ and $m=0.08$.}
\label{table: exit-threshold-vs-rho2}
\end{table}

\quad Tables~\ref{table: exit-threshold-vs-m1}, \ref{table: exit-threshold-vs-m2}, \ref{table: exit-threshold-vs-rho1}, and \ref{table: exit-threshold-vs-rho2} complement the results presented in Section~\ref{sc: parametric analysis of optimal exit}, specifically those in Tables~\ref{table: exit-threshold-vs-r} and \ref{table: exit-threshold-vs-g}. Together, these tables systematically examine the effect of the term $r + g - \rho - m$ in Assumption ~\ref{assump: stake via LSP}, which means the difference between the reward rates and the discount rates, on the investor’s optimal exit strategy. 

\quad Our parametric analysis shows a consistent pattern: the smaller the value of the difference $r + g - \rho - m$, the longer the investor tends to stay in the AMM pool. Tables~\ref{table: exit-threshold-vs-r} and \ref{table: exit-threshold-vs-g} explore how variations in reward rates 
$r$ and $g$ affect this difference. Tables~\ref{table: exit-threshold-vs-m1} and \ref{table: exit-threshold-vs-m2} further examine the role of the exit discount parameter $m$, while Tables~\ref{table: exit-threshold-vs-rho1} and \ref{table: exit-threshold-vs-rho2} investigate the impact of the discount rate $\rho$.

\subsection{Free Boundary Problem}
\label{appx sub: free boundary}

\quad As shown in Figures~\ref{fig:payoff decomp_r}–\ref{fig:total_comparison}, the total payoff (i.e., the investor’s portfolio value), represented by the second term in Eq.~\eqref{eq: Z(t,x)}, decreases more sharply for rebasing tokens with price $x$ as time $t$ increases. The difference in impermanent loss across different values of $t$ is relatively mild. The difference in their total payoffs mainly arises from the opportunity gain, which declines faster for rebasing tokens due to the coefficient $e^{(r - m - \rho)t}$, which increases with $t$ for rebasing tokens. For reward-bearing tokens, the corresponding coefficient becomes $e^{(-m - \rho)t}$, which decreases with $t$ and thus partially offsets the decline caused by rising prices. The difference between their rates of decrease becomes more pronounced as time $t$ increases. Hence, reward-bearing tokens can bear with higher prices, while rebasing tokens tend to take lower prices.


\begin{figure}[h]
    \centering
    \begin{subfigure}[t]{0.32\linewidth}
        \renewcommand\captionlabelfont{}%
        \centering
        \includegraphics[width=\linewidth]{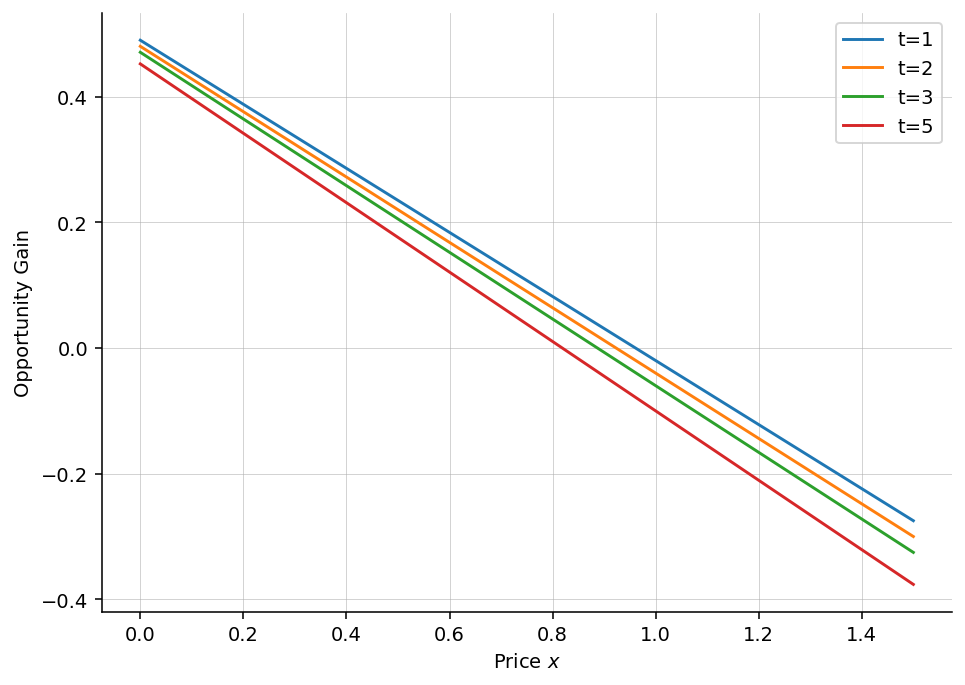}
        \caption{Opportunity Gain}
        \label{fig:OG_r}
    \end{subfigure}
    \begin{subfigure}[t]{0.32\linewidth}
        \renewcommand\captionlabelfont{}%
        \centering
        \includegraphics[width=\linewidth]{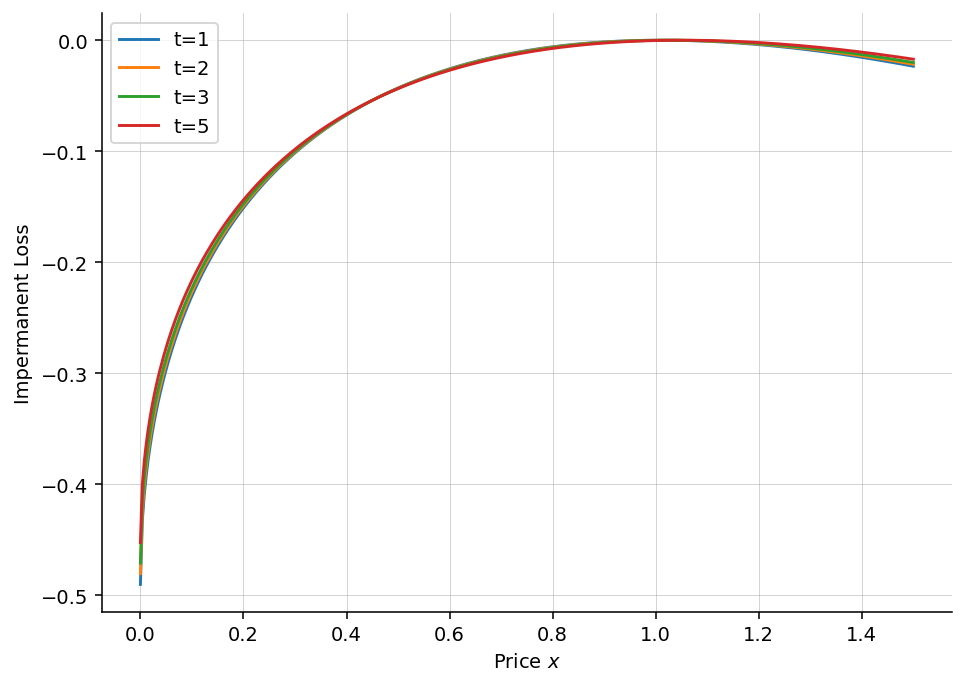}
        \caption{Impermanent Loss}
        \label{fig:IL_r}
    \end{subfigure}
    \begin{subfigure}[t]{0.32\linewidth}
        \renewcommand\captionlabelfont{}%
        \centering
        \includegraphics[width=\linewidth]{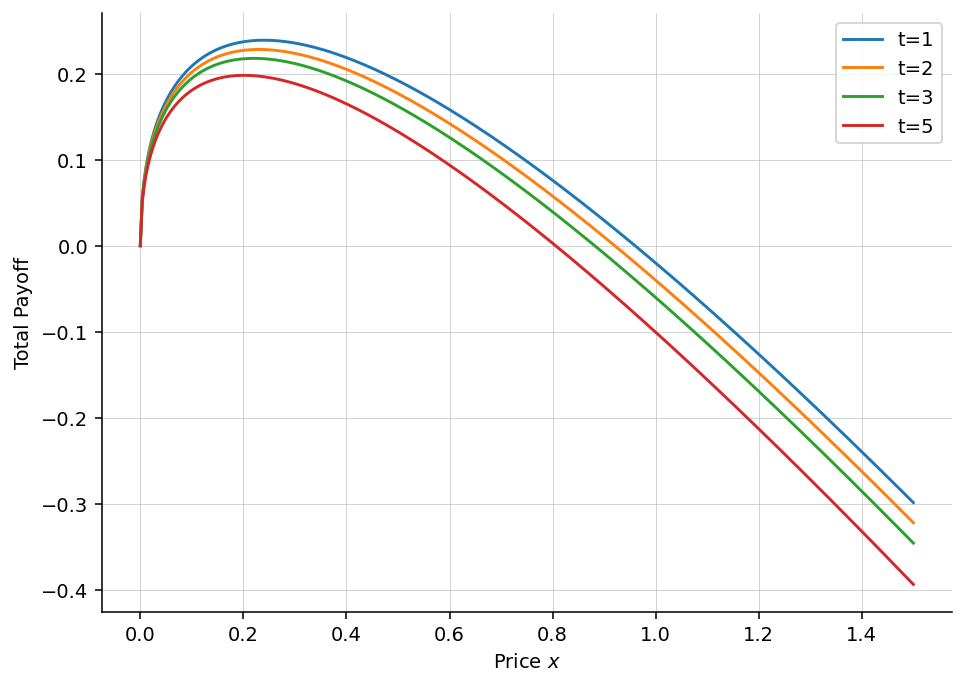}
        \caption{Total Payoff}
        \label{fig:total payoff_r}
    \end{subfigure}
    \caption{Decomposition of total payoff under rebasing tokens.}
    \label{fig:payoff decomp_r}
\end{figure}

\begin{figure}[h]
    \centering
    \begin{subfigure}[t]{0.32\linewidth}
        \renewcommand\captionlabelfont{}%
        \centering
        \includegraphics[width=\linewidth]{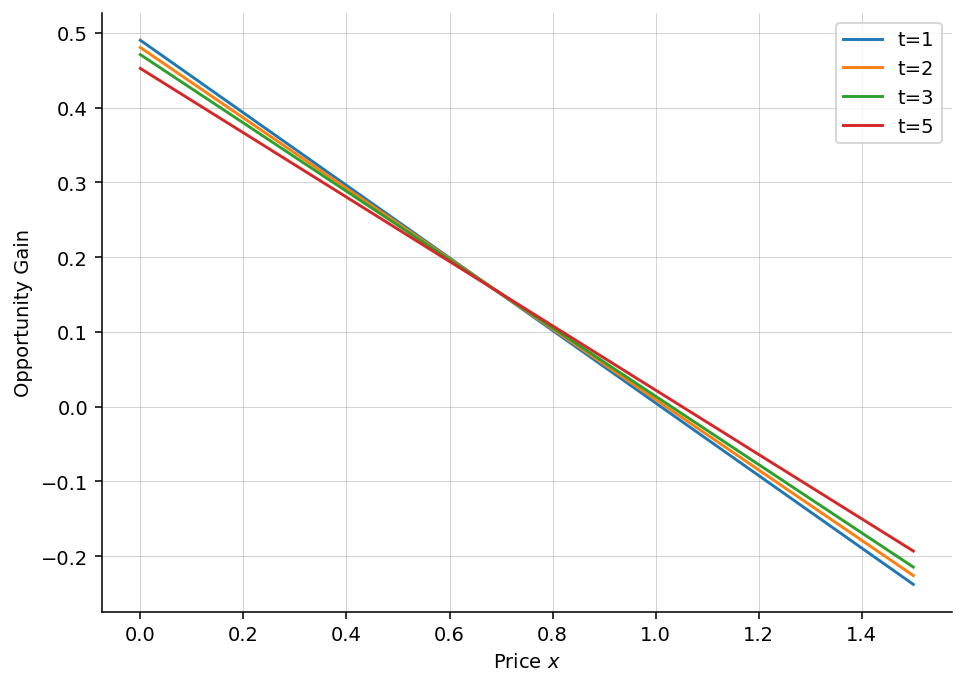}
        \caption{Opportunity Gain}
        \label{fig:OG_g}
    \end{subfigure}
    \begin{subfigure}[t]{0.32\linewidth}
        \renewcommand\captionlabelfont{}%
        \centering
        \includegraphics[width=\linewidth]{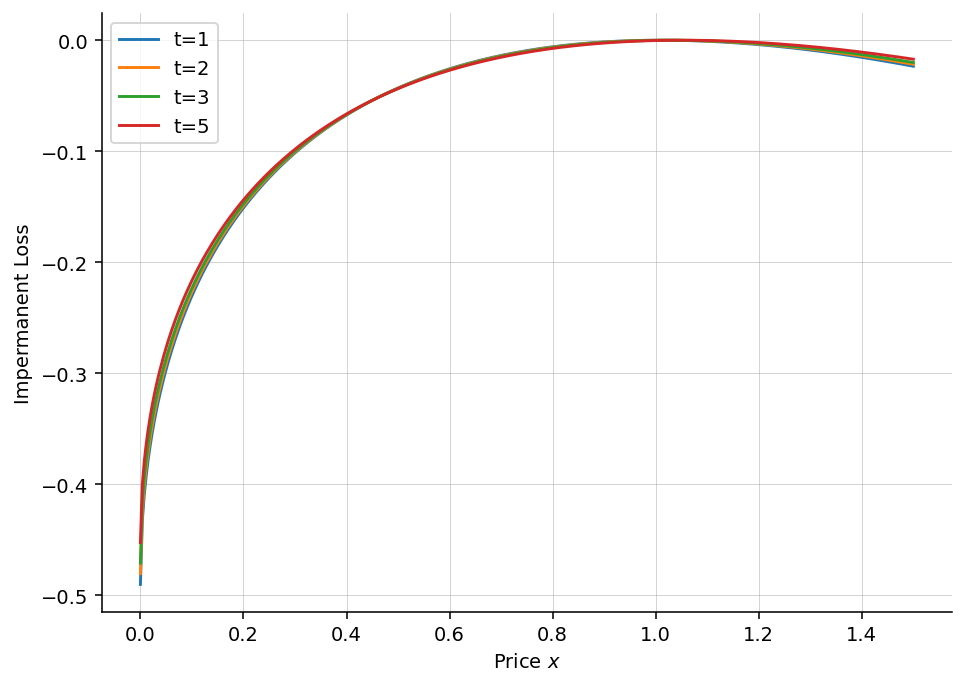}
        \caption{Impermanent Loss}
        \label{fig:IL_g}
    \end{subfigure}
    \begin{subfigure}[t]{0.32\linewidth}
        \renewcommand\captionlabelfont{}%
        \centering
        \includegraphics[width=\linewidth]{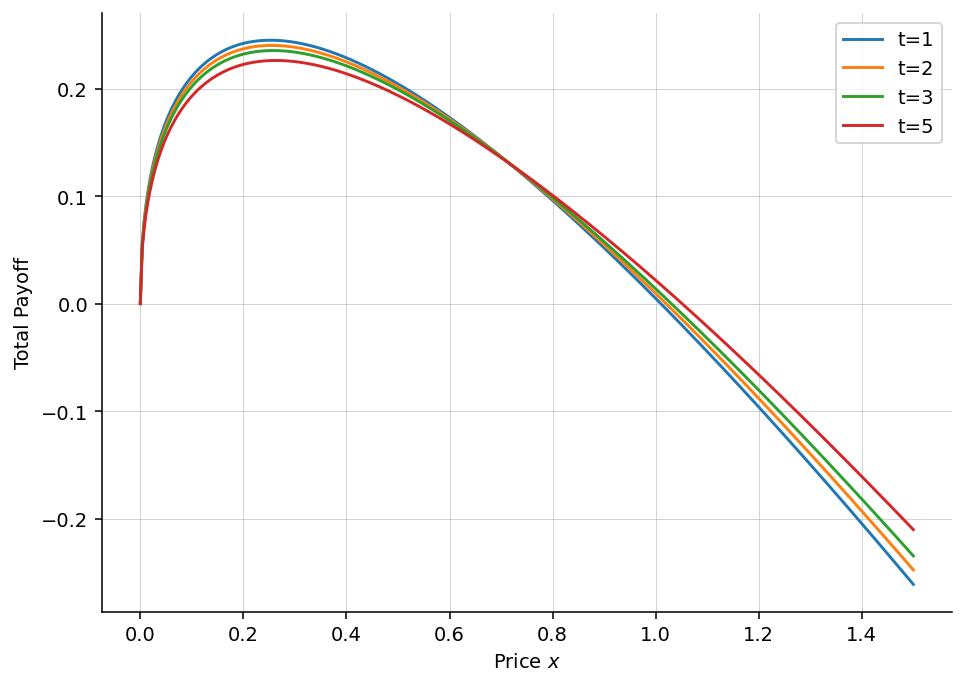}
        \caption{Total Payoff}
        \label{fig:total payoff_g}
    \end{subfigure}
    \caption{Decomposition of total payoff under reward-bearing tokens.}
    \label{fig:payoff decomp_g}
\end{figure}

\begin{figure}[h]
    \centering
    \begin{subfigure}[t]{0.32\linewidth}
        \renewcommand\captionlabelfont{}%
        \centering
        \includegraphics[width=\linewidth]{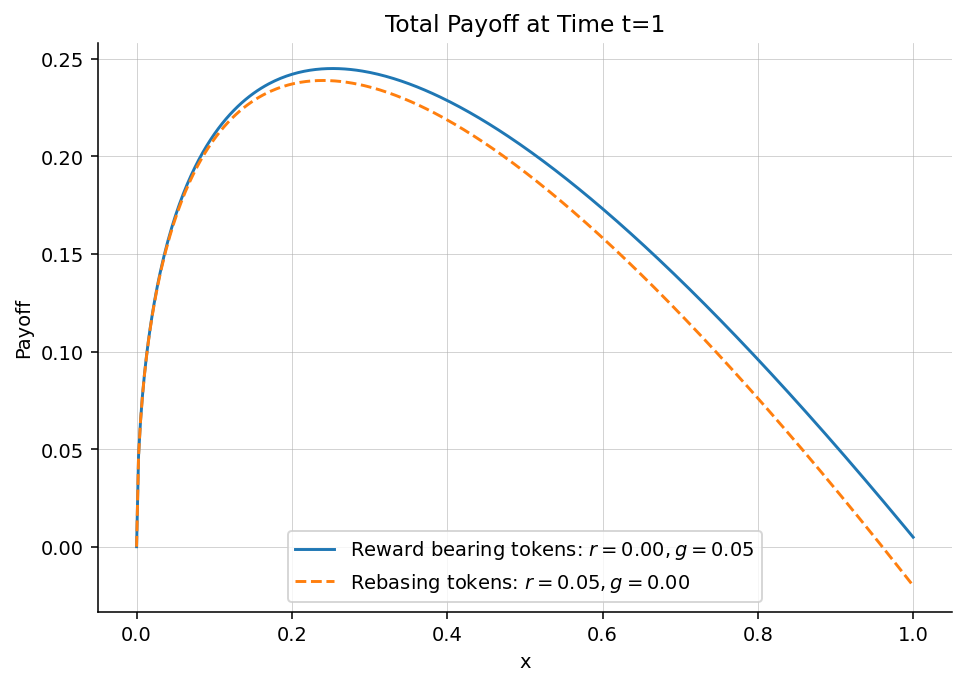}
        \caption{Time $t = 1$}
        \label{fig:total_1}
    \end{subfigure}
    \begin{subfigure}[t]{0.32\linewidth}
        \renewcommand\captionlabelfont{}%
        \centering
        \includegraphics[width=\linewidth]{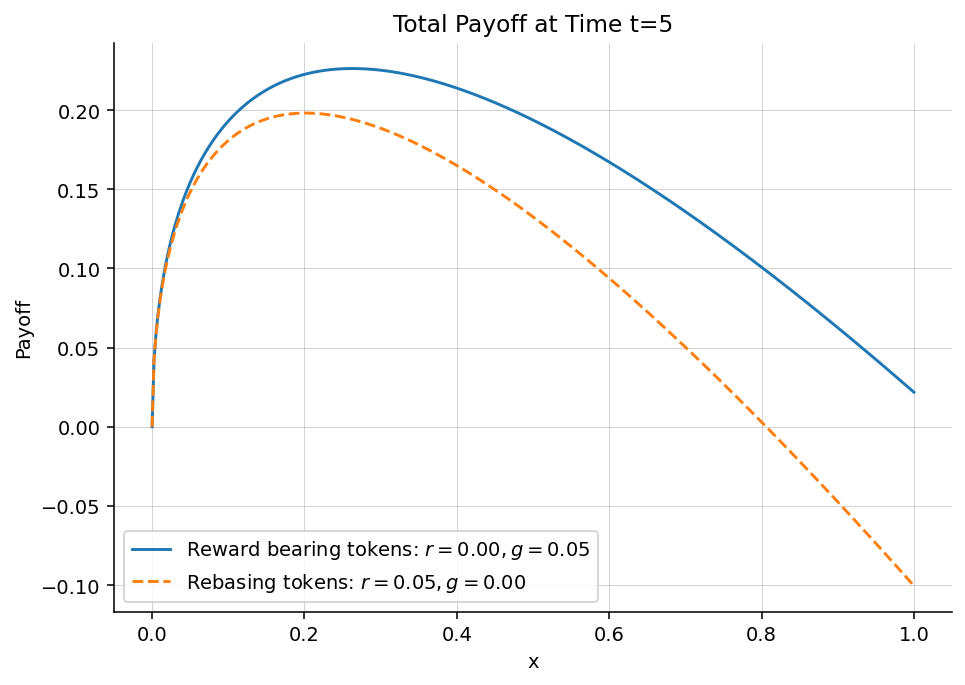}
        \caption{Time $t = 5$}
        \label{fig:total_2}
    \end{subfigure}
    \begin{subfigure}[t]{0.32\linewidth}
        \renewcommand\captionlabelfont{}%
        \centering
        \includegraphics[width=\linewidth]{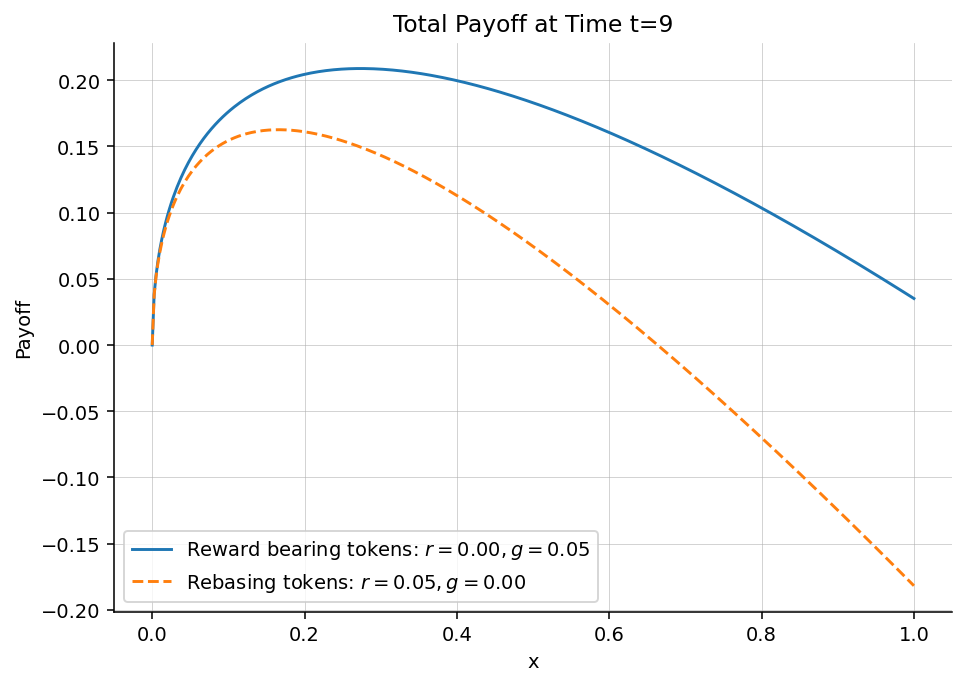}
        \caption{Time $t = 9$}
        \label{fig:total_3}
    \end{subfigure}
    \caption{Comparison of total payoffs for reward-bearing and rebasing tokens at time $t = 1, 5, 9$.}
    \label{fig:total_comparison}
\end{figure}

\end{document}